\documentclass[aps,pra,preprint,showkeys]{revtex4-2}

\usepackage[T1]{fontenc}
\usepackage{amsmath,amssymb,amsthm}
\usepackage{mathrsfs}
\usepackage{enumitem}
\setlist[enumerate,1]{label={(\roman*)}}
\usepackage{hyperref}
\hypersetup{colorlinks=true,linkcolor=blue,citecolor=red,urlcolor=blue}
\usepackage{graphicx}
\usepackage{tikz}
\usetikzlibrary{positioning,arrows.meta,calc}

\theoremstyle{definition}
\newtheorem{definition}{Definition}[section]
\newtheorem{remark}[definition]{Remark}
\theoremstyle{plain}
\newtheorem{theorem}[definition]{Theorem}
\newtheorem{proposition}[definition]{Proposition}
\newtheorem{lemma}[definition]{Lemma}

\newcommand{\bra}{\langle}
\newcommand{\ket}{\rangle}

\newcommand{\bs}[1]{\boldsymbol{#1}}
\newcommand{\mcal}[1]{\mathcal{#1}}
\newcommand{\mrm}[1]{\mathrm{#1}}
\newcommand{\mbb}[1]{\mathbb{#1}}
\newcommand{\op}[1]{\operatorname{#1}}
\newcommand{\im}{\mathrm{i}}
\newcommand{\ran}{\operatorname{ran}}
\newcommand{\Tr}{\operatorname{Tr}}

\newcommand{\numberthis}{\addtocounter{equation}{1}\tag{\theequation}} 

\newcommand{\MmR}{\mathcal{M}_m(\mathbb{R})}
\newcommand{\Cquantum}{\mcal{C}_{\mathrm{qm}}^{AB}}
\newcommand{\Csep}{\mcal{C}_{\mathrm{sep}}^{AB}}
\newcommand{\Cmax}{\mcal{C}_{\mathrm{max}}^{AB}}
\newcommand{\Cmodel}{\mcal{C}_{\mathrm{model}}^{AB}}
\newcommand{\D}{\mathcal{D}}

\newcommand{\CtwoCtwo}{\mathbb{C}^2 \otimes \mathbb{C}^2}
\newcommand{\phiquantum}{\phi_{\mathrm{qm}}^{AB}}
\newcommand{\phisep}{\phi_{\mathrm{sep}}^{AB}}
\newcommand{\gammaquantum}{\gamma_{\mathrm{qm}}^{AB}}
\newcommand{\gammasep}{\gamma_{\mathrm{sep}}^{AB}}
\newcommand{\gammamax}{\gamma_{\mathrm{max}}^{AB}}
\newcommand{\phimax}{\phi_{\mathrm{max}}^{AB}}

\newcommand{\Ssep}{\mcal{S}_{\mrm{sep}}}
\newcommand{\Squantum}{\mcal{S}_{\mrm{qm}}}
\newcommand{\Smax}{\mcal{S}_{\mrm{max}}}
\newcommand{\norm}[1]{\left\|#1\right\|}
\newcommand{\opnorm}[1]{\norm{#1}_{\infty}}
\newcommand{\trnorm}[1]{\norm{#1}_{1}}
\newcommand{\rank}{\operatorname{rank}}

\newcommand{\spi}{-1}

\newcommand{\HS}{2}
\newcommand{\ep}[1]{\mathrm{e}^{#1}}

\begin{document}

\title{Dualistic operational characterization of device-dependent correlation sets via convex analysis in the $(2,m,2)$ Bell scenario}

\author{Ryosuke Nogami}
\email{ryosuke.nogami1997@gmail.com}
\affiliation{Graduate School of Informatics, Nagoya University, Furo-cho, Chikusa-Ku, Nagoya, 466-8601, Japan}

\author{Jaeha Lee}
\email{lee@iis.u-tokyo.ac.jp}
\affiliation{Institute of Industrial Science, the University of Tokyo, 5-1-5 Kashiwanoha, Kashiwa, Chiba, 277-8574, Japan}

\begin{abstract}

We analyze device-dependent correlation sets generated by fixed local dichotomic measurements for two-qubit systems in the $(2,m,2)$ Bell scenario.
We consider three fundamental state spaces for the composite system: the separable state space, the standard quantum state space, and the maximal tensor-product state space, which contains beyond-quantum states compatible with local quantum measurements.
We formulate the corresponding correlation sets for general fixed dichotomic measurements and, in the traceless case, derive particularly simple explicit formulae for their support and gauge functions.
These functions furnish dual operational characterizations of the three correlation sets: the support functions give optimal witnesses for entanglement and beyond-quantum states, whereas the gauge functions quantify the robustness of these detections against depolarizing noise.
We further derive convex-hull representations that elucidate the extremal structures of the correlation sets and the physical states realizing them, showing in particular that extremal quantum correlations are realized by maximally entangled states.
The fundamental limits of these dual operational tasks are governed solely by the smaller of the numbers of linearly independent measurement directions available to Alice and Bob.
When both parties have three linearly independent measurement directions, our entanglement criterion detects Werner states up to the optimal PPT threshold $p_{\mathrm{crit}}=2/3$.
For beyond-quantum-state detection, a nontrivial separation from the quantum set occurs only under the same measurement condition; in that case, the same optimal noise threshold is attained for an extremal state in the maximal tensor-product state space.

\end{abstract}

\keywords{Quantum correlations, Entanglement detection, Bell scenario, Entanglement structures, Support and gauge functions, Convex duality}

\maketitle

\section{Introduction}
Quantum entanglement is a defining feature of composite quantum systems and a central resource in quantum information science~\cite{horodecki2009quantum}.
Determining whether a given state is entangled and quantifying its entanglement are fundamental problems in the field~\cite{guhne2009entanglement}.
Among the practical tools developed for these tasks, entanglement witnesses~\cite{terhal2000bell,lewenstein2000optimization} detect entanglement via operator expectation values, while robustness measures~\cite{vidal1999robustness,steiner2003generalized} quantify the tolerance of entanglement to noise.

Bell~\cite{bell1964einstein} demonstrated that entangled states can produce correlations that violate Bell inequalities, which constrain all local realistic theories. 
Conversely, a violation of a Bell inequality certifies the presence 
of entanglement~\cite{terhal2000bell}.
The Clauser--Horne--Shimony--Holt (CHSH) inequality~\cite{clauser1969proposed,clauser1978bell} provides a simple test of this type for bipartite systems with two dichotomic measurements per party, and the maximum quantum violation is given by the Tsirel'son bound~\cite{cirel1980quantum}.
In the CHSH scenario, the set of quantum-realizable correlations is completely characterized by the Tsirel'son--Landau--Masanes (TLM) criterion~\cite{tsirelson1987quantum,tsirelson1993some,landau1988empirical,masanes2003necessarysufficient}, and its geometry as a convex body has been studied~\cite{Goh2018geometry,Thinh2019geometric,Le2023quantumcorrelations}. For general Bell scenarios, the Navascu\'es--Pironio--Ac\'in (NPA) hierarchy~\cite{navascues2007bounding,navascues2008convergent} provides a systematic semidefinite programming approach to outer-approximate the quantum set.

In the standard approach to Bell-inequality studies, the analysis is \emph{device-independent}~\cite{scarani2012device}: the constraints hold for all possible choices of measurement settings.
While this universality is essential for foundational tests~\cite{brunner2014bell,Supic2020selftestingof} and device-independent protocols~\cite{Acin2007device-independent,Pironio_2009,Vazirani2019fully}, 
it discards the structure imposed by a specific choice of observables.
When the measurement settings are fixed, the set of realizable correlations satisfies stronger constraints, and therefore carries more detailed information~\cite{Selby2023correlations}; we refer to this as the \emph{device-dependent} setting.
An inequality that characterizes device-dependent quantum-realizable correlations in the CHSH scenario was derived in Ref.~\cite{nogami2025necessary}, and Le \textit{et al.}~\cite{Le2023quantumcorrelations} provided a convex-geometric description of the device-dependent quantum correlation set in the same $(2,2,2)$ setting.
However, explicit characterizations of device-dependent correlation sets for general $(2,m,2)$ scenarios---and, in particular, for the separable and beyond-quantum sets---have remained largely unexplored.

The distinction between device-dependent and device-independent analyses is also central to the detection of beyond-quantum correlations.
In the framework of generalized probabilistic theories (GPTs)~\cite{barrett2007information,Janotta_2014,PLAVALA2023general}, 
one natural class of beyond-quantum models is provided by \emph{entanglement structures (ESs)}~\cite{Janotta_2014,arai2019perfect,arai2023pseudo}: composite systems in which each local subsystem retains the full quantum-theoretical state space, but the joint state space is no longer required to be the set of density operators on the tensor product Hilbert space.
The extremal cases are the \emph{minimal} tensor product, which 
recovers the separable states, and the \emph{maximal} tensor product, whose 
state space is the dual of the separable cone.
Arai, Yu, and Hayashi~\cite{arai2024detecting} demonstrated that a device-dependent protocol in the $(2,3,2)$ scenario (two parties, three dichotomic measurements each) can distinguish beyond-quantum states in ESs from quantum states, whereas no device-independent protocol can~\cite{Barnum2010local,acin2010unified}.
In the $(2,2,2)$ scenario, i.e., the CHSH scenario, however, correlations arising from any beyond-quantum ES model cannot be distinguished from quantum-realizable ones~\cite{Banik2013degree,Stevens2014steering,barnum2013ensemble}.
This motivates a systematic study of general Bell scenarios with more than two measurement choices per party.

Since the device-dependent and device-independent correlation sets are convex and compact, tools from convex analysis provide a natural framework for their study.
Convex-geometric methods have proved fruitful in analyzing both the device-independent quantum correlation set~\cite{Goh2018geometry,Thinh2019geometric,Le2023quantumcorrelations} and quantum resource theories more broadly~\cite{regula2018convex}.
In operator space theory, the degree of violation of a Bell 
inequality is quantified by tensor norms of the corresponding Bell operator~\cite{junge2010operator}, establishing a deep connection between the geometry of correlation sets and the algebraic structure of operator spaces.
Convex-analytic tools apply equally to the device-dependent setting: the support and gauge functions form a dual pair that together encode the complete geometry of a convex body, and their explicit expressions capture the structure imposed by the choice of observables.

In this paper, we provide a complete dual characterization of the three device-dependent correlation sets---produced by separable states, quantum states, and states in the maximal entanglement structure---in the $(2,m,2)$ scenario for two-qubit systems $\CtwoCtwo$: we derive explicit formulas for the support functions and gauge functions of all three sets (Sections~\ref{sec:support_functions} and~\ref{sec:gauge_functions}). 
The support and gauge functions of each set take the form of a pair of dual matrix norms---standard norms for the separable and maximal correlation sets, and an asymmetric dual pair for the quantum correlation set---whose arguments encode the dependence on the measurement settings.
We then apply these functions to the detection of entanglement and beyond-quantum states (Sections~\ref{sec:detection_of_entanglement} and~\ref{sec:detection_of_BQS}).
The gauge functions of the separable and quantum correlation sets serve as robustness measures for entanglement and beyond-quantum detection, respectively, and the fundamental limits of these detection sensitivities are expressed as the suprema of the ratios of the corresponding support functions.

The remainder of this paper is organized as follows. Section~\ref{sec:preliminaries} introduces the notation, defines the device-dependent correlation sets for the three models of entanglement structures, and reviews the relevant concepts from convex analysis. 
Sections~\ref{sec:support_functions} and~\ref{sec:gauge_functions} present the support function and gauge function theorems, respectively.
Section~\ref{sec:detection_of_entanglement} applies these results to entanglement detection, and Section~\ref{sec:detection_of_BQS} extends the analysis to beyond-quantum detection. 
Section~\ref{sec:discussion} discusses the structural insights that emerge from the dual characterization and directions for future research.
Section~\ref{sec:conclusion} provides a summary of the paper.

\section{Preliminaries}\label{sec:preliminaries}
\subsection{Notation}
Throughout this paper, we adopt the following notational conventions:
\begin{itemize}
    \item Vectors are denoted by boldface lowercase letters such as $\bs{a}$.
    \item Matrices are denoted by uppercase letters such as $A$, with components $A = (a_{ij})$ where $a_{ij}$ denotes the $(i,j)$-th entry.
    \item The set of $m \times n$ matrices over a field $\mathbb{K}$ is denoted by $\mathcal{M}_{m,n}(\mathbb{K})$; the square case is abbreviated as $\mathcal{M}_n(\mathbb{K}) := \mathcal{M}_{n,n}(\mathbb{K})$.
    \item Quantum states in a Hilbert space are represented using Dirac bra-ket notation, e.g., $|\psi\ket$ for kets and $\bra\psi|$ for bras.
    \item Operators acting on a Hilbert space are denoted with a hat, e.g., $\hat{X}$.
    \item The set of bounded operators acting on a Hilbert space $\mcal{H}$ is denoted by $\mcal{B}(\mcal{H})$. The real subspace of self-adjoint operators is denoted by $\mcal{B}^{\mrm{S}}(\mcal{H})$, and the cone of positive semidefinite operators by $\mcal{B}^{+}(\mcal{H}) := \{\hat{X} \in \mcal{B}^{\mrm{S}}(\mcal{H}) \,:\, \hat{X} \geq 0\}$.
\end{itemize}
\subsection{The $(2,m,2)$ Scenario}
We consider the $(2,m,2)$ scenario, where two parties (Alice and Bob) each perform $m$ dichotomic measurements with outcomes in $\{\pm 1\}$.

We seek realizability conditions for correlations in the Hilbert space $\CtwoCtwo$ under given measurement settings. The measurement observables for Alice and Bob are represented by self-adjoint operators $\hat{A}_i$ and $\hat{B}_j$ on $\CtwoCtwo$ with spectra $\{\pm 1 \}$, where $i,j \in \{1,\ldots,m\}$.
The observables are characterized by unit vectors $\bs{a}_i$ and $\bs{b}_j$ in $\mathbb{R}^3$ such that
\begin{align}
    \hat{A}_i &= \bs{a}_i \cdot \hat{\bs{\sigma}} := \sum_{k=1}^3 a_{ik} \hat{\sigma}_k, \\
    \hat{B}_j &= \bs{b}_j \cdot \hat{\bs{\sigma}} := \sum_{k=1}^3 b_{jk} \hat{\sigma}_k.
\end{align}
Thus, the measurement settings are fully specified by the matrices 
\begin{align}
    A &:= (a_{ij}) = \begin{bmatrix}
        \bs{a}_1^\top \\ \vdots \\ \bs{a}_m^\top
    \end{bmatrix} \in \mcal{M}_{m,3}(\mbb{R}), \\
    B &:= (b_{ij}) = \begin{bmatrix}
        \bs{b}_1^\top \\ \vdots \\ \bs{b}_m^\top
    \end{bmatrix} \in \mcal{M}_{m,3}(\mbb{R}).
\end{align}

\subsection{Entanglement Structures}\label{subsec:ES}
In the framework of GPTs, a composite system whose local subsystems are described by standard quantum theory need not possess the standard quantum state space as its joint state space.
An \emph{entanglement structure} (ES)~\cite{Janotta_2014,arai2019perfect,arai2023pseudo,arai2024detecting} specifies a model of such a composite system by choosing a cone of valid states subject to consistency with local quantum measurements.
We restrict the exposition to the two-qubit system $\CtwoCtwo$; for the general multipartite definition, see Refs.~\cite{Janotta_2014,arai2024detecting,arai2024detecting_supplemental}.

We equip the real vector space $\mcal{B}^{\mrm{S}}(\CtwoCtwo)$ with the Hilbert--Schmidt inner product $\bra \hat{X}, \hat{Y} \ket_{\HS} := \Tr[\hat{X}\hat{Y}]$.
The \emph{separable cone} is defined as
\begin{equation}\label{eq:sep_cone}
    \op{SEP} := \op{Conv}\!\bigl\{\hat{X}_A \otimes \hat{X}_B \,:\, \hat{X}_A,\; \hat{X}_B \in \mcal{B}^{+}(\mbb{C}^2)\bigr\},
\end{equation}
where $\op{Conv}(\cdot)$ denotes the convex hull.
This cone is the minimal tensor product of the local positive cones.
Its dual cone with respect to the Hilbert--Schmidt inner product is
\begin{equation}\label{eq:sep_dual}
    \op{SEP}^{*} := \bigl\{\hat{Y} \in \mcal{B}^{\mrm{S}}(\CtwoCtwo) \,:\, \forall \hat{X} \in \op{SEP}, \; \Tr[\hat{Y}\hat{X}] \geq 0\bigr\}.
\end{equation}

\begin{definition}[Entanglement structure~{\cite{Janotta_2014,arai2024detecting,arai2024detecting_supplemental}}]\label{def:ES}
An entanglement structure for the two-qubit system $\CtwoCtwo$ is a proper cone $\mcal{K}\subset \mcal{B}^{\mrm{S}}(\CtwoCtwo)$ satisfying
\begin{equation}\label{eq:ES_sandwich}
    \op{SEP} \subset \mcal{K}\subset \op{SEP}^{*}.
\end{equation}
The state space associated with $\mcal{C}$ is $\mcal{S}(\mcal{C}) := \{\hat{\rho} \in \mcal{K}: \Tr[\hat{\rho}] = 1\}$.
\end{definition}

The left inclusion in~\eqref{eq:ES_sandwich} guarantees that every separable state belongs to $\mcal{S}(\mcal{K})$, while the right inclusion ensures that every state in $\mcal{S}(\mcal{K})$ yields nonnegative probabilities for all measurements composed of local POVMs~\cite{arai2024detecting,arai2024detecting_supplemental}.
Three canonical choices of $\mcal{C}$ define the models studied in this paper:
\begin{enumerate}
    \item \textit{Separable states (minimal tensor product).}
    Setting $\mcal{K}= \op{SEP}$ yields the state space of separable states:
    \begin{align*}\label{eq:S_sep}
        \Ssep :=& \mcal{S}(\op{SEP}) \\
        =& \op{Conv}\!\bigl\{\hat{\rho}_A \otimes \hat{\rho}_B \,:\, \hat{\rho}_A,\;\hat{\rho}_B \in \D(\mbb{C}^2)\bigr\},
        \numberthis
    \end{align*}
    where $\D(\mcal{H}) := \{\hat{\rho} \in \mcal{B}^{+}(\mcal{H}) : \Tr[\hat{\rho}] = 1\}$ denotes the set of density operators on $\mcal{H}$.
    This is the smallest admissible state space.
    \item \textit{Quantum states (standard entanglement structure).}
    Setting $\mcal{K}= \mcal{B}^{+}(\CtwoCtwo)$, i.e., the cone of all positive semidefinite operators, recovers the standard quantum state space:
    \begin{align*}\label{eq:S_qm}
        \Squantum :=& \D(\CtwoCtwo) \\
        =& \bigl\{\hat{\rho} \in \mcal{B}^{+}(\CtwoCtwo) \,:\, \Tr[\hat{\rho}] = 1\bigr\}.
        \numberthis
    \end{align*}
    Since $\op{SEP} \subset \mcal{B}^{+}(\CtwoCtwo) \subset \op{SEP}^{*}$, this cone defines a valid ES, called the \emph{standard entanglement structure} (SES)~\cite{arai2024detecting}.
    \item \textit{Maximal state space (maximal tensor product).}
    Setting $\mcal{K}= \op{SEP}^{*}$ yields the largest admissible state space:
    \begin{align*}\label{eq:S_max}
        \Smax :=& \mcal{S}(\op{SEP}^{*}) \\
        =& \bigl\{\hat{\rho} \in \mcal{B}^{\mrm{S}}(\CtwoCtwo) \,:\, \Tr[\hat{\rho}] = 1,\;
         \forall \hat{M} \in \op{SEP}, \, \Tr[\hat{\rho}\,\hat{M}] \geq 0\bigr\}.\numberthis
    \end{align*}
    A state $\hat{\rho} \in \mcal{S}_{\mrm{max}} \setminus \mcal{S}_{\mrm{qm}}$ is called a \emph{beyond-quantum state}~\cite{arai2024detecting}.
\end{enumerate}
These three state spaces satisfy the strict inclusions
\begin{align}\label{eq:state_space_inclusions}
    \mcal{S}_{\mrm{sep}} \subsetneq \mcal{S}_{\mrm{qm}} \subsetneq \mcal{S}_{\mrm{max}}.
\end{align}
Every ES for $\CtwoCtwo$ has a state space $\mcal{S}(\mcal{K})$ lying between $\mcal{S}_{\mrm{sep}}$ and $\mcal{S}_{\mrm{max}}$, and the standard quantum state space $\mcal{S}_{\mrm{qm}}$ is one particular instance within this range.

\subsection{Device-Dependent Correlation Sets}
For fixed measurement settings $A$ and $B$, we define a linear mapping that sends a state to the matrix whose $(i,j)$-th entry is given by the correlation between $\hat{A}_i$ and $\hat{B}_j$:
\begin{equation}
    \Phi_{AB} : \mcal{B}(\CtwoCtwo) \to \MmR, \; 
    \hat{\rho} \mapsto C=(c_{ij}),
\end{equation}
where
\begin{equation}
    c_{ij} = \Tr\left[\hat{\rho} \left(\hat{A}_i \otimes \hat{B}_j\right)\right] \; \text{for all} \; i,j \in \{1,\ldots,m\}.
\end{equation}

We consider the correlation sets corresponding to the three state spaces defined above:
\begin{enumerate}
    \item \textit{Separable correlation set}. 
    The set of correlation matrices realized by separable states:
    \begin{equation}
        \Csep := \{\Phi_{AB}(\hat{\rho}) \in \MmR \,:\, \hat{\rho} \in \Ssep\}.
    \end{equation}
    \item \textit{Quantum correlation set}.
    The set of correlation matrices realized by quantum states:
    \begin{equation}
        \Cquantum := \{\Phi_{AB}(\hat{\rho}) \in \MmR \,:\, \hat{\rho} \in \Squantum\}.
    \end{equation}
    \item \textit{Maximal correlation set}.
    The set of correlation matrices realized by states in the maximal entanglement structure:
    \begin{equation}
        \Cmax := \{\Phi_{AB}(\hat{\rho}) \in \MmR \,:\, \hat{\rho} \in \Smax\}.
    \end{equation}
\end{enumerate}
We have the inclusions
\begin{equation}
    \Csep \subset \Cquantum \subset \Cmax,
\end{equation}
but neither inclusion is necessarily strict, since the mapping $\Phi_{AB}$ need not be injective.

\subsection{Convex Analysis: Support and Gauge Functions}
We introduce two convex-analytic tools: support and gauge functions.

Let $(V, \bra \cdot, \cdot \ket_V)$ be a real inner product space, and let $K \subset V$.

\begin{definition}[Support Function]
    The \emph{support function} of the set $K$ is defined by
    \begin{equation}
        \phi_K (Z) := \sup_{C \in K} \bra Z, C \ket_V, \quad \forall Z \in V.
    \end{equation}
\end{definition}
If $K$ is bounded, $\phi_K$ is Lipschitz continuous: for any $Z, Z' \in V$,
\begin{equation}\label{eq:support_function_lipschitz}
    |\phi_K(Z) - \phi_K(Z')| \leq \sup_{C \in K} \|C\|_V \, \|Z - Z'\|_V,
\end{equation}
where $\|\cdot\|_V$ denotes the norm induced by $\bra \cdot, \cdot \ket_V$.

If $K$ is closed and convex, the support function $\phi_K$ completely characterizes $K$:
\begin{equation}
    K = \bigcap_{Z \in V} \{C \in V \,:\, \bra Z, C \ket_V \leq \phi_K(Z)\}.
\end{equation}
\begin{definition}[Gauge Function]
    The \emph{gauge function} of the set $K$ is defined by
    \begin{equation}
        \gamma_K (C) := \inf \{t > 0 \,:\, C/t \in K\}, \quad \forall C \in V.
    \end{equation}
\end{definition}
If $K$ is a closed convex set containing the origin, then $K$ is the unit ball of its gauge function:
\begin{equation}
    K = \{C \in V \,:\, \gamma_K(C) \leq 1\}.
\end{equation}

Under the same assumptions, the support and gauge functions are dual to each other in the following sense:
\begin{align}
    \phi_K(Z) &= \sup_{C \neq 0} \frac{\bra Z, C \ket_V}{\gamma_K (C)}
    = \sup_{C \,:\, \gamma_K (C) \leq 1} \bra Z, C \ket_V, \\
    \gamma_K(C) &= \sup_{Z \neq 0} \frac{\bra Z, C \ket_V}{\phi_K(Z)}
    = \sup_{Z \,:\, \phi_K(Z) \leq 1} \bra Z, C\ket_V.
    \label{eq:gauge_function_as_a_dual_of_support_function}
\end{align}

\section{Support functions}\label{sec:support_functions}

We equip the real vector space $\MmR$ with the Hilbert--Schmidt inner product $\bra X, Y \ket_{\HS} := \Tr[X^\top Y]$.

In this section, we derive explicit formulae for the support functions
\begin{equation}
    \phi^{AB}_{\text{model}}(Z)  := \phi_{\Cmodel}(Z)
    = \sup_{C \in \Cmodel} \Tr[Z^\top C] , \quad
    \text{model} = \text{sep},\,\text{qm}, \,\text{max}.
\end{equation}


In what follows, we show that all three support functions are determined by the matrix $A^\top Z B \in \mcal{M}_{3}(\mbb{R})$. Let $s_1 \geq s_2 \geq s_3 \geq 0$ denote the singular values of $A^\top Z B$. 

\subsection{Support function of $\Csep$}

\begin{theorem}[{Support function of $\Csep$}]\label{thm:support_function_of_Csep}
    The support function $\phisep$ of $\Csep$ is given by
    \begin{equation}
        \phisep (Z) = \opnorm{A^\top Z B} = s_1,
        \label{eq:support_function_of_Csep_as_a_operator_norm}
    \end{equation}
    where $\| \cdot \|_{\infty}$ denotes the operator norm. 

    When $m=2$, this specializes to
    \begin{equation}
        \phisep (Z) = \sqrt{
            \frac{\Tr[G_{\alpha}\, Z\, G_{\beta}\, Z^\top]
            + \left| \Tr[G_{\alpha}\, Z\, H_{\beta}\, Z^\top]\right|
            }{2}
        },
        \label{eq:support_function_of_Csep_matrix_form}
    \end{equation}
    or equivalently,
    \begin{equation}
        \phisep (Z) = \sqrt{
            \frac{\bs{z}^\top K_{\alpha,\beta} \bs{z}
            + \left| \bs{z}^\top J_{\alpha,\beta} \bs{z}\right|
            }{2}
        },
        \label{eq:support_function_of_Csep_as_a_quadratic_form}
    \end{equation}
    where
    \begin{gather}
        K_{\alpha,\beta} := G_{\beta} \otimes G_{\alpha},\quad
        J_{\alpha,\beta} := H_{\beta} \otimes G_{\alpha},\\
         G_{\theta} := \begin{pmatrix}
             1 & \cos \theta \\
             \cos \theta & 1
         \end{pmatrix},
         \quad H_{\theta} := \begin{pmatrix}
            \ep{-\im \theta} & 1 \\
            1 & \ep{\im \theta}
         \end{pmatrix}
         ,\\
         \alpha := \arccos \bs{a}_1 \cdot \bs{a}_2, \quad \beta := \arccos \bs{b}_1 \cdot \bs{b}_2,\\
         \bs{z}:=(z_{11},z_{21},z_{12},z_{22})^\top.
    \end{gather}
\end{theorem}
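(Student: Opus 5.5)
Since $\Csep$ is the image of the convex set $\Ssep$ under the linear map $\Phi_{AB}$, and a linear functional attains its supremum over a convex hull at extreme points, the support function reduces to a supremum over product states $\hat\rho_A\otimes\hat\rho_B$. Write these with Bloch vectors $\bs{u},\bs{v}$, $\|\bs{u}\|,\|\bs{v}\|\le 1$. Then
\[
c_{ij}=\Tr[\hat\rho_A\hat A_i]\,\Tr[\hat\rho_B\hat B_j]=(\bs{a}_i\cdot\bs{u})(\bs{b}_j\cdot\bs{v}),
\]
so $C=(A\bs{u})(B\bs{v})^\top$ and
\[
\Tr[Z^\top C]=(A\bs{u})^\top Z (B\bs{v})=\bs{u}^\top (A^\top Z B)\bs{v}.
\]
The supremum of this bilinear form over the unit balls of $\mathbb{R}^3$ is, by the variational characterization of singular values (Cauchy--Schwarz plus the SVD), the largest singular value $s_1=\opnorm{A^\top ZB}$. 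It is attained at the top singular vectors, which are unit vectors and hence correspond to pure states. This proves \eqref{eq:support_function_of_Csep_as_a_operator_norm}.

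For $m=2$ we want a closed form for $s_1$. Set $M:=A^\top ZB$, which has rank at most $2$. Then
\[
s_1^2=\frac{\Tr[M^\top M]+\sqrt{(\Tr[M^\top M])^2-4\,e_2(M^\top M)}}{2},
\]
where $e_2(M^\top M)$ is the sum of the $2\times2$ principal minors, i.e.\ $s_1^2s_2^2$. We also have $\Tr[M^\top M]=\Tr[G_\alpha Z G_\beta Z^\top]$, since $AA^\top=G_\alpha$ and $BB^\top=G_\beta$ are the Gram matrices. For $s_1^2s_2^2$ we use the Cauchy--Binet formula: it equals $\det(AA^\top)\det(Z)^2\det(BB^\top)=\sin^2\alpha\,\sin^2\beta\,\det(Z)^2$, because $M=A^\top Z B$ factors through the $2\times2$ matrix $Z$ and $A^\top$ and $B$ have $2$ columns and $2$ rows respectively.

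So the remaining task is the algebraic identity
\[
\bigl|\Tr[G_\alpha Z H_\beta Z^\top]\bigr|^2=(\Tr[G_\alpha Z G_\beta Z^\top])^2-4\sin^2\alpha\sin^2\beta\det(Z)^2 .
\]
I would check this by writing $H_\beta=G_\beta+\im\,\mathrm{diag}(-\sin\beta,\sin\beta)+(\cos\beta-1)\cdot$ (a diagonal correction), or more cleanly by noting $H_\beta=\bs{w}\bs{w}^{\!\top}$-type structure: $H_\beta$ is built from $\bs{w}=(\ep{-\im\beta/2},\ep{\im\beta/2})$. Then $\Tr[G_\alpha ZH_\beta Z^\top]$ becomes a complex quadratic form whose real part is $\Tr[G_\alpha ZG_\beta Z^\top]$ and whose imaginary part is controlled by $\det Z$.

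This verification is the main obstacle; it is a finite polynomial identity in $z_{ij},\cos\alpha,\cos\beta$. I would try to make it transparent by choosing coordinates with $\bs{a}_1=(1,0,0)$, $\bs{a}_2=(\cos\alpha,\sin\alpha,0)$, and similarly for $\bs{b}_j$. Then $M$ is effectively a $2\times2$ matrix $R_\alpha^\top Z R_\beta$, and for a real $2\times2$ matrix one has
\[
s_1=\tfrac12\bigl(|p|+|q|\bigr),
\]
where $p$ and $q$ are the complex entries in the conformal/anticonformal decomposition. This gives $s_1^2=\bigl(\|M\|_F^2+|p^2-q^2|\bigr)/2$, with $p^2-q^2$ a complex bilinear expression that should match $\Tr[G_\alpha ZH_\beta Z^\top]$ up to a phase.

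The quadratic-form version \eqref{eq:support_function_of_Csep_as_a_quadratic_form} then follows from the vectorization identity $\Tr[G_\alpha Z X Z^\top]=\mathrm{vec}(Z)^\top(X\otimes G_\alpha)\mathrm{vec}(Z)$ for symmetric $X$, applied with $X=G_\beta$ and $X=H_\beta$, together with $\bs{z}=\mathrm{vec}(Z)$ in column-stacking order.
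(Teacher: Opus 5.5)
Your derivation of $\phisep(Z)=\opnorm{A^\top ZB}$ is correct and follows the paper. So is your Cauchy--Binet evaluation $s_1^2s_2^2=\sin^2\alpha\,\sin^2\beta\,(\det Z)^2$, which is equivalent to the paper's $\det P$ with $P=G_\alpha ZG_\beta Z^\top$. The vectorization step is also fine.

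The gap is the identity you yourself isolate, $|\Tr[G_\alpha ZH_\beta Z^\top]|^2=(\Tr[G_\alpha ZG_\beta Z^\top])^2-4\sin^2\alpha\,\sin^2\beta\,(\det Z)^2$. It is the whole content of the $m=2$ formula, and you do not prove it; a brute-force expansion would settle it, but none is given. Moreover, both routes you sketch for it rest on false statements.

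\emph{First route.} The real part of $\Tr[G_\alpha ZH_\beta Z^\top]$ is not $\Tr[G_\alpha ZG_\beta Z^\top]$. Indeed, $\operatorname{Re}H_\beta$ has $\cos\beta$ on the diagonal and $1$ off it, the reverse of $G_\beta$. For $Z=I$ and $\alpha=\beta=\pi/2$ the two quantities are $0$ and $2$. If your description were right, the modulus would be at least $\Tr[G_\alpha ZG_\beta Z^\top]$, contradicting the very identity you want whenever $\sin\alpha\sin\beta\det Z\neq 0$.

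\emph{Second route.} Take $p,q$ normalized so that $s_1=\tfrac12(|p|+|q|)$. Then $s_2=\tfrac12\bigl|\,|p|-|q|\,\bigr|$ and $\|M\|_F^2=\tfrac12(|p|^2+|q|^2)$, hence $s_1^2=\tfrac12(\|M\|_F^2+|pq|)$, not $\tfrac12(\|M\|_F^2+|p^2-q^2|)$. The quantity $|p^2-q^2|$ depends on the relative phase of $p$ and $q$, which changes under $M\mapsto MO$ with $O$ a rotation. So it cannot be a function of the singular values; for $M=I_2$ your formula gives $s_1^2=3$ instead of $1$.

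The paper closes this step by absorbing $Z$ and $G_\alpha$ into the real symmetric matrix $R:=Z^\top G_\alpha Z=\bigl(\begin{smallmatrix}a&c\\ c&b\end{smallmatrix}\bigr)$. Both traces then become linear in three numbers: $\Tr[G_\beta R]=a+b+2c\cos\beta$ and $\Tr[H_\beta R]=(a+b)\cos\beta+2c+\im(b-a)\sin\beta$. This gives $(\Tr[G_\beta R])^2-|\Tr[H_\beta R]|^2=\sin^2\beta\,\bigl((a+b)^2-(a-b)^2-4c^2\bigr)=4\sin^2\beta\det R$, and $\det R=\sin^2\alpha\,(\det Z)^2$.

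Your correct observation $H_\beta=\bs{w}\bs{w}^\top$ also gives a clean proof if pursued properly. For a real symmetric $2\times2$ matrix $S$, the eigenvalue gap is $|S_{11}-S_{22}+2\im S_{12}|=|\bs{w}_0^\top S\bs{w}_0|$ with $\bs{w}_0=(1,\im)^\top$. Take $S=N^\top N$ with $N=R_\alpha^\top ZR_\beta$ and $R_\theta=\bigl(\begin{smallmatrix}1&0\\ \cos\theta&\sin\theta\end{smallmatrix}\bigr)$, as in your coordinates. Using $R_\alpha R_\alpha^\top=G_\alpha$ and $R_\beta\bs{w}_0=\ep{\im\beta/2}\bs{w}$, you get $s_1^2-s_2^2=|\Tr[G_\alpha ZH_\beta Z^\top]|$ directly. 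Combined with $s_1^2+s_2^2=\Tr[G_\alpha ZG_\beta Z^\top]$, this yields the $m=2$ formula without computing any determinant.
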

\begin{proof}
    See Appendix~\ref{app:proof_support_Csep}.
\end{proof}

\subsection{Support function of $\Cquantum$}

\begin{theorem}[{Support function of $\Cquantum$}]\label{thm:support_function_of_Cquantum}
    The support function $\phiquantum$ of $\Cquantum$ is given by
    \begin{equation}
        \phiquantum (Z) =
        s_1 + s_2 - \epsilon \, s_3,
        \label{eq:support_function_of_Cquantum_as_a_nuclear_norm}
    \end{equation}
    where $\epsilon := \op{sgn} (\det(A^\top ZB))$.

    When $m=2$, this specializes to
    \begin{equation}
        \phiquantum (Z) = \sqrt{\Tr[G_{\alpha}\, Z\, G_{\beta}\, Z^\top] + 2|\det Z| \sin\alpha \sin\beta},
        \label{eq:support_function_of_Cquantum_matrix_form}
    \end{equation}
    or equivalently,
    \begin{equation}
        \phiquantum (Z) = \sqrt{\bs{z}^\top K_{\alpha,\beta} \bs{z} + 2|\det Z| \sin\alpha \sin\beta}.
        \label{eq:support_function_of_Cquantum_as_a_quadratic_form}
    \end{equation}

\end{theorem}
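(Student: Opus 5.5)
We reduce the computation to an optimization over two-qubit correlation matrices. For any $\hat{\rho}\in\Squantum$, write its correlation tensor $T=(t_{kl})$ with $t_{kl}=\Tr[\hat{\rho}(\hat{\sigma}_k\otimes\hat{\sigma}_l)]$. By linearity, $c_{ij}=\bs{a}_i^\top T\bs{b}_j$, so $C=ATB^\top$ and
\begin{equation*}
\Tr[Z^\top C]=\Tr[Z^\top A T B^\top]=\Tr[(A^\top Z B)^\top T].
\end{equation*}
Hence $\phiquantum(Z)=\sup_{T\in\mcal{T}_{\mathrm{qm}}}\Tr[M^\top T]$ with $M:=A^\top Z B$, where $\mcal{T}_{\mathrm{qm}}\subset\mcal{M}_3(\mbb{R})$ is the set of correlation tensors of two-qubit states. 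The supremum is linear in $\hat{\rho}$, so it is attained at pure states, or equivalently at any state whose $T$ is extremal.

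\textbf{Step 1 (normal form).} Take a real singular value decomposition $M=U\,\mathrm{diag}(s_1,s_2,s_3)V^\top$ with $U,V\in O(3)$. Flip signs so that $U,V\in SO(3)$; then the last diagonal entry becomes $\epsilon s_3$ with $\epsilon=\op{sgn}\det M$. Local unitaries $\hat{U}_A\otimes\hat{U}_B$ act on $T$ as $T\mapsto R_A T R_B^\top$ for arbitrary $R_A,R_B\in SO(3)$, and they preserve $\Squantum$. Therefore $\mcal{T}_{\mathrm{qm}}$ is invariant under this action, and
\begin{equation*}
\phiquantum(Z)=\sup_{T}\bigl(s_1t_{11}+s_2t_{22}+\epsilon s_3 t_{33}\bigr)
\end{equation*}
after a further rotation.

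\textbf{Step 2 (key bound).} This is the main obstacle. We must show that the set of diagonals $(t_{11},t_{22},t_{33})$ attainable by two-qubit states is exactly the tetrahedron with vertices $(1,-1,1)$, $(-1,1,1)$, $(1,1,-1)$, $(-1,-1,-1)$, namely the diagonals of the four Bell states. Attainability follows by mixing Bell states. For the converse we use the Bell-diagonal twirl: averaging $\hat{\rho}$ over the group $\{\hat{\sigma}_k\otimes\hat{\sigma}_k\}$ preserves the diagonal of $T$, kills the off-diagonal entries, and yields a Bell-diagonal state, which is a valid state. Bell-diagonal states have eigenvalues $\tfrac14(1\mp t_{11}\pm\cdots)$, and nonnegativity of these eigenvalues is exactly the tetrahedron. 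The linear functional $(s_1,s_2,\epsilon s_3)\cdot\bs{t}$ is maximized at a vertex. Since $s_1\ge s_2\ge s_3\ge0$, the best vertex has $+1$ on the first two coordinates and $-1$ on the third. For $\epsilon=1$ this is the vertex $(1,1,-1)$, giving $s_1+s_2-s_3$. For $\epsilon=-1$ the same vertex gives $s_1+s_2+s_3$, and no vertex can do better than this. Hence the supremum is $s_1+s_2-\epsilon s_3$, attained by a maximally entangled state.

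\textbf{Step 3 ($m=2$).} Here $\mathrm{rank}\,M\le2$, so $s_3=0$ and $\phiquantum=s_1+s_2$, which gives
\begin{equation*}
\phiquantum(Z)^2=s_1^2+s_2^2+2s_1s_2=\Tr[M^\top M]+2\sqrt{\text{sum of principal }2\times2\text{ minors of }M^\top M}.
\end{equation*}
The first term is $\Tr[B^\top Z^\top AA^\top ZB]=\Tr[G_\alpha Z G_\beta Z^\top]$, because $AA^\top=G_\alpha$ and $BB^\top=G_\beta$. For the second term, Cauchy--Binet gives $\sum(\text{minors})=\det(M^\top M)$ restricted to its rank-2 part. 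Equivalently, $s_1s_2$ is the norm of $(A^\top ZB)$ acting on 2-vectors: $s_1s_2=|\det Z|\,\|\bs{a}_1\times\bs{a}_2\|\,\|\bs{b}_1\times\bs{b}_2\|=|\det Z|\sin\alpha\sin\beta$, using $\Lambda^2 A^\top=\bs{a}_1\times\bs{a}_2$. The quadratic form version follows from $\Tr[G_\alpha ZG_\beta Z^\top]=\mathrm{vec}(Z)^\top(G_\beta\otimes G_\alpha)\mathrm{vec}(Z)$ with $\bs{z}=\mathrm{vec}(Z)$.
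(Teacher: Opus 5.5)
Your proof is correct and is essentially the paper's argument: the special SVD with $U,V\in\mathrm{SO}(3)$ and local-unitary invariance reduce everything to the Bell basis, and your Bell-diagonal twirl is the state-side dual of the paper's joint-spectrum computation for the commuting operators $\hat{\sigma}_n\otimes\hat{\sigma}_n$. The twirl confines the diagonal of $T$ to $\operatorname{conv}\{\bs{\theta}\in\{\pm1\}^3:\theta_1\theta_2\theta_3=-1\}$, while for $m=2$ your factorization $\Lambda^2 M=\det Z\,(\bs{a}_1\times\bs{a}_2)(\bs{b}_1\times\bs{b}_2)^\top$ stands in for the paper's identity $s_1^2s_2^2=\det(G_\alpha Z G_\beta Z^\top)=\sin^2\alpha\sin^2\beta\,(\det Z)^2$. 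The phrase ``$\det(M^\top M)$ restricted to its rank-2 part'' is imprecise, since $\det(M^\top M)=0$ here; the clean statement is that, by Cauchy--Binet, the sum of the principal $2\times2$ minors of $M^\top M$ equals $\|\Lambda^2 M\|_F^2$, which your factorization evaluates to $(\det Z)^2\sin^2\alpha\sin^2\beta$.
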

\begin{proof}
    See Appendix~\ref{app:proof_support_Cquantum}.
\end{proof}

\subsection{Support function of $\Cmax$}
\begin{theorem}[{Support function of $\Cmax$}]\label{thm:support_function_of_Cmax}
    The support function $\phimax$ of $\Cmax$ is given by
    \begin{equation}
        \phimax (Z) = \trnorm{A^\top Z B}=s_1+s_2+s_3,
    \end{equation}
    where $\|\cdot\|_1$ denotes the trace norm.
\end{theorem}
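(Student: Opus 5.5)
We reduce the statement to a problem about the Pauli correlation matrix of $\hat{\rho}$ and then apply duality with the separable cone. Write any Hermitian $\hat{\rho}$ with unit trace as
\[
\hat{\rho}=\tfrac14\Bigl(\hat{I}\otimes\hat{I}+\bs{r}\cdot\hat{\bs{\sigma}}\otimes\hat{I}+\hat{I}\otimes\bs{s}\cdot\hat{\bs{\sigma}}+\textstyle\sum_{k,l}T_{kl}\,\hat{\sigma}_k\otimes\hat{\sigma}_l\Bigr).
\]
Then $\Phi_{AB}(\hat{\rho})=ATB^\top$. Hence
\[
\Tr[Z^\top ATB^\top]=\Tr[(A^\top ZB)^\top T]=\langle A^\top ZB,\,T\rangle_{\HS}.
\]
This gives $\phimax(Z)=\sup_{\hat{\rho}\in\Smax}\langle M,T_{\hat{\rho}}\rangle$ with $M:=A^\top ZB$. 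The whole problem is therefore to identify the set $\mathcal{T}_{\max}$ of matrices $T$ arising from $\Smax$, or at least its support function. We claim this support function is the trace norm.

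\emph{Upper bound.} Take $\hat{\rho}\in\Smax$ and unit vectors $\bs{u},\bs{v}$. The product state
\[
\tfrac14(\hat{I}+\bs{u}\cdot\hat{\bs{\sigma}})\otimes(\hat{I}+\bs{v}\cdot\hat{\bs{\sigma}})
\]
lies in $\op{SEP}$, so its pairing with $\hat{\rho}$ is nonnegative:
\[
1+\bs{r}\cdot\bs{u}+\bs{s}\cdot\bs{v}+\bs{u}^\top T\bs{v}\ge0 .
\]
The same holds with $(\bs{u},\bs{v})$ replaced by $(-\bs{u},-\bs{v})$. Adding the two inequalities cancels the linear terms and yields $\bs{u}^\top T\bs{v}\ge -1$. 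Replacing $\bs{u}$ by $-\bs{u}$ gives $\bs{u}^\top T\bs{v}\le 1$, hence $\opnorm{T}\le1$. By duality of the operator and trace norms,
\[
\langle M,T\rangle\le\trnorm{M}\,\opnorm{T}\le\trnorm{M}=s_1+s_2+s_3 .
\]

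\emph{Achievability.} Take a singular value decomposition $M=U\Sigma V^\top$ with $U,V\in O(3)$, and set $T:=UV^\top$, so that $\langle M,T\rangle=\Tr\Sigma=\trnorm{M}$. We must check that
\[
\hat{\rho}_T:=\tfrac14\Bigl(\hat{I}\otimes\hat{I}+\textstyle\sum_{k,l}T_{kl}\,\hat{\sigma}_k\otimes\hat{\sigma}_l\Bigr)
\]
lies in $\Smax$, i.e.\ $\Tr[\hat{\rho}_T(\hat{X}_A\otimes\hat{X}_B)]\ge0$ for all positive $\hat{X}_A,\hat{X}_B$. It suffices to check pure product states, which are extreme rays of $\op{SEP}$, and for these the pairing equals
\[
\tfrac14\,(1+\bs{u}^\top T\bs{v})\ge0 ,
\]
because $T$ is orthogonal and so $|\bs{u}^\top T\bs{v}|\le1$. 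Thus every matrix with $\opnorm{T}\le1$ is realized (the argument works for any contraction), and the supremum equals $\trnorm{A^\top ZB}$, completing the proof.

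The only genuinely nontrivial step is the upper bound, specifically the symmetrization trick that removes the local Bloch vectors $\bs{r},\bs{s}$. The rest is the standard duality between the operator and trace norms, together with the fact that a correlation-only operator built from a contraction is SEP-positive. Note that when $\det UV^\top=-1$ the state $\hat{\rho}_T$ is not positive semidefinite; this is exactly the origin of the $-\epsilon s_3$ correction in Theorem~\ref{thm:support_function_of_Cquantum}, but it causes no issue here.
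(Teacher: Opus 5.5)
Your proof is correct, and it takes a genuinely different route from the paper. The paper relies on an external result: for two qubits, the extreme points of $\Smax$ are the pure states and the partial transposes of pure states \cite[Proposition~12]{arai2024detecting_supplemental}. This gives $\phimax(Z)=\max\{\lambda_{\max}(\hat{S}_{AB}(Z)),\lambda_{\max}(\Gamma(\hat{S}_{AB}(Z)))\}$. Both eigenvalues are then computed with the special SVD from the proof of Theorem~\ref{thm:support_function_of_Cquantum}, yielding $s_1+s_2\mp\epsilon s_3$, whose maximum is $s_1+s_2+s_3$.

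You never need the extreme points of $\Smax$. Pairing with pure product states and averaging over $(\pm\bs{u},\pm\bs{v})$ eliminates the local Bloch vectors and gives $\opnorm{T}\le 1$. Conversely, every correlation-only operator with $\opnorm{T}\le 1$ is block-positive. This is elementary and self-contained, and it proves more than the theorem. The set of Pauli correlation matrices of $\Smax$ is exactly the operator-norm unit ball, so $\Cmax=\{ATB^\top:\opnorm{T}\le 1\}$. That gives Theorem~\ref{thm:convex_hull_characterzation_of_Cmax} at once, since $\op{Conv}\,\mrm{O}(3)$ is that ball.

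What the paper's route buys is an explicit link to $\Cquantum$. There, $\phimax$ is the quantum support function evaluated on either $\hat{S}_{AB}(Z)$ or its partial transpose, which shows directly that the sign term $\epsilon s_3$ is exactly what partial transposition flips. Both arguments are qubit-specific. Yours uses that $-\bs{u}$ is again a pure-state Bloch vector, i.e., the central symmetry of the Bloch ball.

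One slip in your closing remark, which does not affect the proof: the non-positive case is $\det(UV^\top)=+1$, not $-1$. For instance, $T=I_3$ gives $\hat{\rho}_{\mrm{max}}=\tfrac14(\hat{I}_4+\sum_k\hat{\sigma}_k\otimes\hat{\sigma}_k)$, which has eigenvalue $-1/2$. By contrast, orthogonal $T$ with $\det T=-1$, such as $\op{diag}(1,-1,1)$ for $|\Phi^+\ket$, give maximally entangled states. Since $\det(UV^\top)=\epsilon$ when $s_3>0$, the corrected sign is exactly what produces $\phiquantum=s_1+s_2-\epsilon s_3$.
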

\begin{proof}
    See Appendix~\ref{app:proof_support_Cmax}.
\end{proof}
Note that $\phiquantum(Z) = \|A^\top Z B\|_1 = \phimax(Z)$ holds when $m=2$, since the $3 \times 3$ matrix $A^\top Z B$ has rank at most~$2$, so $s_3 = 0$.
Therefore, $\Cquantum = \Cmax$ when $m=2$.

\section{Gauge functions}\label{sec:gauge_functions}
In this section, we derive explicit formulae for the gauge functions
\begin{equation}
    \gamma^{AB}_{\text{model}}(C)  := \gamma_{\Cmodel}(C)
    = \inf \{t > 0 \,:\, C/t \in \Cmodel\}, \quad
    \text{model} = \text{sep},\,\text{qm}, \,\text{max}.
\end{equation}


In what follows, we show that all three gauge functions are determined by the matrix $A^{+}C(B^\top)^{+} \in \mcal{M}_{3}(\mbb{R})$, where $X^{+}$ denotes the Moore--Penrose pseudoinverse of a matrix $X$. Let $s'_1 \geq s'_2 \geq s'_3 \geq 0$ denote the singular values of $A^{+}C(B^\top)^{+}$.

\subsection{Gauge function of $\Csep$}
\begin{theorem}[{Gauge function of $\Csep$}]\label{thm:gauge_function_of_Csep}
    The gauge function $\gammasep$ of $\Csep$ is given by
    \begin{equation}
        \gammasep (C) =
        \begin{cases}
            \trnorm{A^{+} C\, (B^\top)^{+}}=s'_1+s'_2+s'_3 & \text{if~} \ran C \subset \ran A,~ \ran C^\top \subset \ran B, \\
            +\infty & \text{otherwise}.
        \end{cases}
    \end{equation}

    When $m=2$ and $\sin\alpha\sin\beta \neq 0$, this specializes to
    \begin{equation}
        \gammasep (C) = \sqrt{\Tr[G_{\alpha}^{-1}\, C\, G_{\beta}^{-1}\, C^\top] + \frac{2|\det C|}{\sin\alpha\sin\beta}},
        \label{eq:gauge_function_of_Csep_matrix_form}
    \end{equation}
    or equivalently,
    \begin{equation}
        \gammasep (C) = \sqrt{\bs{c}^\top K_{\alpha,\beta}^{-1} \bs{c} + \frac{2|\det C|}{\sin\alpha\sin\beta}},
    \end{equation}
    where
    \begin{equation}
        \bs{c} := (c_{11},c_{21},c_{12},c_{22})^\top.
    \end{equation}
\end{theorem}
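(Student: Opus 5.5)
The plan is to identify $\Csep$ explicitly as the image of a matrix unit ball under a linear map, and then compute the gauge of that image directly. (Alternatively, one could dualize Theorem~\ref{thm:support_function_of_Csep} via \eqref{eq:gauge_function_as_a_dual_of_support_function}, since the trace norm is dual to the operator norm; the direct route handles the $+\infty$ case more transparently.)

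\emph{Step 1: $\Csep$ as an image of the trace-norm ball.} For a product state $\hat\rho_A\otimes\hat\rho_B$ with Bloch vectors $\bs r,\bs s$ satisfying $|\bs r|,|\bs s|\le1$, we have $c_{ij}=(\bs a_i\cdot\bs r)(\bs b_j\cdot\bs s)$, i.e.\ $C=A\,\bs r\bs s^\top B^\top$. Since $\Phi_{AB}$ is linear, $\Csep$ is the image, under the linear map $L:T\mapsto ATB^\top$ on $\mcal M_3(\mbb R)$, of $\op{Conv}\{\bs r\bs s^\top:|\bs r|,|\bs s|\le1\}$. This convex hull is the unit ball $\{T:\trnorm{T}\le1\}$. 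The inclusion of the hull in the ball is clear. Conversely, the SVD writes any $T$ with $\trnorm T\le1$ as $\sum_k s_k \bs u_k\bs v_k^\top$ with $\sum_k s_k\le1$, and the zero matrix is itself available in the hull. Hence $\Csep=L(\mathbb B_1)$, a compact convex set containing $0$, and
\begin{equation*}
\gammasep(C)=\inf\{\trnorm T: ATB^\top=C\},
\end{equation*}
with the convention that the infimum is $+\infty$ when there is no preimage.

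\emph{Step 2: solvability and minimization.} A solution of $ATB^\top=C$ exists iff $\ran C\subset\ran A$ and $\ran C^\top\subset\ran B$. Necessity is immediate. For sufficiency, take $T_0:=A^+C(B^\top)^+$ and use $AA^+C=C$ and $C(B^\top)^+B^\top=C$, which hold under the range conditions. Now let $P:=A^+A$ and $Q:=(B^+B)^\top=B^\top(B^\top)^+$; both are orthogonal projections. For any feasible $T$,
\begin{equation*}
A^+C(B^\top)^+=A^+ATB^\top(B^\top)^+=PTQ .
\end{equation*}
Since orthogonal projections are contractions, $\trnorm{PTQ}\le\trnorm T$. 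Because $T_0$ is itself feasible, the infimum equals $\trnorm{T_0}=s_1'+s_2'+s_3'$. The main point needing care is these Moore--Penrose identities, i.e.\ that $T_0$ is the minimum-norm solution for the trace norm and not only for the Frobenius norm. The projection-contraction argument handles this uniformly.

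\emph{Step 3: the case $m=2$.} If $\sin\alpha\sin\beta\ne0$, then $A$ and $B$ have full row rank $2$. The range conditions therefore hold automatically, and $A^+=A^\top(AA^\top)^{-1}=A^\top G_\alpha^{-1}$, with $AA^\top=G_\alpha$; similarly $B^+=B^\top G_\beta^{-1}$. The matrix $X:=T_0$ has rank at most $2$, so
\begin{equation*}
\trnorm X^2=(s_1'+s_2')^2=\|X\|_{\HS}^2+2s_1's_2'.
\end{equation*}
The Frobenius term is $\Tr[X^\top X]=\Tr[G_\alpha^{-1}CG_\beta^{-1}C^\top]$, using $(A^+)^\top A^+=G_\alpha^{-1}$ and the analogous identity for $B$. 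For the product term, $X=A^\top(G_\alpha^{-1}CG_\beta^{-1})B$. The map $A^\top G_\alpha^{-1/2}$ is an isometry from $\mbb R^2$ onto $\ran A^\top$, and likewise for $B$. Hence $s_1's_2'=|\det(G_\alpha^{-1/2}CG_\beta^{-1/2})|=|\det C|/\sqrt{\det G_\alpha\det G_\beta}=|\det C|/(\sin\alpha\sin\beta)$, which gives \eqref{eq:gauge_function_of_Csep_matrix_form}. Finally, the vectorized form follows from $\Tr[G_\alpha^{-1}CG_\beta^{-1}C^\top]=\bs c^\top(G_\beta^{-1}\otimes G_\alpha^{-1})\bs c=\bs c^\top K_{\alpha,\beta}^{-1}\bs c$, using column-stacking $\operatorname{vec}$ and the symmetry of $G_\alpha,G_\beta$.
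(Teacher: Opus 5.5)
Your proof is correct, but it runs in the primal direction, whereas the paper's proof is dual.

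\textbf{The paper's route.} It starts from $\gammasep(C)=\sup_{Z\neq 0}\Tr[Z^\top C]/\phisep(Z)$, with $\phisep(Z)=\opnorm{A^\top Z B}$ taken from Theorem~\ref{thm:support_function_of_Csep}. It substitutes $\tilde{Z}=A^\top Z B$, so that $\Tr[Z^\top C]=\Tr[\tilde{Z}^\top W]$ with $W=A^{+}C(B^\top)^{+}$. H\"older then bounds the ratio by $\trnorm{W}$, and attainment is shown with the explicit $Z_\star=(A^\top)^{+}UV^\top B^{+}$. The $+\infty$ case uses some $Z$ in the kernel of $Z\mapsto A^\top Z B$ with $\Tr[Z^\top C]\neq 0$.

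\textbf{Your route.} You identify $\Csep$ as the image of the trace-norm unit ball under $T\mapsto ATB^\top$. This turns the gauge into a minimum-trace-norm preimage problem. You settle it by noting that every preimage $T$ satisfies $A^{+}C(B^\top)^{+}=PTQ$ with orthogonal projections $P=A^{+}A$ and $Q=B^{+}B$.

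\textbf{What your route buys.}
\begin{itemize}
\item It is self-contained: it needs neither the support-function theorem nor the duality relation.
\item The $+\infty$ case is immediate: if there is no preimage, $C$ lies in no dilate of $\Csep$.
\item It proves the stronger description $\Csep=\{ATB^\top:\trnorm{T}\le 1\}$, which already implies Theorem~\ref{thm:convex_hull_characterization_of_Csep}.
\item It shows the infimum is attained at $T_0=A^{+}C(B^\top)^{+}$. The SVD of $T_0$ gives an explicit separable state realizing the boundary point $C/\gammasep(C)$.
\item The projection argument shows $T_0$ minimizes every unitarily invariant norm among preimages, not just the trace norm.
\end{itemize}

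\textbf{What the paper's route buys.} It produces the optimal coefficient matrix $Z_\star$ of Remark~\ref{rem:gauge_sep_optimizer}, i.e.\ the optimal entanglement witness used in Sec.~\ref{sec:detection_of_entanglement}. It also gives a template reused almost verbatim for $\Cmax$ and $\Cquantum$.

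\textbf{The case $m=2$.} Your isometry reduction to the $2\times 2$ matrix $G_\alpha^{-1/2}CG_\beta^{-1/2}$ is equivalent to the paper's $XY$/$YX$ eigenvalue argument for $G_\alpha^{-1}CG_\beta^{-1}C^\top$.
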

\begin{proof}
    See Appendix~\ref{app:proof_gauge_Csep}.
\end{proof}

\begin{remark}\label{rem:gauge_sep_optimizer}
    When $\gammasep(C)$ is finite, the supremum in the duality relation
    \begin{equation}
        \gammasep(C) = \sup_{Z \neq 0} \frac{\Tr[Z^\top C]}{\phisep(Z)}
    \end{equation}
    is attained as a maximum.
    Let $A^{+} C\, (B^\top)^{+} = U \Sigma V^\top$ be a singular value decomposition.
    Then the maximum is achieved at $Z_{\star} = (A^\top)^{+}\, U V^\top\, B^{+}$.
\end{remark}

\subsection{Gauge function of $\Cquantum$}
\begin{theorem}[{Gauge function of $\Cquantum$}]\label{thm:gauge_function_of_Cquantum}
    Let $r := \min\{\rank A, \rank B\}$.

    When $m \geq 3$, the gauge function $\gammaquantum$ of $\Cquantum$ is given by
    \begin{equation}
        \gammaquantum (C) = \begin{cases}
            s'_1 + s'_2 + \epsilon' s'_3  & \text{if~} r=3,~ \ran C \subset \ran A,~\ran C^\top \subset \ran B, \\
            \gammamax (C) & \text{if~} r \leq 2,
        \end{cases}
    \end{equation}
    where $\epsilon' := \op{sgn} (\det (A^{+} C\, (B^\top)^{+}))$, and $\gammamax$ is given in Theorem~\ref{thm:gauge_function_of_Cmax}.

    When $m=2$, this specializes to
    \begin{equation}
        \gammaquantum (C) = \frac{1}{2} \left(\sqrt{\Tr[L_{\alpha}\, C\, L_{\beta}^\top\, C^\top]} + \sqrt{\Tr[L_{\alpha}\, C\, L_{-\beta}^\top\, C^\top]}\right),
        \label{eq:gauge_function_of_Cquantum_matrix_form}
    \end{equation}
    where
    \begin{equation}
        L_{\theta} := \frac{1}{\sin^2\theta}
        \begin{pmatrix}
            1 & -\ep{\im\theta} \\
            -\ep{-\im\theta} & 1
        \end{pmatrix},
    \end{equation}
    or equivalently,
    \begin{equation}
        \gammaquantum (C) = \frac{1}{2} \left(\sqrt{\bs{c}^\top F_{\alpha,\beta} \bs{c}} + \sqrt{\bs{c}^\top F_{\alpha,-\beta} \bs{c}}\right),
    \end{equation}
    where
    \begin{equation}
        F_{\alpha,\beta} := \frac{1}{\sin^2\alpha \sin^2\beta}
                \begin{pmatrix}
                1 & -\cos\alpha & -\cos\beta & \cos(\alpha+\beta)\\
                -\cos\alpha & 1 & \cos(\alpha-\beta) & -\cos\beta\\
                -\cos\beta & \cos(\alpha-\beta) & 1 & -\cos\alpha\\
                \cos(\alpha+\beta) & -\cos\beta & -\cos\alpha & 1
                \end{pmatrix}.
    \end{equation}
\end{theorem}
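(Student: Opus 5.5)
We will obtain $\gammaquantum$ from the support function of Theorem~\ref{thm:support_function_of_Cquantum} through the duality relation \eqref{eq:gauge_function_as_a_dual_of_support_function}, $\gammaquantum(C)=\sup_{Z\neq 0}\Tr[Z^\top C]/\phiquantum(Z)$. The first step is a change of variables that reduces everything to $3\times 3$ matrices. Put $M:=A^\top Z B$. If $\ran C\subset\ran A$ and $\ran C^\top\subset\ran B$, then $C=AA^{+}C(B^\top)^{+}B^\top$. Writing $X:=A^{+}C(B^\top)^{+}$, we get $\Tr[Z^\top C]=\Tr[Z^\top A X B^\top]=\Tr[M^\top X]$. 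Thus the numerator and the denominator $\phiquantum(Z)=s_1+s_2-\epsilon s_3$ depend on $Z$ only through $M$, and $M$ ranges over $\{A^\top Z B\}=\ran(A^\top)\otimes\ran(B^\top)$ as $Z$ varies.

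If the range conditions fail, we choose $Z$ with $M=0$, $\Tr[Z^\top C]>0$ (possible since $C\notin\ran A\otimes\ran B$), giving $+\infty$. This is consistent with the $r\le2$ branch, since $\gammamax$ presumably carries the same range conditions.

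\textbf{Case $r=3$.} Here $A^\top$ and $B^\top$ are surjective onto $\mbb{R}^3$, so $M$ ranges over all of $\mcal{M}_3(\mbb{R})$. We then need the dual of the function $N(M):=s_1+s_2-\op{sgn}(\det M)s_3$, namely $N^{\circ}(X)=\sup_M \Tr[M^\top X]/N(M)$.

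By von Neumann's trace inequality refined for real matrices with determinant signs, we have $\sup_{O_1,O_2\in SO(3)}\Tr[(O_1\Sigma_M O_2)^\top X]=\sum_i \sigma_i\sigma'_i$ with the last term signed by $\op{sgn}\det M\cdot\op{sgn}\det X$. Equivalently, this is the signed-singular-value version of the trace inequality: maximizing over the special orthogonal group gives $\sigma_1\sigma'_1+\sigma_2\sigma'_2+\delta\sigma_3\sigma'_3$ with $\delta=\op{sgn}\det(M)\op{sgn}\det(X)$. The problem then reduces to a linear-fractional optimization over ordered vectors $s_1\ge s_2\ge s_3\ge0$ and a sign $\epsilon$:
\[
N^{\circ}(X)=\max_{\epsilon}\sup_s \frac{s_1s'_1+s_2s'_2+\epsilon\epsilon' s_3s'_3}{s_1+s_2-\epsilon s_3}.
\]
It suffices to check the extreme rays of the cone $s_1\ge s_2\ge s_3\ge0$, which are $(1,0,0)$, $(1,1,0)$ and $(1,1,1)$, for each $\epsilon$. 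These give the values
\[
s'_1,\qquad \tfrac{s'_1+s'_2}{2},\qquad s'_1+s'_2+\epsilon\epsilon's'_3\ \ (\text{from }\epsilon=-1\text{ with ratio denominator }1+1+1\text{ vs }1+1-1).
\]
Taking the maximum, the ray $(1,1,1)$ with $\epsilon=+1$ yields denominator $1$ and numerator $s'_1+s'_2+\epsilon's'_3$. Since $s'_1+s'_2+\epsilon' s'_3\ge s'_1$, the claimed formula follows.

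The main obstacle is justifying this signed trace inequality (the $SO(3)$ rather than $O(3)$ maximization) and the reduction to extreme rays. We handle both by quasi-convexity of the ratio in $s$ for fixed sign.

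\textbf{Case $r\le2$.} The image $\ran(A^\top)\otimes\ran(B^\top)$ then consists of matrices of rank at most $2$, so every admissible $M$ has $s_3=0$. Hence $\phiquantum(Z)=\|M\|_1=\phimax(Z)$ for all $Z$ by Theorem~\ref{thm:support_function_of_Cmax}. Equal support functions give equal closed convex sets, so $\Cquantum=\Cmax$ and the gauges coincide.

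\textbf{Case $m=2$.} Here $r\le2$ automatically, so $\gammaquantum=\gammamax$. It remains to dualize $\phiquantum(Z)=\sqrt{\bs z^\top K_{\alpha,\beta}\bs z+2|\det Z|\sin\alpha\sin\beta}$ explicitly. Write $|\det Z|=\max_\pm(\pm\det Z)$, so that $\phiquantum^2$ is the maximum of two quadratic forms $Q_\pm(\bs z)=\bs z^\top(K\pm\sin\alpha\sin\beta\,P)\bs z$, where $P$ represents $\det$ as a quadratic form. The set $\{\phiquantum\le1\}$ is then the intersection of two ellipsoids, and its polar is the convex hull of the two dual ellipsoids. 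The gauge of that convex hull should be given by the averaged form $\tfrac12(\sqrt{\bs c^\top F_{\alpha,\beta}\bs c}+\sqrt{\bs c^\top F_{\alpha,-\beta}\bs c})$, where $F_{\alpha,\pm\beta}$ are the inverses of $K\pm\sin\alpha\sin\beta P$ suitably recombined. We verify this by checking the identity $\Tr[L_\alpha C L_{\pm\beta}^\top C^\top]=\bs c^\top F_{\alpha,\pm\beta}\bs c$ and by directly computing $\|A^+C(B^\top)^+\|_1$ for rank-$2$ $2\times2$ blocks. For a $2\times2$ matrix, $\|X\|_1=\sqrt{\|X\|_F^2+2|\det X|}$, which in the chosen complex parametrization splits as the average of two square roots.
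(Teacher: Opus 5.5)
\textbf{Case $m\ge 3$.} This part is correct but follows a different route from the paper.

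The paper identifies $\{Y:\|Y\|_{-}\le 1\}=\op{conv}\mrm{SO}(3)$ via Saunderson et al.~\cite{saunderson2015semidefinite}. It then gets the upper bound from the asymmetric-norm duality (Lemma~\ref{lem:asymmetric_norm_duality}) and exhibits the maximizer $Z_\star$ explicitly.

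You instead use the signed trace inequality $\max_{U,V\in\mrm{SO}(3)}\Tr[(UMV)^\top X]=s_1s'_1+s_2s'_2+\epsilon\epsilon' s_3s'_3$. This reduces the problem to a linear-fractional program over the chamber $s_1\ge s_2\ge s_3\ge 0$, whose optimum lies on the ray $(1,1,1)$ with $\epsilon=+1$. On that ray the $\epsilon=-1$ value is $(s'_1+s'_2-\epsilon' s'_3)/3$, which is dominated. Your route avoids the convex-hull description of $\mrm{SO}(3)$.

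However, quasi-convexity only justifies restricting to extreme rays; it does not prove the signed trace inequality itself. That inequality is the Miranda--Thompson theorem and must be cited, or derived. For the needed upper bound, write
$\op{diag}(s_1,s_2,\epsilon s_3)=(s_1-s_2)\op{diag}(1,0,0)+(s_2-s_3)\op{diag}(1,1,0)+s_3\op{diag}(1,1,\epsilon)$.
Then bound the three pieces by $s'_1$, by $s'_1+s'_2$, and by $s'_1+s'_2+\epsilon\epsilon' s'_3$. The last value is $\max\Tr[X^\top Q]$ over $\mrm{SO}(3)$ or $\mrm{SO}^{-}(3)$, as computed in that lemma.

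\textbf{Case $m=2$.} Here there is a genuine error. You correctly note that $m=2$ forces $r\le 2$, hence $\gammaquantum=\gammamax$. But you then propose to verify the formula against $\|A^{+}C(B^\top)^{+}\|_1$.

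By Theorem~\ref{thm:gauge_function_of_Cmax}, $\gammamax$ is the \emph{operator} norm $\|A^{+}C(B^\top)^{+}\|_\infty$. The trace norm is $\gammasep$, whose $m=2$ form is the single square root in Theorem~\ref{thm:gauge_function_of_Csep}. Your claim that $\|X\|_1=\sqrt{\|X\|_F^2+2|\det X|}$ ``splits as the average of two square roots'' is false. For $X=I_2$ one has $\|X\|_1=2$, while the average form gives $\tfrac12(\sqrt{4}+\sqrt{0})=1$. So the verification as planned would fail.

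The correct identity is $\sigma_1(X)=\tfrac12\bigl(\sqrt{\|X\|_F^2+2|\det X|}+\sqrt{\|X\|_F^2-2|\det X|}\bigr)$. With it your route works, and it is more self-contained than the paper's, which imports the realizability criterion of Ref.~\cite{nogami2025necessary} and compares unit balls. The steps are:
\begin{itemize}
\item For $W=A^{+}C(B^\top)^{+}$, one has $s_1'^2+s_2'^2=\Tr[G_\alpha^{-1}CG_\beta^{-1}C^\top]$ and $s'_1s'_2=|\det C|/(\sin\alpha\sin\beta)$, exactly as in the proof of Theorem~\ref{thm:gauge_function_of_Csep}.
\item One has $F_{\alpha,\pm\beta}=G_\beta^{-1}\otimes G_\alpha^{-1}\mp(\sin\alpha\sin\beta)^{-1}J\otimes J$, with $J=\bigl(\begin{smallmatrix}0&1\\-1&0\end{smallmatrix}\bigr)$ and $\bs c^\top(J\otimes J)\bs c=2\det C$.
\item Hence $\bs c^\top F_{\alpha,\pm\beta}\bs c=s_1'^2+s_2'^2\mp 2\det C/(\sin\alpha\sin\beta)$, and the average of the two square roots is exactly $s'_1=\gammamax(C)$.
\end{itemize}

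The ellipsoid/polar sketch should be dropped. Your two quadratic forms equal $Q_\pm=(s_1\pm\op{sgn}(\det Z)\,s_2)^2$, so each vanishes on a nontrivial subspace and has no inverse. The assertion that the gauge of the convex hull ``should be given by'' the averaged form is not an argument.
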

\begin{proof}
    See Appendix~\ref{app:proof_gauge_Cquantum}.
\end{proof}

\begin{remark}\label{rem:gauge_qm_optimizer}
    When $r = 3$ and $\gammaquantum(C)$ is finite, the supremum in the duality relation
    \begin{equation}
        \gammaquantum(C) = \sup_{Z \neq 0} \frac{\Tr[Z^\top C]}{\phiquantum(Z)}
    \end{equation}
    is attained as a maximum.
    Let $A^{+} C\, (B^\top)^{+} = U \Sigma V^\top$ be a singular value decomposition.
    Then the maximum is achieved at $Z_{\star} = (A^\top)^{+}\, U\, \op{diag}(1,1,\eta)\, V^\top\, B^{+}$, where $\eta := \op{sgn}(\det(U V^\top))$.
    When $r \leq 2$, the result reduces to the case of $\Cmax$ (Remark~\ref{rem:gauge_max_optimizer}).
\end{remark}

\subsection{Gauge function of $\Cmax$}
\begin{theorem}[{Gauge function of $\Cmax$}]\label{thm:gauge_function_of_Cmax}
    The gauge function $\gammamax$ of $\Cmax$ is given by
    \begin{equation}
        \gammamax (C) = \begin{cases}
            \opnorm{ A^{+} C\, (B^\top)^{+}} = s'_1, & \text{if~} \ran C \subset \ran A,~ \ran C^\top \subset \ran B, \\
            +\infty, & \text{otherwise}.
        \end{cases}
    \end{equation}
\end{theorem}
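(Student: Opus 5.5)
We will derive $\gammamax$ from the support function of Theorem~\ref{thm:support_function_of_Cmax} through the duality relation \eqref{eq:gauge_function_as_a_dual_of_support_function}. The set $\Cmax$ is convex and compact, and it contains the origin because $\Phi_{AB}(\hat{I}/4)=0$. Hence
\begin{equation*}
\gammamax(C)=\sup_{Z:\,\trnorm{A^\top Z B}\le 1}\Tr[Z^\top C].
\end{equation*}
An equivalent and more transparent route is to describe $\Cmax$ directly. For a traceless correlation part we write $\hat\rho=\tfrac14(\hat I+\dots+\sum_{kl}t_{kl}\hat\sigma_k\otimes\hat\sigma_l)$, which gives $C=ATB^\top$. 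Taking local marginals to be zero and testing positivity against product states $\hat\rho_A\otimes\hat\rho_B$, one sees that the set of realizable $T$ contains the operator-norm unit ball $\{T:\opnorm{T}\le1\}$; this is attained by $\hat\rho$ with zero marginals. Every state in $\Smax$ satisfies $|\bs u^\top T\bs v|\le1$ for unit $\bs u,\bs v$, so $\opnorm{T}\le 1$. This matches Theorem~\ref{thm:support_function_of_Cmax}, because $\sup_{\opnorm{T}\le1}\Tr[Z^\top ATB^\top]=\trnorm{A^\top ZB}$. Therefore
\begin{equation*}
\Cmax=\{ATB^\top:\opnorm{T}\le1\}.
\end{equation*}

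Next we identify the gauge. If $\ran C\not\subset\ran A$ or $\ran C^\top\not\subset\ran B$, then no multiple of $C$ has the form $ATB^\top$, so $\gammamax(C)=+\infty$. On the dual side this is the statement that some $Z$ with $A^\top ZB=0$ has $\Tr[Z^\top C]>0$; such a $Z$ exists by taking $Z$ to be a projection onto $(\ran A)^\perp$ or $(\ran B)^\perp$. Otherwise $C=AA^+C(B^\top)^+B^\top$, because $AA^+$ and $BB^+=(B^\top)^+B^\top$ are the orthogonal projections onto $\ran A$ and $\ran B$. So $T_0:=A^+C(B^\top)^+$ is one admissible preimage, and
\begin{equation*}
\gammamax(C)=\min\{\opnorm{T}:ATB^\top=C\}.
\end{equation*}

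The main step is to show that $T_0$ minimizes the operator norm among all preimages. The general preimage is $T=T_0+N$ with $ANB^\top=0$. Let $P=A^+A$ and $Q=B^+B$ be the projections onto $\ran A^\top$ and $\ran B^\top$. Then $T_0=PT_0Q$, and $PNQ=0$, since $ANB^\top=0$ implies $A^+ANB^\top(B^\top)^+=0$. Consequently
\begin{equation*}
PTQ=T_0,\qquad \opnorm{T}\ge\opnorm{PTQ}=\opnorm{T_0}=s'_1.
\end{equation*}
This shows $\gammamax(C)=s'_1$. For the dual certificate (Remark-style), we take $Z_\star=(A^\top)^+\bs u_1\bs v_1^\top B^+$, where $\bs u_1,\bs v_1$ are the top singular vectors of $T_0$. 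Since $\bs u_1\in\ran P$ and $\bs v_1\in\ran Q$, we get $A^\top Z_\star B=\bs u_1\bs v_1^\top$, whose trace norm is $1$, and $\Tr[Z_\star^\top C]=s'_1$. This confirms that the supremum in the duality relation is attained.

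The delicate point is the realizability claim in the first paragraph, namely that every $T$ with $\opnorm{T}\le1$ and zero marginals gives an element of $\Smax$. We check it as follows. For pure product states with Bloch vectors $\bs x,\bs y$, the trace equals $\tfrac14(1+\bs x^\top T\bs y)\ge0$, and positivity on all of $\op{SEP}$ then follows by convexity. If one prefers to avoid this check, the dual formula with Theorem~\ref{thm:support_function_of_Cmax} already gives $\gammamax(C)=\sup\{\Tr[T_0^\top W]:\trnorm{W}\le1,\ W\in A^\top\MmR B\}$. The norm-duality argument applies here because the range $A^\top\MmR B$ equals $\{W: PWQ=W\}$.
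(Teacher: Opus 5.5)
Your proof is correct, but your main route is primal, whereas the paper's is purely dual. The paper starts from $\phimax(Z)=\trnorm{A^\top Z B}$ (Theorem~\ref{thm:support_function_of_Cmax}, whose proof rests on the cited extreme-point structure of $\Smax$). It substitutes $\tilde Z=A^\top Z B$, bounds $\Tr[\tilde Z^\top W]\le\trnorm{\tilde Z}\,\opnorm{W}$ by H\"older, and exhibits $Z_\star$ to show the bound is attained. You instead prove $\Cmax=\{ATB^\top:\opnorm{T}\le 1\}$ directly from block positivity. You then recast the gauge as a minimal-norm preimage problem and show that the Moore--Penrose preimage $T_0$ is optimal: every preimage satisfies $PTQ=T_0$, and compression by orthogonal projections does not increase the operator norm.

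Your route buys several things:
\begin{itemize}
\item It is independent of Theorem~\ref{thm:support_function_of_Cmax} and of the external extreme-point result of Arai et al. In fact it re-derives both that theorem and Theorem~\ref{thm:convex_hull_characterzation_of_Cmax}, since $\op{Conv}\,\mrm{O}(3)$ is the operator-norm unit ball.
\item It explains transparently why $A^{+}C(B^\top)^{+}$ is the right object: it is the compression of every preimage.
\item It gives an explicit boundary state realizing $C/\gammamax(C)$, namely the zero-marginal operator with $T=T_0/s'_1$.
\end{itemize}
The paper's route has its own advantages. It outputs the optimal coefficient matrix $Z_\star$ directly, which is what the witness and robustness interpretation needs. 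It also uses one template for all three sets, including $\Cquantum$, where compression is not contractive for $\|\cdot\|_{+}$ when $r\le 2$. Since you also supply $Z_\star$, nothing is lost.

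Two small points to tighten. First, the bound $|\bs u^\top T\bs v|\le 1$ for states in $\Smax$ needs its one-line justification. Each operator $\hat I_4\pm(\bs u\cdot\hat{\bs\sigma})\otimes(\bs v\cdot\hat{\bs\sigma})$ equals twice a sum of two tensor products of rank-one projectors, so it lies in $\op{SEP}$, giving $1\pm\bs u^\top T\bs v\ge 0$. Second, in your dual-side remark the choice $Z=P_{(\ran A)^\perp}$ need not give $\Tr[Z^\top C]>0$. Take $Z=P_{(\ran A)^\perp}C$ (or $C\,P_{(\ran B)^\perp}$) instead. Then $A^\top Z B=0$ and $\Tr[Z^\top C]=\Tr[C^\top P_{(\ran A)^\perp}C]>0$. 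Your primal argument for the infinite case does not depend on this remark.
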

\begin{proof}
    See Appendix~\ref{app:proof_gauge_Cmax}.
\end{proof}

\begin{remark}\label{rem:gauge_max_optimizer}
    When $\gammamax(C)$ is finite, the supremum in the duality relation
    \begin{equation}
        \gammamax(C) = \sup_{Z \neq 0} \frac{\Tr[Z^\top C]}{\phimax(Z)}
    \end{equation}
    is attained as a maximum.
    Let $A^{+} C\, (B^\top)^{+} = U \Sigma V^\top$ be a singular value decomposition, and let $\bs{u}_1, \bs{v}_1$ denote the first columns of $U, V$ (corresponding to the largest singular value $s'_1$).
    Then the maximum is achieved at $Z_{\star} = (A^\top)^{+}\, \bs{u}_1 \bs{v}_1^\top\, B^{+}$.
\end{remark}

\section{Convex hull characterizations of the correlation sets}\label{sec:convex_hull_characterization}

The support function theorems (Theorems~\ref{thm:support_function_of_Csep}--\ref{thm:support_function_of_Cmax}) yield the following geometric characterizations of the three correlation sets.
We define
\begin{align}
    \mrm{O}(n) &:= \{Q \in \mcal{M}_n(\mbb{R}) \,:\, Q^\top Q = I\}, \\
    \mrm{SO}(n) &:= \{Q \in \mrm{O}(n) \,:\,  \det Q = +1\},\\
    \mrm{SO}^{-}(n) &:= \{Q \in \mrm{O}(n) \,:\,  \det Q = -1\}.
\end{align}

\begin{theorem}[{Convex hull characterization of $\Csep$}]{\label{thm:convex_hull_characterization_of_Csep}}
    \begin{equation}
        \Csep = \op{Conv} \{A\, \bs{r}_A\bs{r}_B^\top\, B^{\top} \,:\, \bs{r}_A, \bs{r}_B \in \mbb{R}^3,\, \|\bs{r}_A\| = \|\bs{r}_B\| = 1\}.
    \end{equation}
\end{theorem}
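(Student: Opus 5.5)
The plan is to prove the characterization directly from the definition of $\Ssep$ and the linearity of $\Phi_{AB}$, and then to cross-check it against Theorem~\ref{thm:support_function_of_Csep}. Write $\mcal{E} := \{A\, \bs{r}_A\bs{r}_B^\top\, B^{\top} : \|\bs{r}_A\| = \|\bs{r}_B\| = 1\}$.

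First, I compute $\Phi_{AB}$ on product states. For qubit states $\hat{\rho}_A = \tfrac12(\hat{I} + \bs{x}\cdot\hat{\bs{\sigma}})$ and $\hat{\rho}_B = \tfrac12(\hat{I}+\bs{y}\cdot\hat{\bs{\sigma}})$ with Bloch vectors $\|\bs{x}\|,\|\bs{y}\|\le 1$, we have
\[
\Tr[(\hat{\rho}_A\otimes\hat{\rho}_B)(\hat{A}_i\otimes\hat{B}_j)] = (\bs{a}_i\cdot\bs{x})(\bs{b}_j\cdot\bs{y}),
\]
so $\Phi_{AB}(\hat{\rho}_A\otimes\hat{\rho}_B) = A\bs{x}\bs{y}^\top B^\top$. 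Since $\Ssep$ is the convex hull of product states and $\Phi_{AB}$ is linear, it follows that
\[
\Csep = \op{Conv}\{A\bs{x}\bs{y}^\top B^\top : \|\bs{x}\|,\|\bs{y}\|\le 1\}.
\]

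Next, I reduce from the closed unit ball to unit vectors. For fixed $\bs{y}$, the map $\bs{x}\mapsto A\bs{x}\bs{y}^\top B^\top$ is linear. Every $\bs{x}$ in the closed unit ball is a convex combination of two unit vectors, for example $\bs{x} = \lambda \bs{u} + (1-\lambda)(-\bs{u})$ with $\bs{u}=\bs{x}/\|\bs{x}\|$ (or any unit $\bs{u}$ if $\bs{x}=0$) and $\lambda = (1+\|\bs{x}\|)/2$. Therefore $A\bs{x}\bs{y}^\top B^\top$ lies in the convex hull of the corresponding matrices with $\|\bs{x}\|=1$. Applying the same argument in $\bs{y}$ shows that every generator lies in $\op{Conv}\mcal{E}$. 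Conversely, $\mcal{E}$ is contained in the set of generators, so the two convex hulls coincide. Pure product states realize exactly the unit Bloch vectors, which also identifies which states realize the extreme points.

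Finally, as a sanity check I compare with Theorem~\ref{thm:support_function_of_Csep}. We have $\sup_{\mcal{E}} \Tr[Z^\top A\bs{r}_A\bs{r}_B^\top B^\top] = \sup \bs{r}_A^\top A^\top Z B\bs{r}_B = \opnorm{A^\top ZB}$. Since $\mcal{E}$ is compact, being the continuous image of $S^2\times S^2$, its convex hull is compact and hence closed. So the support-function characterization of closed convex sets also yields the result: $\Csep$ is closed (it is a linear image of the compact set $\Ssep$), convex, and has the same support function as $\op{Conv}\mcal{E}$. I expect no real obstacle. The only points needing care are the closedness of the convex hull, which holds by Carathéodory's theorem applied to a compact set in finite dimension, and the reduction from the ball to the sphere.
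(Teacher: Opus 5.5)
Your proposal is correct, but your main argument differs from the paper's. The paper's proof is exactly what you call the sanity check. It computes the support function of $K=\op{Conv}\{A\bs{r}_A\bs{r}_B^\top B^\top\}$ as $\max_{\|\bs{r}_A\|=\|\bs{r}_B\|=1}\bs{r}_A^\top(A^\top ZB)\bs{r}_B=\opnorm{A^\top ZB}$. It matches this with $\phisep$ from Theorem~\ref{thm:support_function_of_Csep}, and concludes $\Csep=K$ because closed convex sets are determined by their support functions.

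Your primary route works at the level of sets. You push the convex-hull description of $\Ssep$ through the linear map $\Phi_{AB}$, using $\Phi_{AB}(\op{Conv}S)=\op{Conv}\Phi_{AB}(S)$. You then replace the Bloch balls by spheres via $\bs{x}=\lambda\bs{u}+(1-\lambda)(-\bs{u})$, applied separately in each factor. This is legitimate because the generator map is bilinear in $(\bs{x},\bs{y})$.

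Your route is more elementary and self-contained. It gives set equality outright, with no appeal to closedness, to the separation argument behind support-function uniqueness, or to Theorem~\ref{thm:support_function_of_Csep}. Note that the proof of that theorem itself relies on the extreme points of $\Ssep$ being pure product states; your ball-to-sphere step replaces this fact directly at the level of correlation matrices.

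What the paper's route buys is uniformity. The same support-function template proves the characterizations of $\Cquantum$ and $\Cmax$, where a direct argument like yours would be much harder. There one would have to show, for instance, that the correlation tensor of every two-qubit pure state lies in $\op{Conv}(\mrm{SO}^{-}(3))$, or work explicitly with the extreme points of $\Smax$.
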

\begin{proof}
    See Appendix~\ref{app:proof_convex_hull_Csep}.
\end{proof}

\begin{theorem}[{Convex hull characterization of $\Cquantum$}]{\label{thm:convex_hull_characterization_of_Cquantum}}
    \begin{equation}
        \Cquantum = \op{Conv} \{A Q B^{\top} \,:\, Q \in \mrm{SO}^{-}(3)\}.
    \end{equation}
\end{theorem}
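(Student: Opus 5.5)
Both sides are compact convex subsets of $\MmR$, so the plan is to show that they have the same support function and then conclude equality from the characterization $K=\bigcap_Z\{C:\bra Z,C\ket_{\HS}\le\phi_K(Z)\}$. The left-hand side $\Cquantum$ is the image of the compact convex set $\Squantum$ under the linear map $\Phi_{AB}$, so it is compact and convex. The right-hand side is the convex hull of the continuous image of the compact group component $\mrm{SO}^{-}(3)$, so it is compact as well. The support function of a convex hull equals the supremum over the generators, so for the right-hand side we obtain
\begin{equation}
    \sup_{Q\in \mrm{SO}^{-}(3)} \Tr[Z^\top A Q B^\top] = \sup_{Q\in \mrm{SO}^{-}(3)} \Tr[(A^\top Z B)^\top Q].
\end{equation}
It therefore suffices to show that this equals $s_1+s_2-\epsilon s_3$, which is $\phiquantum(Z)$ by Theorem~\ref{thm:support_function_of_Cquantum}.

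The key step is a classical fact: the maximum of $\Tr[M^\top Q]$ over $Q\in\mrm{SO}^{-}(3)$. Write the SVD $M=A^\top ZB=U\Sigma V^\top$ with $U,V\in\mrm{O}(3)$, so that $\Tr[M^\top Q]=\Tr[\Sigma\, U^\top Q V]$. As $Q$ ranges over $\mrm{SO}^{-}(3)$, the matrix $R=U^\top QV$ ranges over the orthogonal matrices with $\det R=-\det(UV^\top)$. The quantity $\Tr[\Sigma R]=\sum_k s_k R_{kk}$ is a linear function of the diagonal of $R$. By the Horn/von Neumann-type result, the diagonals of $\mrm{SO}(3)$ fill the convex hull of sign vectors with an even number of minus signs, and the diagonals of $\mrm{SO}^{-}(3)$ fill the convex hull of those with an odd number. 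I would either cite this or prove the needed inequality directly.

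The required determinant sign is $\det R=-\det(UV^\top)=-\op{sgn}\det M$ when $M$ is nonsingular, i.e.\ $-\epsilon$. Maximizing over the appropriate sign vectors then gives $s_1+s_2+s_3$ when an all-plus diagonal is allowed ($\epsilon=-1$), and $s_1+s_2-s_3$ otherwise ($\epsilon=+1$). Both cases are captured by $s_1+s_2-\epsilon s_3$. If $M$ is singular, then $s_3=0$, and since $U$ or $V$ can be chosen with either determinant sign, the value is $s_1+s_2$ for any convention for $\op{sgn}0$.

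The step I expect to be the main obstacle is justifying the upper bound $\sum s_k R_{kk}\le s_1+s_2-s_3$ for $R\in\mrm{SO}^{-}(3)$ when we need $\det R=-1$ but $M$ has positive determinant. My first attempt would use Abel summation: $\sum s_kR_{kk}=(s_1-s_2)R_{11}+(s_2-s_3)(R_{11}+R_{22})+s_3(R_{11}+R_{22}+R_{33})$. Then I would bound each partial sum by $1$, $2$, and $1$ respectively. The last bound holds because $\Tr R=-1-\ldots$; more precisely, a rotoreflection has eigenvalues $-1,e^{\pm\im\theta}$, so $\Tr R=-1+2\cos\theta\le 1$. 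Attainment at $R=\op{diag}(1,1,-1)$ gives equality, which completes the argument.
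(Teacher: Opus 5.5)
Your proposal is correct and follows the paper's proof: you identify the support function of $\op{Conv}\{AQB^\top : Q\in\mrm{SO}^{-}(3)\}$ with $\max_{Q\in\mrm{SO}^{-}(3)}\Tr[(A^\top ZB)^\top Q]$ and match it to $\phiquantum(Z)=s_1+s_2-\epsilon s_3$ from Theorem~\ref{thm:support_function_of_Cquantum}, exactly as the paper does. The one difference is that the paper cites Umeyama's Procrustes formula over $\mrm{SO}(3)$ and transfers it to $\mrm{SO}^{-}(3)$ via $Q=RJ$ with $J=\op{diag}(1,1,-1)$, whereas your SVD reduction with Abel summation and the rotoreflection bound $\Tr R\le 1$ proves that maximum directly, so your argument is self-contained.
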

\begin{proof}
    See Appendix~\ref{app:proof_convex_hull_Cquantum}.
\end{proof}

\begin{theorem}[{Convex hull characterization of $\Cmax$}]{\label{thm:convex_hull_characterzation_of_Cmax}}
    \begin{equation}
        \Cmax = \op{Conv} \{A Q B^{\top} \,:\, Q \in \mrm{O}(3)\}.
    \end{equation}
\end{theorem}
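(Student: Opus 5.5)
Both sides are compact convex subsets of $\MmR$, so we will show that their support functions agree and then use the fact that a closed convex set is the intersection of its supporting half-spaces. The left side $\Cmax$ is compact and convex because $\Smax$ is compact and convex (it is a closed, bounded slice of the cone $\op{SEP}^*$) and $\Phi_{AB}$ is linear. The right side is compact because $\mrm{O}(3)$ is compact and the convex hull of a compact set in finite dimension is compact. By Theorem~\ref{thm:support_function_of_Cmax}, $\phimax(Z)=\trnorm{A^\top Z B}$. It therefore suffices to compute the support function of $K:=\op{Conv}\{AQB^\top : Q\in\mrm{O}(3)\}$.

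For any $Q\in\mrm{O}(3)$ we have
\begin{equation}
\Tr[Z^\top A Q B^\top]=\Tr[(A^\top Z B)^\top Q].
\end{equation}
The support function of a convex hull equals the supremum over the generating set, so
\begin{equation}
\phi_K(Z)=\max_{Q\in\mrm{O}(3)}\Tr[(A^\top ZB)^\top Q].
\end{equation}
By von Neumann's trace inequality, or directly from the SVD $A^\top Z B=U\Sigma V^\top$ with $Q=UV^\top$, this maximum equals $s_1+s_2+s_3=\trnorm{A^\top Z B}$. This is the classical fact that the unit ball of the operator norm is the convex hull of $\mrm{O}(3)$, which is dual to the trace norm. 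Hence $\phi_K=\phimax$ for every $Z$, and the two closed convex sets coincide.

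A more constructive route, which also exhibits the realizing states, runs as follows. Every traceless-type operator $\hat\rho=\tfrac14(I\otimes I+\dots+\sum_{kl}T_{kl}\hat\sigma_k\otimes\hat\sigma_l)$ satisfies $\Phi_{AB}(\hat\rho)=ATB^\top$. The local-marginal terms drop out because the observables are traceless. For $Q\in\mrm{O}(3)$, the operator $\tfrac14(I\otimes I+\sum Q_{kl}\hat\sigma_k\otimes\hat\sigma_l)$ lies in $\Smax$. To check this, pair it with product states $\hat\rho_A\otimes\hat\rho_B$ with unit Bloch vectors: the pairing is $\tfrac14(1+\bs r_A^\top Q\bs r_B)\ge 0$ since $\|Q\|_\infty=1$. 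This gives $K\subset\Cmax$ directly. The reverse inclusion is exactly where the support-function theorem is needed, and this is the main obstacle: showing $\Cmax\subset K$ amounts to proving that the correlation tensor of any state in $\Smax$ has operator norm at most $1$. That bound is the content of Theorem~\ref{thm:support_function_of_Cmax}. Since that theorem may be assumed, the support-function comparison above settles the inclusion without further work.
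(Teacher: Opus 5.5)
Your proof is correct and follows the paper's argument exactly: compute $\phi_K(Z)=\max_{Q\in\mrm{O}(3)}\Tr[(A^\top ZB)^\top Q]=\trnorm{A^\top ZB}$ and match it with Theorem~\ref{thm:support_function_of_Cmax}. On your constructive route, the reverse inclusion is not really an obstacle: adding the block-positivity inequalities at $(\bs u,\bs v)$ and $(-\bs u,-\bs v)$ cancels the local Bloch terms and gives $\bs u^\top T\bs v\ge -1$ for all unit $\bs u,\bs v$, hence $\opnorm{T}\le 1$ and $T\in\op{Conv}\,\mrm{O}(3)$, so $\Phi_{AB}(\hat\rho)=ATB^\top$ lies in the hull without invoking the support-function theorem.
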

\begin{proof}
    See Appendix~\ref{app:proof_convex_hull_Cmax}.
\end{proof}

\begin{remark}[Physical states realizing extreme points]\label{rem:extreme_point_states}

We write a general two-qubit self-adjoint operator in the Pauli basis as
\begin{equation}\label{eq:pauli_expansion}
    \hat{\rho}
    = \frac{1}{4}\biggl(
        \hat{I}_4
        + \sum_{i=1}^{3} r_{A,i}\, \hat{\sigma}_i \otimes \hat{I}_2
        + \sum_{j=1}^{3} r_{B,j}\, \hat{I}_2 \otimes \hat{\sigma}_j
        + \sum_{i,j=1}^{3} T_{ij}\, \hat{\sigma}_i \otimes \hat{\sigma}_j
    \biggr),
\end{equation}
where $\bs{r}_A, \bs{r}_B \in \mbb{R}^3$ are the local Bloch vectors and $T = (T_{ij}) \in \mcal{M}_3(\mbb{R})$ is the correlation matrix in the Pauli basis.

    The extreme points of each correlation set are realized by specific classes of two-qubit states:
    \begin{enumerate}
        \item \emph{Separable correlation set.}
        A pure product state with Bloch vectors $\bs{r}_A, \bs{r}_B \in \mbb{R}^3$ satisfying $\|\bs{r}_A\| = \|\bs{r}_B\| = 1$ has the density operator
        \begin{equation}
            \hat{\rho} = \frac{1}{2}(\hat{I}_2 + \bs{r}_A \cdot \hat{\bs{\sigma}}) \otimes \frac{1}{2}(\hat{I}_2 + \bs{r}_B \cdot \hat{\bs{\sigma}})
            = \frac{1}{4}\biggl(\hat{I}_4 + \sum_{i=1}^{3} r_{A,i}\, \hat{\sigma}_i \otimes \hat{I}_2 + \sum_{j=1}^{3} r_{B,j}\, \hat{I}_2 \otimes \hat{\sigma}_j + \sum_{i,j=1}^{3} r_{A,i}\, r_{B,j}\, \hat{\sigma}_i \otimes \hat{\sigma}_j\biggr).
        \end{equation}
        The Pauli-basis correlation matrix is therefore $T_{ij} = r_{A,i}\, r_{B,j}$, i.e., $T = \bs{r}_A \bs{r}_B^\top$, which gives the correlation matrix $C = A\bs{r}_A \bs{r}_B^\top B^\top$.
        Therefore, pure product states give the extreme points $A\bs{r}_A\bs{r}_B^\top B^\top$ of $\Csep$.

        \item \emph{Quantum correlation set.}
        Every maximally entangled two-qubit state can be written, up to a global phase, as $(\hat{I}_2 \otimes \hat{U}){|\Phi^+\ket}$ for some $\hat{U} \in \mrm{SU}(2)$.
        Writing $R := \op{diag}(1,-1,1)$ for the correlation tensor of ${|\Phi^+\ket}$, the density operator of this state is
        \begin{equation}
            (\hat{I}_2 \otimes \hat{U}){|\Phi^+\ket}\!\bra{\Phi^+}|(\hat{I}_2 \otimes \hat{U}^\dagger)
            = \frac{1}{4}\biggl(\hat{I}_4 + \sum_{i,j=1}^{3} R_{ij}\, \hat{\sigma}_i \otimes \hat{U}\hat{\sigma}_j\hat{U}^\dagger\biggr).
        \end{equation}
        By the $\mrm{SU}(2) \to \mrm{SO}(3)$ homomorphism, there exists $Q^U \in \mrm{SO}(3)$ such that $\hat{U}\hat{\sigma}_j\hat{U}^\dagger = \sum_{k=1}^{3} Q^U_{jk}\, \hat{\sigma}_k$.
        Substituting this gives
        \begin{equation}
            \hat{\rho} = \frac{1}{4}\biggl(\hat{I}_4 + \sum_{i,k=1}^{3} Q_{ik}\, \hat{\sigma}_i \otimes \hat{\sigma}_k\biggr), \quad Q := R\, Q^U.
        \end{equation}
        Since $R \in \mrm{SO}^{-}(3)$ and $Q^U \in \mrm{SO}(3)$, we have $Q \in \mrm{SO}^{-}(3)$.
        Conversely, every $Q \in \mrm{SO}^{-}(3)$ arises this way by choosing $Q^U = R^{-1} Q \in \mrm{SO}(3)$.
        Therefore, maximally entangled states take the form
        \begin{equation}
            \hat{\rho} = \frac{1}{4}\biggl(\hat{I}_4 + \sum_{i,j=1}^{3} Q_{ij}\, \hat{\sigma}_i \otimes \hat{\sigma}_j\biggr), \quad Q \in \mrm{SO}^{-}(3),
        \end{equation}
        and produce the correlation matrix $C = AQB^\top$.
        Conversely, by Proposition~\ref{prop:rigidity} below, maximally entangled states are the \emph{only} quantum states that realize these correlations: any density operator with an orthogonal correlation matrix $T \in \mrm{O}(3)$ must have vanishing local Bloch vectors $\bs{r}_A = \bs{r}_B = \bs{0}$, and is therefore a maximally entangled pure state when $\det T = -1$.

        \item \emph{Maximal correlation set.}
        Similarly, states of the form $\hat{\rho} = \frac{1}{4}(\hat{I}_4 + \sum_{i,j} Q_{ij}\, \hat{\sigma}_i \otimes \hat{\sigma}_j)$ with $Q \in \mrm{O}(3)$ produce $C = AQB^\top$.
        When $\det Q = +1$, such a state is not a valid quantum state in general (it may fail positivity), but it belongs to the maximal tensor product state space $\Smax$.
        Conversely, again by Proposition~\ref{prop:rigidity}, these are the \emph{only} block-positive states that realize the corresponding correlations: any block-positive state with $T \in \mrm{SO}(3)$ must have $\bs{r}_A = \bs{r}_B = \bs{0}$.
    \end{enumerate}
\end{remark}

\begin{proposition}[Rigidity of extremal correlations]\label{prop:rigidity}
    Let $\hat{\rho}$ be a normalized block-positive two-qubit operator (i.e., $\hat{\rho} \in \Smax$) with Pauli expansion~\eqref{eq:pauli_expansion}.
    If $T \in \mrm{O}(3)$, then $\bs{r}_A = \bs{r}_B = \bs{0}$.
    In particular, any block-positive state with an orthogonal correlation matrix is uniquely determined by $T$ and takes the form
    \begin{equation}
        \hat{\rho} = \frac{1}{4}\biggl(\hat{I}_4 + \sum_{i,j=1}^{3} T_{ij}\, \hat{\sigma}_i \otimes \hat{\sigma}_j\biggr).
    \end{equation}
\end{proposition}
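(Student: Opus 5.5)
Block positivity means $\Tr[\hat\rho\,(\hat P_{\bs{x}}\otimes \hat P_{\bs{y}})]\ge 0$ for all pure-state projectors $\hat P_{\bs{x}} = \frac12(\hat I_2+\bs{x}\cdot\hat{\bs\sigma})$ and $\hat P_{\bs{y}}$ with unit $\bs x,\bs y\in\mbb R^3$; this suffices because $\op{SEP}$ is generated by such products. The plan is to turn this into an inequality on the Bloch data and then exploit $T\in\mrm O(3)$. A direct computation from~\eqref{eq:pauli_expansion} gives
\begin{equation}
    4\Tr[\hat\rho\,(\hat P_{\bs{x}}\otimes \hat P_{\bs{y}})] = 1 + \bs r_A\cdot\bs x + \bs r_B\cdot\bs y + \bs x^\top T\bs y \;\ge\; 0
\end{equation}
for all unit vectors $\bs x,\bs y$.

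The key step is to choose $\bs y$ so that the correlation term is as negative as possible. Since $T$ is orthogonal, for any unit $\bs x$ we set $\bs y := -T^\top\bs x$, which is again a unit vector and satisfies $\bs x^\top T\bs y = -\bs x^\top TT^\top\bs x = -1$. The inequality then reduces to
\begin{equation}
    \bs r_A\cdot\bs x - \bs r_B\cdot T^\top \bs x \ge 0,
\end{equation}
that is, $(\bs r_A - T\bs r_B)\cdot\bs x \ge 0$ for every unit $\bs x$. Replacing $\bs x$ by $-\bs x$ (note $-\bs x$ paired with $\bs y = T^\top\bs x$ is also admissible) forces equality, and hence $\bs r_A = T\bs r_B$.

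This alone does not give vanishing Bloch vectors, so the main obstacle is to extract a second-order condition. The plan is to perturb around the saturating pair. Take $\bs y = -T^\top\bs x'$ with $\bs x'\neq\bs x$ both unit, so that $\bs x^\top T\bs y = -\bs x\cdot\bs x'$. Using $\bs r_B\cdot\bs y = -(T\bs r_B)\cdot\bs x' = -\bs r_A\cdot\bs x'$, we obtain
\begin{equation}
    1 - \bs x\cdot\bs x' + \bs r_A\cdot(\bs x-\bs x') \ge 0 .
\end{equation}
Now put $\bs x' = \cos t\,\bs x + \sin t\,\bs e$ with $\bs e\perp\bs x$ a unit vector. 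For small $t$, the quantity $1-\bs x\cdot\bs x' = 1-\cos t$ is $O(t^2)$, whereas $\bs r_A\cdot(\bs x-\bs x') = -t\,\bs r_A\cdot\bs e + O(t^2)$. Letting $t\to 0^{\pm}$ forces $\bs r_A\cdot\bs e = 0$ for every $\bs e\perp\bs x$.

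Since $\bs x$ is arbitrary, $\bs r_A$ is orthogonal to every vector in $\mbb R^3$, so $\bs r_A = \bs 0$. Then $\bs r_B = T^\top\bs r_A = \bs 0$. The uniqueness statement follows immediately, because $\hat\rho$ is then fully determined by $T$ through~\eqref{eq:pauli_expansion}.

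For the quantum case with $\det T=-1$, it remains to show that $\hat\rho$ is pure and maximally entangled. I will argue this via local unitaries: they act on $T$ by $\mrm{SO}(3)\times\mrm{SO}(3)$, so $T$ can be brought to $-\hat I$. The corresponding operator $\frac14(\hat I_4-\sum_i\hat\sigma_i\otimes\hat\sigma_i)$ is exactly the singlet projector, which is pure and maximally entangled.
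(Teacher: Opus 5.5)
Your proof is correct and is essentially the paper's: the same Bloch-form block-positivity inequality, the same saturating choice $\bs y=-T^\top\bs x$ yielding $\bs r_A=T\bs r_B$, and the same reduced inequality $1-\bs x\cdot\bs x'+\bs r_A\cdot(\bs x-\bs x')\ge 0$, which is the paper's Step~2 with $\bs w=-\bs x'$. The only difference is the final step: the paper minimizes exactly over the second unit vector and reaches a contradiction by squaring, while you obtain $\bs r_A=\bs 0$ from a first-order variation about the saturating pair, and both are valid; your closing paragraph on $\det T=-1$ is correct but goes beyond the proposition, supporting the claim in the paper's Remark instead.
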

\begin{proof}
    See Appendix~\ref{app:rigidity}.
\end{proof}

\section{Detection of Entanglement}\label{sec:detection_of_entanglement}
\subsection{Entanglement detection via the support function $\phisep$}

The support function $\phisep$ provides a direct criterion for entanglement detection.
If a correlation matrix $C$ satisfies $\Tr[Z^\top C] > \phisep(Z)$ for some $Z \in \MmR$, then $C \notin \Csep$, and hence the underlying quantum state must be entangled.

This criterion admits a natural operator-theoretic formulation.
Define the Bell operator $\hat{S}_{AB}(Z) := \sum_{i,j} z_{ij}\, \hat{A}_i \otimes \hat{B}_j$ and the self-adjoint operator
\begin{equation}
    \hat{W}_{\mrm{ent}}(Z) := \hat{I}_4 - \frac{\hat{S}_{AB}(Z)}{\phisep(Z)}.
\end{equation}
For any separable state $\hat{\rho} \in \Ssep$, the correlation matrix $C = \Phi_{AB}(\hat{\rho})$ satisfies $\Tr[Z^\top C] \leq \phisep(Z)$ by definition of the support function, so
\begin{equation}
    \Tr[\hat{\rho}\, \hat{W}_{\mrm{ent}}(Z)] = 1 - \frac{\Tr[\hat{\rho}\, \hat{S}_{AB}(Z)]}{\phisep(Z)} = 1 - \frac{\Tr[Z^\top C]}{\phisep(Z)} \geq 0.
    \label{eq:entanglement_witness}
\end{equation}
As shown in the following subsection, when $\rank A, \rank B \geq 2$, there exist a matrix $Z \in \MmR$ and an entangled state $\hat{\rho} \in \Squantum$ for which $\Tr[\hat{\rho}\, \hat{W}_{\mrm{ent}}(Z)] < 0$. Therefore, $\hat{W}_{\mrm{ent}}(Z)$ serves as an entanglement witness~\cite{terhal2000bell,lewenstein2000optimization}.

\subsection{The gauge function $\gammasep$ as a robustness measure for entanglement detection}
Equation~\eqref{eq:entanglement_witness} shows that entanglement is detected whenever the ratio $\Tr[Z^\top C]/\phisep (Z)$ exceeds unity; the larger this ratio, the stronger the detection signal.
In an experimental setting, the coefficient matrix $Z$ is not a physical parameter but a weight assigned during statistical post-processing of the measured correlations, and can therefore be freely optimized.
We thus define the detection sensitivity for a given correlation matrix $C$ as the supremum of this ratio over all nonzero $Z$.
The duality relation~\eqref{eq:gauge_function_as_a_dual_of_support_function} then identifies the detection sensitivity with the gauge function of $\Csep$:
\begin{proposition}\label{thm:state_dependent_detection}
    For any $C \in \MmR$,
    \begin{equation}
        \sup_{Z \neq 0} \frac{\Tr[Z^\top C]}{\phisep (Z)}
        = \gammasep(C).
        \label{eq:state-dependent detection sensitivity}
    \end{equation}
\end{proposition}
\begin{remark}\label{rem:max_attained}
    The duality relation~\eqref{eq:gauge_function_as_a_dual_of_support_function} gives the right-hand side as a supremum.
    When $\gammasep(C) < +\infty$ (equivalently, $\ran C \subset \ran A$ and $\ran C^\top \subset \ran B$), the supremum is in fact attained, which justifies writing $\max$ in~\eqref{eq:state-dependent detection sensitivity} in this case.
    To see this, note that the numerator $\Tr[Z^\top C]$ and the denominator $\phisep(Z)$ are both positively homogeneous of degree one in $Z$, so the ratio is positively homogeneous of degree zero: replacing $Z$ by $\lambda Z$ ($\lambda > 0$) leaves the ratio unchanged.
    The supremum over $Z \neq 0$ therefore equals the supremum over the unit sphere $\{Z \in \MmR : \|Z\|_{\HS} = 1\}$, which is compact.
    Since $\phisep$ is continuous by~\eqref{eq:support_function_lipschitz} and positive on the unit sphere (because $\Csep$ has nonempty interior), the ratio $\Tr[Z^\top C]/\phisep(Z)$ is continuous on this compact set, and therefore attains its supremum by the extreme value theorem.
\end{remark}

The gauge function $\gammasep(C)$ also admits a transparent interpretation as a noise-robustness measure.
By definition, $\gammasep(C) = \inf\{t > 0 : C/t \in \Csep\}$, so $C/\gammasep(C)$ lies on the boundary of $\Csep$. Since the completely mixed state $\hat{I}/4$ produces the zero correlation matrix, mixing a state with the completely mixed state simply rescales the correlations:
\begin{equation}
    (1-p)\, C + p \cdot 0 = (1-p)\, C.
\end{equation}
The rescaled correlation $(1-p)\,C$ belongs to $\Csep$ if and only if $(1-p) \leq 1/\gammasep(C)$, i.e., $p \geq 1 - 1/\gammasep(C)$. Therefore, entanglement remains detectable through correlation measurements as long as the noise fraction satisfies $p < 1 - 1/\gammasep(C)$. In this sense, $\gammasep(C)$ quantifies the tolerance of the correlation $C$ to depolarizing noise: the larger $\gammasep(C)$, the more noise $C$ can withstand before entanglement becomes undetectable.

The necessary and sufficient condition $\gammasep(C) \leq 1$ for $C \in \Csep$ implies that for any separable density operator $\hat{\rho} \in \Ssep$, the correlation $C = \Phi_{AB}(\hat{\rho})$ must satisfy
\begin{equation}
    \trnorm{A^{+} C\, (B^\top)^{+}} \leq 1.
    \label{eq:CCN_analog}
\end{equation}
This bears a resemblance to the computable cross-norm
criterion of Rudolph~\cite{Rudolph2004}, which states that
any separable density operator $\hat{\rho} \in \Ssep$ must satisfy
$\|\mathfrak{A}(\hat{\rho})\|_1 \leq 1$, where $\mathfrak{A}$ denotes
the realignment map.

\subsection{Fundamental limit of detection sensitivity}

When arbitrary quantum states may be prepared, the fundamental limit of detection sensitivity under fixed measurement settings $A$ and $B$ is obtained by maximizing $\gammasep$ over all quantum-realizable correlations:
\begin{equation}
    \sup_{C \in \Cquantum} \gammasep(C)
    = \sup_{C \in \Cquantum} \max_{Z \neq 0} \frac{\Tr[Z^\top C]}{\phisep(Z)}
    = \max_{Z \neq 0} \frac{\phiquantum(Z)}{\phisep(Z)}.
\end{equation}
\begin{theorem}\label{thm:maximum_ratio_of_support_functions}
    Let $r := \min\{\rank A, \rank B\}$. Then
    \begin{equation}
        \max_{Z \neq 0} \frac{\phiquantum(Z)}{\phisep(Z)} = \begin{cases}
            3 & \text{if~} r=3, \\
            2 & \text{if~} r=2, \\
            1 & \text{if~} r=1.
        \end{cases}
    \end{equation}
\end{theorem}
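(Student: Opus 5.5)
By Theorems~\ref{thm:support_function_of_Csep} and~\ref{thm:support_function_of_Cquantum}, both support functions depend on $Z$ only through $M := A^\top Z B \in \mcal{M}_3(\mbb{R})$. The ratio is therefore
\begin{equation*}
    \frac{\phiquantum(Z)}{\phisep(Z)} = \frac{s_1+s_2-\epsilon s_3}{s_1},
\end{equation*}
where $s_1 \ge s_2 \ge s_3$ are the singular values of $M$. The plan has two parts: bound this ratio uniformly in terms of $\rank M$, and then show that the bound is attained by choosing $Z$ so that $M$ has a prescribed singular-value and determinant structure. The key structural fact I will use is that $\{A^\top Z B : Z \in \MmR\}$ is exactly the set of $3\times 3$ matrices $M$ with $\ran M \subset \ran A^\top$ and $\ran M^\top \subset \ran B^\top$. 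This holds because $A^\top(\cdot)B$ maps $\MmR$ onto $\ran A^\top \otimes \ran B^\top$. Consequently $\rank M \le r$, and every $M$ of the form $P_A N P_B$ is attainable, where $P_A,P_B$ are the orthogonal projections onto $\ran A^\top$ and $\ran B^\top$; an explicit preimage is $Z = (A^\top)^{+} M B^{+}$.

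For the upper bound I split by rank. If $\rank M \le 2$ then $s_3=0$, and the ratio is $(s_1+s_2)/s_1 \le 2$, with $\le 1$ when $\rank M \le 1$. If $\rank M = 3$ (possible only when $r=3$), then $s_1+s_2-\epsilon s_3 \le s_1+s_2+s_3 \le 3s_1$. I also need $\phisep(Z)>0$ for $Z \ne 0$ so the ratio is defined; if $M=0$ the ratio is $0/0$. I will restrict the maximization to $Z$ with $\phisep(Z)>0$, equivalently $M\neq0$, which is the nondegenerate case implicitly intended. For $r=1$ the ratio then equals exactly $1$ for every admissible $Z$.

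For attainment, in the case $r=3$ both $A^\top$ and $B^\top$ have full range $\mbb{R}^3$, so any $M$ is reachable. I choose $M = -I_3$, for which $\det M = -1$, $\epsilon=-1$, and the ratio is $(1+1+1)/1 = 3$. In the case $r=2$, I take orthonormal vectors $\bs{u}_1,\bs{u}_2 \in \ran A^\top$ and $\bs{v}_1,\bs{v}_2 \in \ran B^\top$, both of which exist since both ranks are at least $2$, and set $M = \bs{u}_1\bs{v}_1^\top + \bs{u}_2\bs{v}_2^\top$. This gives $s_1=s_2=1$, $s_3=0$, and ratio $2$. In the case $r=1$, any nonzero rank-one $M$ gives ratio $1$. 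Since the maxima are attained by explicit $Z = (A^\top)^{+}MB^{+}$, writing $\max$ is justified.

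The only step that needs care is the range characterization of $A^\top Z B$, together with the sign convention $\epsilon$ in the full-rank case, where one must make sure a negative-determinant $M$ with equal singular values is reachable. This is straightforward once $r=3$ forces $A^\top$ and $B^\top$ to be surjective. Everything else is elementary singular-value bookkeeping.
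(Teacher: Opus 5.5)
Your proof is correct and follows essentially the same route as the paper. You write the ratio as $(s_1+s_2-\epsilon s_3)/s_1$ in terms of the singular values of $A^\top Z B$, bound it by $3$, $2$, or $1$ according to the rank, and attain each bound through the explicit preimage $Z=(A^\top)^{+}MB^{+}$, which is the construction in the paper's Lemma~\ref{lem:achivable_singular_values}. Your explicit restriction to $Z$ with $A^\top Z B\neq 0$, which avoids the $0/0$ case when $Z\mapsto A^\top Z B$ has a nontrivial kernel, is a point the paper handles only implicitly, so it is a welcome bit of extra care.
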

\begin{proof}
    See Appendix~\ref{app:proof_maximum_ratio_sep}.
\end{proof}
This supremum coincides with the containment radius of $\Cquantum$ with respect to $\Csep$. For two convex compact sets $K$ and $K'$ whose interiors contain the origin, the containment radius is defined as
\begin{equation}
    R(K, K') := \inf \{\lambda > 0 \,:\, K \subset \lambda K'\} = \sup_{C \in K} \gamma_{K'}(C) = \max_{Z \neq 0} \frac{\phi_K(Z)}{\phi_{K'}(Z)}.
\end{equation}
In particular, since $\Csep$ is closed, $R(\Cquantum, \Csep) = 3$ when $r = 3$ means that $\Cquantum \subset 3\, \Csep$ and that this scaling is optimal.

The following theorem characterizes the correlations $C \in \Cquantum$ that achieve the maximum.
\begin{theorem}\label{thm:maximizer_gammasep}
    Let $r := \min\{\rank A, \rank B\}$. The following correlations $C \in \Cquantum$ achieve $\sup_{C \in \Cquantum} \gammasep(C)$:
    \begin{enumerate}
        \item When $r = 3$: $C = A Q B^\top$ for any $Q \in \mrm{SO}^{-}(3)$.
        \item When $r = 2$:
        \begin{itemize}
            \item If $\rank A = 3$ or $\rank B = 3$: $C = A Q B^\top$ for any $Q \in \mrm{SO}^{-}(3)$.
            \item If $\rank A = \rank B = 2$: $C = A Q B^\top$ for $Q \in \mrm{SO}^{-}(3)$ satisfying $Q(\ran B^\top) = \ran A^\top$.
        \end{itemize}
    \end{enumerate}
\end{theorem}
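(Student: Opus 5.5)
We compute $\gammasep(AQB^\top)$ directly from Theorem~\ref{thm:gauge_function_of_Csep} and compare it with the value given by Theorem~\ref{thm:maximum_ratio_of_support_functions}. Throughout, $C = AQB^\top$ with $Q \in \mrm{SO}^{-}(3)$; this $C$ lies in $\Cquantum$ by Theorem~\ref{thm:convex_hull_characterization_of_Cquantum}. The range conditions $\ran C \subset \ran A$ and $\ran C^\top \subset \ran B$ hold trivially, so
\begin{equation}
    \gammasep(C) = \trnorm{A^{+} A\, Q\, B^\top (B^\top)^{+}} = \trnorm{P_A\, Q\, P_B}.
\end{equation}
Here $P_A := A^{+}A$ is the orthogonal projector onto $\ran A^\top \subset \mbb{R}^3$, and $P_B := B^\top (B^\top)^{+} = (B^{+}B)^\top = B^{+}B$ is the orthogonal projector onto $\ran B^\top$. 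Both identities are standard Moore--Penrose facts. The whole proof therefore reduces to evaluating the trace norm of a compressed orthogonal matrix.

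\textbf{Case (i), $r=3$.} Here $P_A = P_B = I_3$, so $\gammasep(C) = \trnorm{Q} = 3$, since all singular values of an orthogonal matrix equal $1$. This matches the value $3$ in Theorem~\ref{thm:maximum_ratio_of_support_functions}.

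\textbf{Case (ii), first subcase.} Suppose $r = 2$ and, say, $\rank A = 3$ with $\rank B = 2$ (the other ordering is symmetric). Then $P_A = I$ and $\gammasep(C) = \trnorm{Q P_B}$. Because $Q$ is orthogonal, $Q P_B$ has the same singular values as $P_B$, namely $1,1,0$. Hence $\gammasep(C) = 2$, which is again the maximum from Theorem~\ref{thm:maximum_ratio_of_support_functions}.

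\textbf{Case (ii), second subcase.} Suppose $\rank A = \rank B = 2$, and write $V_A := \ran A^\top$, $V_B := \ran B^\top$. The assumption $Q(V_B) = V_A$ means that $Q$ restricted to $V_B$ is an isometry onto $V_A$. Then $P_A Q P_B = Q P_B$, whose singular values are $1,1,0$, so the trace norm is $2$. We must also confirm that such a $Q \in \mrm{SO}^{-}(3)$ exists, to make clear the statement is not vacuous. Take any orthogonal map $V_B \to V_A$ and extend it to $\mbb{R}^3$ by sending the unit normal of $V_B$ to $\pm$ the unit normal of $V_A$. The sign can be chosen so that $\det Q = -1$.

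\textbf{Main obstacle.} Since the theorem only claims these $C$ achieve the maximum, not that they are the only maximizers, no converse is needed. The only delicate point is the identification $B^\top (B^\top)^{+} = P_{\ran B^\top}$, i.e.\ that the two-sided pseudoinverse sandwich reduces exactly to the projectors onto the row spaces. I expect this to go through by the identities $(X^\top)^{+} = (X^{+})^\top$ and $X^{+}X = P_{\ran X^\top}$. If instead one wanted a weaker statement, one could simply note that $\trnorm{P_A Q P_B} \le \rank(P_A Q P_B) \cdot \opnorm{P_A Q P_B} \le r$, with equality exactly when $P_A Q P_B$ has $r$ unit singular values.
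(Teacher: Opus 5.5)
Your proof is correct, but it is organized differently from the paper's. You take the value of $\sup_{C\in\Cquantum}\gammasep(C)$ as already known to be $r$. That comes from Theorem~\ref{thm:maximum_ratio_of_support_functions} together with the interchange-of-suprema identity $\sup_{C\in\Cquantum}\gammasep(C)=\max_{Z\neq0}\phiquantum(Z)/\phisep(Z)$ stated just before it. You then only verify that $\trnorm{P_A Q P_B}=r$ at the listed points.

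The paper never invokes the support-function ratio here. It uses convexity of $\gammasep$ to push the maximum over $\Cquantum$ to an extreme point $AQB^\top$ with $Q\in\mrm{SO}^{-}(3)$. It then evaluates or bounds $\trnorm{P_{\ran A^\top}QP_{\ran B^\top}}$ over all such $Q$. So the maximal value is derived intrinsically. In the case $\rank A=\rank B=2$ it also gets the converse: an extreme point attains the maximum only if $Q(\ran B^\top)=\ran A^\top$.

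Your route is shorter given the earlier results. It also supplies something the paper leaves implicit: an explicit construction of a $Q\in\mrm{SO}^{-}(3)$ with $Q(\ran B^\top)=\ran A^\top$, so the rank-$(2,2)$ clause is not vacuous.

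The bound in your last paragraph, $\trnorm{P_AQP_B}\le\rank(P_AQP_B)\,\opnorm{P_AQP_B}\le r$, is more useful than you suggest. Combine it with the triangle inequality for $Y=\sum_i\lambda_iQ_i\in\op{Conv}(\mrm{SO}^{-}(3))$, using $\gammasep(AYB^\top)=\trnorm{P_AYP_B}$. This gives $\gammasep\le r$ on all of $\Cquantum$ directly. It is therefore the upper half of a self-contained proof independent of Theorem~\ref{thm:maximum_ratio_of_support_functions}, not a ``weaker statement''.
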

\begin{proof}
    See Appendix~\ref{app:proof_maximizer_gammasep}.
\end{proof}

\begin{remark}\label{rem:maximizer_gammasep_physical}
    By Remark~\ref{rem:extreme_point_states}, the extreme points $AQB^\top$ with $Q \in \mrm{SO}^{-}(3)$ of $\Cquantum$ are realized by maximally entangled states.
    Moreover, by Proposition~\ref{prop:rigidity}, maximally entangled states are the \emph{only} quantum states that produce these correlations: no state with nonzero local Bloch vectors can have the same correlation matrix.
    Therefore, the correlations that maximize the entanglement detection sensitivity $\gammasep$ are exactly those produced by maximally entangled states.

    The coefficient matrix $Z$ that achieves the maximum ratio $\phiquantum(Z)/\phisep(Z)$ is determined by such a maximizing correlation $C$ via the correspondence in Remark~\ref{rem:gauge_sep_optimizer}.
    For instance, when $r = 3$, the maximizer $C_{\star} = A Q B^\top$ with $Q \in \mrm{SO}^{-}(3)$ gives $A^{+} C_{\star} (B^\top)^{+} = Q$, whose singular value decomposition is $Q = U I_3 V^\top$.
    By Remark~\ref{rem:gauge_sep_optimizer}, the corresponding optimal coefficient matrix is $Z_{\star} = (A^\top)^{+}\, U V^\top\, B^{+}$, for which $A^\top Z_{\star} B = Q$ has singular values $s_1 = s_2 = s_3 = 1$ and $\det(A^\top Z_{\star} B) = -1$, yielding $\phiquantum(Z_{\star})/\phisep(Z_{\star}) = 3$.
    Since $U V^\top = A^{+} C_{\star} (B^\top)^{+}$ when all singular values equal~$1$, the identity $(A^\top)^{+} A^{+} = (A A^\top)^{+}$ gives a compact expression in terms of the Gram matrices $G_A = A A^\top$ and $G_B = B B^\top$:
    \begin{equation}
        Z_{\star} = G_A^{-1}\, C_{\star} G_B^{-1}.
    \end{equation}
\end{remark}

\subsection{Noise tolerance: Werner states}\label{subsec:noise_tolerance_Werner}

We apply the preceding results to the entanglement detection of Werner states~\cite{werner1989quantum} under depolarizing noise.
The Werner state with noise fraction $p \in [0,1]$ is
\begin{equation}
    \hat{\rho}_p := (1-p) \, |\Phi^+\ket\bra\Phi^+| + p\,\frac{\hat{I}_4}{4},
    \label{eq:Werner_state}
\end{equation}
where $|\Phi^+\ket := (|00\ket + |11\ket)/\sqrt{2}$.
For two-qubit systems, the PPT criterion~\cite{peres1996separability} is necessary and sufficient for separability~\cite{horodecki1996separability}: $\hat{\rho}_p$ is entangled if and only if $p < 2/3$.

Since the completely mixed state produces the zero correlation matrix, we have $\Phi_{AB}(\hat{\rho}_p) = (1-p)\, C_{\Phi^+}$, where $C_{\Phi^+} := \Phi_{AB}(|\Phi^+\ket\bra\Phi^+|)$.
By the positive homogeneity of the gauge function,
\begin{equation}
    \gammasep\bigl((1-p)\, C_{\Phi^+}\bigr) = (1-p) \, \gammasep(C_{\Phi^+}),
\end{equation}
and entanglement is detected if and only if this exceeds unity, i.e.,
\begin{equation}\label{eq:Werner_p_crit}
    p < p_{\mrm{crit}} := 1 - \frac{1}{\gammasep(C_{\Phi^+})}.
\end{equation}

The state $|\Phi^+\ket$ is a maximally entangled state whose correlation matrix in the Pauli basis is $T_{\Phi^{+}} := \op{diag}(1,-1,1) \in \mrm{SO}^-(3)$.
By Theorem~\ref{thm:maximizer_gammasep}, when $r := \min\{\rank A, \rank B\} \geq 2$, the maximally entangled state achieves the maximum of $\gammasep$ over $\Cquantum$:
\begin{equation}
    \gammasep(C_{\Phi^+}) = r.
    \label{eq:gamma_sep_maximally_entangled}
\end{equation}
Therefore,
\begin{equation}
    p_{\mrm{crit}} = 1 - \frac{1}{r}.
    \label{eq:p_crit_general}
\end{equation}
When $r = 3$, $p_{\mrm{crit}} = 2/3$, which coincides with the PPT threshold for Werner states; the PPT criterion is necessary and sufficient for separability in $2 \times 2$ systems~\cite{horodecki1996separability}.
The optimal coefficient matrix is given by Remark~\ref{rem:maximizer_gammasep_physical}:
\begin{equation}\label{eq:Z_star_Werner}
    Z_{\star} = G_A^{-1}\, C_{\Phi^+}\, G_B^{-1},
\end{equation}
with $\phisep(Z_{\star}) = \opnorm{T_{\Phi^{+}}} = 1$, so the entanglement witness~\eqref{eq:entanglement_witness} becomes $\hat{W}_{\mrm{ent}}(Z_{\star}) = \hat{I}_4 - \hat{S}_{AB}(Z_{\star})$ and detects entanglement when $p < 1 - 1/r$.

We now compare the noise tolerance of our gauge-function criterion with device-independent detection via Bell inequalities in concrete measurement settings.

\subsubsection{The $(2,2,2)$ scenario}

Consider the standard CHSH measurement settings:
\begin{equation}
    \hat{A}_1 = \hat{\sigma}_3, \quad \hat{A}_2 = \hat{\sigma}_1, \quad
    \hat{B}_1 = \frac{\hat{\sigma}_3 + \hat{\sigma}_1}{\sqrt{2}}, \quad
    \hat{B}_2 = \frac{\hat{\sigma}_3 - \hat{\sigma}_1}{\sqrt{2}}.
\end{equation}
Since $m = 2$, we have $r = \min\{\rank A, \rank B\} = 2$, so $\gammasep(C_{\Phi^+}) = 2$ and $p_{\mrm{crit}} = 1/2$.

For comparison, the CHSH inequality~\cite{clauser1969proposed} provides a device-independent entanglement test.
The Tsirel'son bound~\cite{cirel1980quantum} gives the maximum quantum value $2\sqrt{2}$ of the CHSH expression, achieved by the above settings with the maximally entangled state.
For the Werner state~\eqref{eq:Werner_state}, the CHSH value scales as $(1-p) \cdot 2\sqrt{2}$, and violation of the local bound~$2$ requires
\begin{equation}
    p < 1 - \frac{1}{\sqrt{2}} \approx 0.293.
\end{equation}
The gauge-function criterion thus tolerates more noise ($p_{\mrm{crit}} = 0.5$ vs.\ $\approx 0.293$), reflecting the advantage of device-dependent over device-independent detection.

\subsubsection{The $(2,3,2)$ scenario}\label{subsubsec:noise_tolerance_232}

Consider the measurement settings $A = I_3$ (i.e., $\hat{A}_i = \hat{\sigma}_i$) and
\begin{equation}
    B = \begin{pmatrix}
        \frac{1}{\sqrt{2}} & 0 & \frac{1}{\sqrt{2}} \\[3pt]
        0 & 1 & 0 \\[3pt]
        \frac{1}{\sqrt{2}} & 0 & -\frac{1}{\sqrt{2}}
    \end{pmatrix}.
\end{equation}
Since $A$ and $B$ both have full rank, $r = 3$.
The maximally entangled state $|\Phi^+\ket$ produces the correlation $C_{\Phi^+} = A T_{\Phi^{+}} B^\top = T_{\Phi^{+}} B^\top$ with $T_{\Phi^{+}} = \op{diag}(1,-1,1) \in \mrm{SO}^{-}(3)$.
By~\eqref{eq:gamma_sep_maximally_entangled}, $\gammasep(C_{\Phi^+}) = 3$ and $p_{\mrm{crit}} = 2/3$, saturating the PPT bound.
Since $B$ is orthogonal, $G_A = G_B = I_3$, and the optimal coefficient matrix~\eqref{eq:Z_star_Werner} reduces to
\begin{equation}
    Z_{\star} = C_{\Phi^+} = \begin{pmatrix}
        \frac{1}{\sqrt{2}} & 0 & \frac{1}{\sqrt{2}} \\[3pt]
        0 & -1 & 0 \\[3pt]
        \frac{1}{\sqrt{2}} & 0 & -\frac{1}{\sqrt{2}}
    \end{pmatrix}.
\end{equation}

In the $(2,3,2)$ scenario, a device-independent entanglement test can be performed by applying the CHSH inequality to a subset of two measurement settings per party (e.g., $\hat{A}_1, \hat{A}_3$ and $\hat{B}_1, \hat{B}_3$), yielding the same critical noise as in the $(2,2,2)$ scenario:
\begin{equation}
    p_{\mrm{crit}}^{\mrm{CHSH}} = 1 - \frac{1}{\sqrt{2}} \approx 0.293.
\end{equation}
The $I_{3322}$ inequality~\cite{collins2004relevant}---the unique tight Bell inequality for the $(2,3,2)$ scenario that is inequivalent to the CHSH inequality---provides another device-independent test within this scenario.
The quantum maximum of $I_{3322}$ equals $1/4$~\cite{collins2004relevant}, achieved by $|\Phi^+\ket$ with equiangular in-plane measurement settings.
The optimal measurement matrices are
\begin{equation}\label{eq:I3322_optimal_settings}
    A = \begin{pmatrix} 0 & 0 & 1 \\[3pt]
        -\frac{\sqrt{3}}{2} & 0 & \frac{1}{2} \\[3pt]
        -\frac{\sqrt{3}}{2} & 0 & -\frac{1}{2} \end{pmatrix}, \qquad
    B = \begin{pmatrix} -\frac{\sqrt{3}}{2} & 0 & \frac{1}{2} \\[3pt]
        0 & 0 & 1 \\[3pt]
        \frac{\sqrt{3}}{2} & 0 & \frac{1}{2} \end{pmatrix},
\end{equation}
whose rows are unit vectors in the $xz$-plane at angles $0, -\pi/3, -2\pi/3$ (Alice) and $-\pi/3, 0, \pi/3$ (Bob) from the $z$-axis, equally spaced by $\pi/3$.
For the completely mixed state $\hat{I}_4/4$, a direct calculation gives $I_{3322}(\hat{I}_4/4) = -1$.
By linearity, the Werner state~\eqref{eq:Werner_state} satisfies
\begin{equation}
    I_{3322}(\hat{\rho}_p) = (1-p)\cdot\frac{1}{4} + p\cdot(-1) = \frac{1}{4} - \frac{5p}{4}.
\end{equation}
Violation of the local-realistic bound $I_{3322} \leq 0$ requires $p < 1/5$, so
\begin{equation}\label{eq:I3322_threshold}
    p_{\mrm{crit}}^{I_{3322}} = \frac{1}{5} = 0.2.
\end{equation}
Despite being tailored for the $(2,3,2)$ scenario, $I_{3322}$ yields a weaker noise threshold than CHSH ($1/5 < 1 - 1/\sqrt{2} \approx 0.293$).

By contrast, the gauge-function criterion achieves $p_{\mrm{crit}} = 2/3 \approx 0.667$, which is optimal.

The measurement setting $A = B = I_3$, i.e., $\hat{A}_i = \hat{B}_i = \hat{\sigma}_i$ for $i = 1, 2, 3$, further simplifies the detection protocol.
Since $G_A = G_B = I_3$, the optimal coefficient matrix~\eqref{eq:Z_star_Werner} reduces to
\begin{equation}
    Z_{\star} = C_{\Phi^+} = T_{\Phi^{+}} = \op{diag}(1, -1, 1).
\end{equation}
This matrix is diagonal, so the Bell operator involves only the three measurement combinations $\hat{\sigma}_i \otimes \hat{\sigma}_i$:
\begin{equation}
    \hat{S}_{AB}(Z_{\star}) = \hat{\sigma}_1 \otimes \hat{\sigma}_1 - \hat{\sigma}_2 \otimes \hat{\sigma}_2 + \hat{\sigma}_3 \otimes \hat{\sigma}_3.
\end{equation}
Therefore, among the $3^2 = 9$ possible measurement combinations in the $(2,3,2)$ scenario, only 3 are needed for optimal entanglement detection.
The resulting criterion detects entanglement whenever $p < 2/3$, matching the PPT bound with far fewer measurements than full state tomography.

\begin{remark}
    The critical noise $p_{\mrm{crit}} = 1 - 1/r$ depends only on $r = \min\{\rank A, \rank B\}$ and not on the specific measurement directions.
    When $r = 3$ (any non-coplanar setting with $m \geq 3$), $p_{\mrm{crit}} = 2/3$ saturates the PPT bound.
    When $r = 2$---which includes the CHSH scenario $(m = 2)$ as well as coplanar $(2,3,2)$ settings---$p_{\mrm{crit}} = 1/2$.
    When $r = 1$, $p_{\mrm{crit}} = 0$, i.e., no entanglement can be detected.
\end{remark}

Table~\ref{tab:noise_tolerance_entanglement} summarizes the critical noise fractions for entanglement detection of Werner states across the methods discussed above.

\begin{table}[t]
    \caption{Critical noise $p_{\mrm{crit}}$ for entanglement detection of the Werner state $\hat{\rho}_p = (1-p)\,|\Phi^+\ket\bra\Phi^+| + p\,\hat{I}_4/4$.
    Entanglement is detected when $p < p_{\mrm{crit}}$.
    The PPT criterion~\cite{peres1996separability,horodecki1996separability} is necessary and sufficient for two-qubit systems but requires full state tomography;
    it is inapplicable under the limited measurement settings of the $(2,2,2)$ scenario.
    The $I_{3322}$ inequality~\cite{collins2004relevant} requires three measurement settings per party and is not applicable in the $(2,2,2)$ scenario (dash).}
    \label{tab:noise_tolerance_entanglement}
    \begin{ruledtabular}
    \begin{tabular}{lcc}
        Method & $(2,2,2)$ & $(2,3,2)$ \\
        \hline
        PPT criterion (tomography) & --- & $2/3 \approx 0.667$ \\
        Gauge function $\gammasep$ (this work) & $1/2 = 0.5$ & $2/3 \approx 0.667$ \\
        CHSH inequality~\cite{clauser1969proposed} & $1 - 1/\sqrt{2} \approx 0.293$ & $1 - 1/\sqrt{2} \approx 0.293$ \\
        $I_{3322}$ inequality~\cite{collins2004relevant} & --- & $1/5 = 0.2$ \\
    \end{tabular}
    \end{ruledtabular}
\end{table}

\section{Detection of Beyond-Quantum States}\label{sec:detection_of_BQS}
\subsection{Beyond-quantum state detection via the support function $\phiquantum$}
Analogously to the entanglement case, the support function $\phiquantum$ provides a direct criterion for beyond-quantum state detection.
If a correlation matrix $C$ satisfies $\Tr[Z^\top C] > \phiquantum(Z)$ for some $Z \in \MmR$, then $C \notin \Cquantum$, and hence the underlying state must be beyond-quantum.

This criterion has an analogous operator-theoretic formulation. Define the self-adjoint operator
\begin{equation}
    \hat{W}_{\mrm{bqs}}(Z) := \hat{I}_4 - \frac{\hat{S}_{AB}(Z)}{\phiquantum (Z)}.
    \label{eq:beyond-quantum_state_witness}
\end{equation}
For any quantum state $\hat{\rho} \in \Squantum$, the correlation matrix $C = \Phi_{AB}(\hat{\rho})$ satisfies $\Tr[Z^\top C] \leq \phiquantum(Z)$ by definition of the support function, so
\begin{equation}
    \Tr[\hat{\rho} \, \hat{W}_{\mrm{bqs}}(Z)]
    = 1 - \frac{\Tr[\hat{\rho}\, \hat{S}_{AB}(Z)]}{\phiquantum(Z)}
    = 1 - \frac{\Tr[Z^\top C]}{\phiquantum(Z)} \geq 0.
    \label{eq:beyond_quantum_detection}
\end{equation}
As shown in the following subsection, when $\rank A,\rank B \geq 2$, there exist a matrix $Z \in \MmR$ and a beyond-quantum state $\hat{\rho} \in \Smax$ for which $\Tr[\hat{\rho} \, \hat{W}_{\mrm{bqs}}(Z)] < 0$.
Therefore, $\hat{W}_{\mrm{bqs}}(Z)$ serves as an operator witnessing beyond-quantum states~\cite{arai2024detecting}.

\subsection{The gauge function $\gammaquantum$ as a robustness measure for beyond-quantum state detection}

Equation~\eqref{eq:beyond_quantum_detection} shows that a beyond-quantum state is detected whenever the ratio $\Tr[Z^\top C]/\phiquantum(Z)$ exceeds unity; the larger this ratio, the stronger the detection signal.
As in the entanglement case, the coefficient matrix $Z$ can be freely optimized during statistical post-processing, so we define the detection sensitivity for beyond-quantum correlations as the supremum of this ratio over all nonzero $Z$.
The duality relation~\eqref{eq:gauge_function_as_a_dual_of_support_function} then identifies the detection sensitivity with the gauge function of $\Cquantum$:
\begin{proposition}
    For any $C \in \MmR$,
    \begin{equation}
        \sup_{Z \neq 0} \frac{\Tr[Z^\top C]}{\phiquantum (Z)}
        = \gammaquantum(C).
        \label{eq:state-dependent detection sensitivity_beyond_quantum}
    \end{equation}
\end{proposition}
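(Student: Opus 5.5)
The identity is an instance of the general support–gauge duality \eqref{eq:gauge_function_as_a_dual_of_support_function}, applied to $K=\Cquantum\subset\MmR$. The plan is therefore to check that $\Cquantum$ meets the hypotheses of that relation, namely that it is a closed convex set containing the origin. We then take care of the degenerate cases in which $\phiquantum(Z)$ could vanish or $\gammaquantum(C)=+\infty$.

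\emph{Convexity, closedness and the origin.} Since $\Squantum=\D(\CtwoCtwo)$ is convex and compact and $\Phi_{AB}$ is linear (hence continuous), the image $\Cquantum=\Phi_{AB}(\Squantum)$ is convex and compact, in particular closed. The completely mixed state $\hat I_4/4$ is mapped to $\Phi_{AB}(\hat I_4/4)=0$, because $\Tr[\hat A_i\otimes\hat B_j]=\Tr[\hat A_i]\Tr[\hat B_j]=0$. Hence $0\in\Cquantum$.

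\emph{Duality.} For a closed convex $K\ni 0$, the standard bipolar argument gives $\gamma_K(C)=\sup\{\bra Z,C\ket:\phi_K(Z)\le 1\}$. To pass to the ratio form $\sup_{Z\ne0}\bra Z,C\ket/\phi_K(Z)$, we use positive homogeneity of $Z\mapsto\phi_K(Z)$ and $Z\mapsto\bra Z,C\ket$.

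\emph{Main obstacle: the case $\phiquantum(Z)=0$.} Since $0\in K$, we have $\phi_K\ge 0$, but it may vanish on a nonzero $Z$. When $\rank A$ or $\rank B$ is less than $m$, $\Cquantum$ lies in a proper subspace. By Theorem~\ref{thm:support_function_of_Cquantum}, $\phiquantum(Z)=0$ exactly when $A^\top ZB=0$; for such $Z$ one has $s_1=s_2=s_3=0$. This in turn happens precisely when $Z\perp\operatorname{span}\Cquantum$, since $\Tr[Z^\top AQB^\top]=\Tr[(A^\top ZB)^\top Q]$. We adopt the convention $t/0=+\infty$ for $t>0$ and $0/0=0$.
\begin{itemize}
\item If $\ran C\subset\ran A$ and $\ran C^\top\subset\ran B$, then $C=AXB^\top$ with $X=A^+C(B^\top)^+$. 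Such a $Z$ gives $\Tr[Z^\top C]=\Tr[(A^\top ZB)^\top X]=0$, so these directions contribute nothing to the supremum. For the remaining $Z$ the ratio formula follows from the unit-ball form by rescaling.
\item If either range condition fails, choose a $Z$ with $A^\top ZB=0$ and $\Tr[Z^\top C]>0$. For instance, when $\ran C\not\subset\ran A$, take $Z=\bs{u}\bs{w}^\top$ with $\bs{u}\in(\ran A)^\perp=\ker A^\top$ and $\bs{u}^\top C\bs{w}>0$. This makes the left-hand side $+\infty$, matching $\gammaquantum(C)=+\infty$.
\end{itemize}
The consistency of this last case with the gauge theorem relies on the range conditions of Theorem~\ref{thm:gauge_function_of_Cquantum} (and Theorem~\ref{thm:gauge_function_of_Cmax} when $r\le2$). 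The equality in the case $r\le 2$ is still covered by the abstract duality, since that duality does not depend on the explicit formula.

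Finally, as in Remark~\ref{rem:max_attained}, when $\gammaquantum(C)<\infty$ the supremum is attained. To see this, restrict $Z$ to the subspace $\{Z:\phiquantum(Z)>0\}\cup\{0\}$, modulo the kernel directions, which contribute nothing. On the compact unit sphere of that quotient $\phiquantum$ is continuous and positive, so the ratio is continuous and attains its maximum.
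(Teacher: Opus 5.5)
Your proof is correct and follows the paper's route: the paper simply invokes the general support--gauge duality \eqref{eq:gauge_function_as_a_dual_of_support_function} for the closed convex set $\Cquantum\ni 0$, and your explicit handling of the directions with $A^\top Z B=0$, where $\phiquantum$ vanishes, is a worthwhile refinement that the paper glosses over. One small fix for your attainment remark: $\{Z:\phiquantum(Z)>0\}\cup\{0\}$ is not a subspace, so restrict instead to the orthogonal complement of $\ker(Z\mapsto A^\top ZB)$, on whose unit sphere $\phiquantum$ is continuous and positive; this also repairs the paper's appeal to nonempty interior, which fails when $\rank A<m$ or $\rank B<m$.
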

The same reasoning as in Remark~\ref{rem:max_attained} applies: when $\gammaquantum(C) < +\infty$ (equivalently, $\ran C \subset \ran A$ and $\ran C^\top \subset \ran B$), the ratio $\Tr[Z^\top C]/\phiquantum(Z)$ is positively homogeneous of degree zero in $Z$, and $\phiquantum$ is continuous and positive on the unit sphere, so the supremum is attained and $\sup$ may be replaced by $\max$ in~\eqref{eq:state-dependent detection sensitivity_beyond_quantum}.

The gauge function $\gammaquantum(C)$ also admits an interpretation as a noise-robustness measure, by the same reasoning as in Sec.~\ref{sec:detection_of_entanglement}.
By definition, $\gammaquantum(C) = \inf\{t > 0 : C/t \in \Cquantum\}$, so $C/\gammaquantum(C)$ lies on the boundary of $\Cquantum$.
Since mixing with the completely mixed state rescales the correlations as $(1-p)\,C$, the beyond-quantum character of $C$ remains detectable as long as $p < 1 - 1/\gammaquantum(C)$.
The larger $\gammaquantum(C)$, the more depolarizing noise $C$ can withstand before its beyond-quantum character becomes undetectable.

\subsection{Fundamental limit of detection sensitivity}

When arbitrary generalized states may be prepared, the fundamental limit of detection sensitivity for beyond-quantum correlations under fixed measurement settings $A$ and $B$ is obtained by maximizing $\gammaquantum$ over all correlations in $\Cmax$:
\begin{equation}
    \sup_{C \in \Cmax} \gammaquantum(C)
    = \sup_{C \in \Cmax} \max_{Z \neq 0} \frac{\Tr[Z^\top C]}{\phiquantum(Z)}
    = \max_{Z \neq 0} \frac{\phimax(Z)}{\phiquantum(Z)}.
\end{equation}
\begin{theorem}\label{thm:maximum_ratio_of_support_functions_beyond_quantum}
    Let $r := \min\{\rank A, \rank B\}$. Then
    \begin{equation}
        \max_{Z \neq 0} \frac{\phimax (Z)}{\phiquantum (Z)} = \begin{cases}
            3 & \text{if~} r=3, \\
            1 & \text{if~} r\leq 2.
        \end{cases}
    \end{equation}
\end{theorem}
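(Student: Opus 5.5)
By Theorems~\ref{thm:support_function_of_Cquantum} and~\ref{thm:support_function_of_Cmax}, both support functions depend on $Z$ only through the singular values $s_1\ge s_2\ge s_3\ge 0$ of $M:=A^\top Z B$, together with $\epsilon=\op{sgn}\det M$. The ratio to be maximized is therefore
\begin{equation}
    \frac{\phimax(Z)}{\phiquantum(Z)} = \frac{s_1+s_2+s_3}{s_1+s_2-\epsilon s_3}.
\end{equation}
I will reduce the problem to two questions. First, which triples $(s_1,s_2,s_3,\epsilon)$ are attainable as $Z$ ranges over nonzero matrices? Second, what is the maximum of this elementary function over the attainable triples? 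Note that $\phiquantum(Z)>0$ for $Z\neq 0$ whenever the ratio is meaningful; when $M=0$ both functions vanish, and such $Z$ can be discarded or treated as degenerate.

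\emph{Case $r\le 2$.} Here $\rank M\le \min\{\rank A,\rank B\}\le 2$, so $s_3=0$. Then $\phiquantum(Z)=s_1+s_2=\phimax(Z)$, and the ratio equals $1$ for every $Z$ with $M\neq 0$. Such a $Z$ always exists: take $Z=\bs{x}\bs{y}^\top$ with $A^\top\bs{x}\neq 0$ and $B^\top\bs{y}\neq 0$, which is possible because $A,B\neq 0$. So the maximum is $1$.

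\emph{Case $r=3$.} For the upper bound, if $\epsilon\in\{0,-1\}$ the ratio is at most $1$, since the denominator is then at least the numerator. If $\epsilon=+1$, the denominator is $s_1+s_2-s_3\ge s_1\ge s_3$, and
\begin{equation}
    s_1+s_2+s_3 \le 3(s_1+s_2-s_3)
\end{equation}
is equivalent to $4s_3\le 2s_1+2s_2$, which holds because $s_3\le s_2\le s_1$. So the ratio is at most $3$, with equality exactly when $s_1=s_2=s_3$ and $\det M>0$.

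For attainment, $A$ and $B$ have rank $3$, so $A^\top$ and $B^\top$ are surjective onto $\mbb{R}^3$ and every $3\times3$ matrix is realizable as $A^\top Z B$. Concretely, take $Z=(A^\top)^{+}\,I_3\,B^{+}$. Since $A^\top(A^\top)^{+}=I_3$ when $\rank A=3$, and $B^{+}B=I_3$ when $\rank B=3$, this gives $M=I_3$. Then $\phimax(Z)=3$ and $\phiquantum(Z)=1+1-1=1$, so the ratio is $3$. More generally, $M=Q\in\mrm{SO}(3)$ works, which matches Proposition~\ref{prop:rigidity} and Theorem~\ref{thm:convex_hull_characterzation_of_Cmax}: the maximizing correlation $AQB^\top$ with $Q\in\mrm{SO}(3)$ comes from an extremal state of $\Smax$.

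The argument is essentially routine once Theorems~\ref{thm:support_function_of_Cquantum} and~\ref{thm:support_function_of_Cmax} are available. The one step that needs care is the realizability claim: checking that the pseudoinverse identities really give $M=I_3$ at full rank, and that the rank bound forces $s_3=0$ in the case $r\le 2$.
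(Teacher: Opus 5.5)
Your proposal is correct and takes the same route as the paper. You reduce the ratio to $(s_1+s_2+s_3)/(s_1+s_2-\epsilon s_3)$, use $s_3=0$ to get the value $1$ when $r\le 2$, bound the ratio by $3$ when $r=3$, and attain that bound with $A^\top Z B$ having equal singular values and positive determinant; your explicit choice $Z=(A^\top)^{+}B^{+}$, giving $A^\top Z B=I_3$, is simply a special case of the construction in Lemma~\ref{lem:achivable_singular_values}.
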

\begin{proof}
    See Appendix~\ref{app:proof_maximum_ratio_bqs}.
\end{proof}
This supremum again coincides with the containment radius, now of $\Cmax$ with respect to $\Cquantum$.
In particular, since $\Cquantum$ is closed, $R(\Cmax, \Cquantum) = 3$ when $r = 3$ means that $\Cmax \subset 3\, \Cquantum$ and that this scaling is optimal.
The case $r \leq 2$ gives $R(\Cmax, \Cquantum) = 1$, so $\Cmax = \Cquantum$: beyond-quantum correlations cannot arise when either party has rank-deficient measurement settings. 
This is consistent with the known result that beyond-quantum ES states are undetectable in the $(2,2,2)$ scenario \cite{Banik2013degree,Stevens2014steering,barnum2013ensemble}.

The following theorem characterizes the correlations $C \in \Cmax$ that achieve the maximum.
\begin{theorem}\label{thm:maximizer_gammaquantum}
    Let $r := \min\{\rank A, \rank B\}$. The following correlations $C \in \Cmax$ achieve $\sup_{C \in \Cmax} \gammaquantum(C)$:
    \begin{enumerate}
        \item When $r = 3$: $C = A Q B^\top$ for any $Q \in \mrm{SO}(3)$.
        \item When $r \leq 2$:
        \begin{itemize}
            \item If $\rank A \geq 2$ and $\rank B \geq 2$: $C = A Q B^\top$ for any $Q \in \mrm{O}(3)$.
            \item If $\rank A = 1$ or $\rank B = 1$: $C = A Q B^\top$ for $Q \in \mrm{O}(3)$ satisfying
            \begin{equation}\label{eq:maximizer_condition_r_leq_2}
                \ran A^\top \cap Q(\ran B^\top) \neq \{0\}.
            \end{equation}
        \end{itemize}
    \end{enumerate}
\end{theorem}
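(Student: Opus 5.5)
The plan is to compute $\gammaquantum(AQB^\top)$ directly from Theorem~\ref{thm:gauge_function_of_Cquantum} in each case and compare it with the value of $\sup_{C\in\Cmax}\gammaquantum(C)$ given by Theorem~\ref{thm:maximum_ratio_of_support_functions_beyond_quantum}: $3$ when $r=3$ and $1$ when $r\le 2$. In every case $C=AQB^\top$ with $Q\in\mrm{O}(3)$ lies in $\Cmax$ by Theorem~\ref{thm:convex_hull_characterzation_of_Cmax}, so only the value of the gauge needs to be checked.

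\textbf{Case $r=3$.} Here $A$ and $B$ have full column rank, so $A^{+}A=I_3$ and $B^\top(B^\top)^{+}=I_3$. Hence $A^{+}C(B^\top)^{+}=Q$. The range conditions $\ran C\subset\ran A$ and $\ran C^\top\subset\ran B$ hold trivially for $C=AQB^\top$. Since $Q$ is orthogonal, its singular values are $s'_1=s'_2=s'_3=1$, and $\epsilon'=\det Q=+1$ for $Q\in\mrm{SO}(3)$. Theorem~\ref{thm:gauge_function_of_Cquantum} then gives $\gammaquantum(C)=1+1+1=3$, which is the maximum.

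\textbf{Case $r\le2$.} Theorem~\ref{thm:gauge_function_of_Cquantum} gives $\gammaquantum=\gammamax$, and the target value is $1$. Because every $C\in\Cmax$ has $\gammamax(C)\le1$, it suffices to show that $\gammamax(AQB^\top)\ge1$. Let $P_A:=A^{+}A$ and $P_B:=B^{+}B$ be the orthogonal projections onto $\ran A^\top$ and $\ran B^\top$ in $\mbb{R}^3$. Then $A^{+}C(B^\top)^{+}=P_AQP_B$, using $B^\top(B^\top)^{+}=(B^{+}B)^\top=P_B$. The range conditions again hold automatically, so by Theorem~\ref{thm:gauge_function_of_Cmax},
\begin{equation}
\gammamax(AQB^\top)=\opnorm{P_AQP_B},
\end{equation}
and this is at most $1$. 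Equality holds if and only if there is a unit vector $\bs{v}\in\ran B^\top$ such that $Q\bs{v}\in\ran A^\top$. Indeed, $\|P_AQ\bs{v}\|=1=\|Q\bs{v}\|$ forces $Q\bs{v}$ to lie in $\ran A^\top$. This is exactly condition~\eqref{eq:maximizer_condition_r_leq_2}, which settles the subcase $\rank A=1$ or $\rank B=1$.

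In the subcase $\rank A,\rank B\ge2$, the subspaces $\ran A^\top$ and $Q(\ran B^\top)$ each have dimension at least $2$ in $\mbb{R}^3$. Their intersection therefore has dimension at least $2+2-3=1$, so condition~\eqref{eq:maximizer_condition_r_leq_2} holds for every $Q\in\mrm{O}(3)$. Hence every $AQB^\top$ is a maximizer.

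The only steps that need care are the pseudoinverse identities $A^{+}AQB^\top(B^\top)^{+}=P_AQP_B$ and the norm-equality characterization; I expect no real obstacle. If $r\le2$ with $\rank A=0$, the statement is vacuous, since the measurement directions are unit vectors and so $\rank A,\rank B\ge1$.
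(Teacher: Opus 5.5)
Your proof is correct and follows the paper's argument essentially step by step: you reduce $A^{+}C(B^\top)^{+}$ to $Q$ (resp.\ $P_{\ran A^\top} Q P_{\ran B^\top}$), evaluate $\|Q\|_{+}=3$ for $Q\in\mrm{SO}(3)$, characterize $\opnorm{P_{\ran A^\top}QP_{\ran B^\top}}=1$ by the intersection condition, and finish with the same dimension count. The only difference is that you take the value of the supremum from Theorem~\ref{thm:maximum_ratio_of_support_functions_beyond_quantum}, whereas the paper gets it directly: the convex function $\gammaquantum$ attains its maximum over $\Cmax$ at an extreme point $AQB^\top$ with $Q\in\mrm{O}(3)$, and comparing $\|Q\|_{+}\in\{1,3\}$ then gives the value.
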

\begin{proof}
    See Appendix~\ref{app:proof_maximizer_gammaquantum}.
\end{proof}

\begin{remark}\label{rem:maximizer_gammaquantum_physical}
    By Remark~\ref{rem:extreme_point_states}, the extreme points of $\Cmax$ are $AQB^\top$ with $Q \in \mrm{O}(3)$: when $Q \in \mrm{SO}^{-}(3)$, the corresponding state is a maximally entangled (quantum) state, while when $Q \in \mrm{SO}(3)$, it belongs to the maximal entanglement structure $\Smax$ but is not a valid quantum state in general.
    By Proposition~\ref{prop:rigidity}, these states are the \emph{only} states in $\Squantum$ and $\Smax$, respectively, that produce the corresponding extremal correlations: no block-positive state with $T \in \mrm{O}(3)$ can have nonzero local Bloch vectors.
    When $r = 3$, Theorem~\ref{thm:maximizer_gammaquantum} shows that the beyond-quantum detection sensitivity $\gammaquantum$ is maximized precisely by the latter class: among all extreme points of $\Cmax$, only those with $Q \in \mrm{SO}(3)$ achieve the maximum.

    The coefficient matrix $Z$ that achieves the maximum ratio $\phimax(Z)/\phiquantum(Z)$ is determined by such a maximizing correlation $C$ via the correspondence in Remark~\ref{rem:gauge_qm_optimizer}.
    When $r = 3$, the maximizer $C_{\star} = A Q B^\top$ with $Q \in \mrm{SO}(3)$ gives $A^{+} C_{\star}\, (B^\top)^{+} = Q$, whose singular value decomposition is $Q = U I_3 V^\top$ with $\det(U V^\top) = +1$.
    By Remark~\ref{rem:gauge_qm_optimizer}, the corresponding optimal coefficient matrix is $Z_{\star} = (A^\top)^{+}\, U V^\top\, B^{+}$, for which $A^\top Z_{\star} B = Q$ has singular values $s_1 = s_2 = s_3 = 1$ and $\det(A^\top Z_{\star} B) = +1$, yielding $\phimax(Z_{\star})/\phiquantum(Z_{\star}) = (1 + 1 + 1)/(1 + 1 - 1) = 3$.
    As in Remark~\ref{rem:maximizer_gammasep_physical}, this simplifies to $Z_{\star} = G_A^{-1}\, C_{\star} G_B^{-1}$.
\end{remark}

\subsection{Noise tolerance: beyond-quantum states}\label{subsec:noise_tolerance_BQS}

We apply the preceding results to the detection of beyond-quantum states under depolarizing noise, in direct analogy with Sec.~\ref{subsec:noise_tolerance_Werner}.
Consider the state
\begin{equation}\label{eq:rho_max}
    \hat{\rho}_{\mrm{max}} :=
    \begin{pmatrix}
        1/2 & 0 & 0 & 0 \\
        0 & 0 & 1/2 & 0 \\
        0 & 1/2 & 0 & 0 \\
        0 & 0 & 0 & 1/2
    \end{pmatrix},
\end{equation}
which has eigenvalues $\{-1/2, 1/2, 1/2, 1/2\}$ and hence is not positive semidefinite---it belongs to $\Smax \setminus \Squantum$.
The noisy state
\begin{equation}\label{eq:tau_p}
    \hat{\tau}_p := (1-p)\, \hat{\rho}_{\mrm{max}} + p\,\frac{\hat{I}_4}{4}
\end{equation}
has eigenvalues $\{(-2+3p)/4,\, (2-p)/4,\, (2-p)/4,\, (2-p)/4\}$.
The state $\hat{\tau}_p$ is a valid quantum state if and only if $(-2+3p)/4 \geq 0$, i.e., $p \geq 2/3$.
Therefore, $\hat{\tau}_p$ is beyond-quantum for $p < 2/3$ and quantum for $p \geq 2/3$.

We consider the Pauli measurement settings $A = B = I_3$, i.e., $\hat{A}_i = \hat{B}_i = \hat{\sigma}_i$ ($i=1,2,3$), so $r = 3$.
The state $\hat{\rho}_{\mrm{max}}$ produces the correlation matrix
\begin{equation}
    C_{\mrm{max}} := \Phi_{AB}(\hat{\rho}_{\mrm{max}}) = A T_{\mrm{max}} B^\top = T_{\mrm{max}},
\end{equation}
where $T_{\mrm{max}} = I_3 \in \mrm{SO}(3)$.
Since $T_{\mrm{max}} \in \mrm{SO}(3)$, the state $\hat{\rho}_{\mrm{max}}$ realizes an extreme point of $\Cmax$ (Theorem~\ref{thm:maximizer_gammaquantum}), and
\begin{equation}
    \gammaquantum(C_{\mrm{max}}) = 3.
\end{equation}
The noisy state $\hat{\tau}_p$ produces the correlation $(1-p)\,C_{\mrm{max}}$, and the beyond-quantum character is detected if and only if $\gammaquantum\bigl((1-p)\,C_{\mrm{max}}\bigr) = (1-p) \cdot 3 > 1$, i.e.,
\begin{equation}\label{eq:p_crit_BQS}
    p < p_{\mrm{crit}}^{\mrm{bqs}} := 1 - \frac{1}{\gammaquantum(C_{\mrm{max}})} = \frac{2}{3}.
\end{equation}
Since $\hat{\tau}_p$ is beyond-quantum precisely for $p < 2/3$, the gauge-function criterion $p_{\mrm{crit}}^{\mrm{bqs}} = 2/3$ achieves necessary and sufficient detection of the beyond-quantum character for this family.
The optimal coefficient matrix is $Z_{\star} = G_A^{-1}\, C_{\mrm{max}}\, G_B^{-1} = C_{\mrm{max}} = I_3$.

\section{Discussion}\label{sec:discussion}

\subsection{Basis-independent expressions for the support and gauge functions}

The support and gauge functions derived in Sections~\ref{sec:support_functions} and~\ref{sec:gauge_functions} are expressed in terms of the matrices $A$ and $B$, which depend on the choice of basis for the operator space.
However, the correlation sets $\Csep$, $\Cquantum$, and $\Cmax$ are determined solely by the observables $\hat{A}_i$ and $\hat{B}_j$, so their support and gauge functions must be basis-independent.
We now verify this explicitly by rewriting all building blocks in terms of intrinsic (basis-independent) quantities.

\subsubsection{Singular values of $A^\top Z B$ and $A^{+}C(B^\top)^{+}$}

Define the Gram matrices $G_A := A A^\top$ and $G_B := B B^\top \in \MmR$, whose entries
\begin{align}
    (G_A)_{ij} &= \bs{a}_i \cdot \bs{a}_j = \frac{1}{2}\Tr[\hat{A}_i \hat{A}_j], \\
    (G_B)_{ij} &= \bs{b}_i \cdot \bs{b}_j = \frac{1}{2}\Tr[\hat{B}_i \hat{B}_j]
\end{align}
are manifestly basis-independent.
Let $X = U_X \Sigma_X V_X^\top$ denote the reduced singular value decomposition of $X = A, B$, so that $G_X^{1/2} = U_X \Sigma_X U_X^\top$.
Then
\begin{align}
    A^\top Z B &= V_A\,(\Sigma_A U_A^\top Z\, U_B \Sigma_B)\,V_B^\top, \\
    G_A^{1/2}\, Z\, G_B^{1/2} &= U_A\,(\Sigma_A U_A^\top Z\, U_B \Sigma_B)\,U_B^\top.
\end{align}
Both matrices share the common core $\Sigma_A U_A^\top Z\, U_B \Sigma_B$ and differ only by left and right isometries ($V_A, V_B$ and $U_A, U_B$, respectively).
Therefore their nonzero singular values coincide, and for any unitarily invariant norm,
\begin{equation}\label{eq:norm_equivalence_support}
    \norm{A^\top Z B}_{p} = \norm{G_A^{1/2}\, Z\, G_B^{1/2}}_{p}, \quad p = \infty,\; 1.
\end{equation}
This establishes the basis-independence of the support functions $\phisep(Z) = \|A^\top Z B\|_\infty$ and $\phimax(Z) = \|A^\top Z B\|_1$.

By the same argument applied to $A^{+}C(B^\top)^{+}$ and $(G_A^{+})^{1/2}\, C\, (G_B^{+})^{1/2}$---using the reduced SVDs of $A^+$ and $(B^\top)^+$---the nonzero singular values of these two matrices also coincide:
\begin{equation}\label{eq:norm_equivalence_gauge}
    \norm{A^{+}C(B^\top)^{+}}_{p} = \norm{(G_A^{+})^{1/2}\, C\, (G_B^{+})^{1/2}}_{p}, \quad p = \infty,\; 1.
\end{equation}
This establishes the basis-independence of the gauge functions $\gammasep(C) = \|A^{+}C(B^\top)^{+}\|_1$ and $\gammamax(C) = \|A^{+}C(B^\top)^{+}\|_\infty$ (in the finite case).

\subsubsection{Determinant of $A^\top Z B$ and $A^{+}C(B^\top)^{+}$}

The support function $\phiquantum$ and the gauge function $\gammaquantum$ involve the sign of $\det(A^\top Z B)$ and $\det(A^{+}C(B^\top)^{+})$, respectively.
To verify their basis-independence, we use the identity (see Lemma~\ref{lem:det_intrinsic} in Appendix~\ref{app:det_identity})
\begin{equation}
    \det(A^\top Z B) = \frac{1}{6}\sum_{i,j,k=1}^{m}\sum_{l,m,n=1}^{m}
      T^A_{ijk}\, z_{il}\, z_{jm}\, z_{kn}\, T^B_{lmn},
    \label{eq:det_intrinsic}
\end{equation}
where
\begin{align}
    T^A_{ijk} &:= (\bs{a}_i \times \bs{a}_j) \cdot \bs{a}_k, \\
    T^B_{ijk} &:= (\bs{b}_i \times \bs{b}_j) \cdot \bs{b}_k.
\end{align}
The tensors $T^A$ and $T^B$ admit basis-independent expressions:
\begin{align}
    T^A_{ijk} &= \frac{1}{4\im}\Tr\!\left([\hat{A}_i, \hat{A}_j]\, \hat{A}_k\right), \\
    T^B_{ijk} &= \frac{1}{4\im}\Tr\!\left([\hat{B}_i, \hat{B}_j]\, \hat{B}_k\right),
\end{align}
since $[\hat{\sigma}_a, \hat{\sigma}_b] = 2\im \sum_c \varepsilon_{abc}\, \hat{\sigma}_c$ and $\Tr[\hat{\sigma}_c \hat{\sigma}_d] = 2\delta_{cd}$.
Therefore $\det(A^\top Z B)$ is basis-independent, and so is the full support function $\phiquantum$.

For the gauge function $\gammaquantum$, it suffices to verify the basis-independence of $\op{sgn}(\det(A^{+}C(B^\top)^{+}))$ when $\rank A = \rank B = 3$.
In this case, $A^{+} = (A^\top A)^{-1} A^\top$ and $(B^\top)^{+} = B(B^\top B)^{-1}$, so
\begin{equation}
    \det(A^{+}C(B^\top)^{+}) = \frac{\det(A^\top C B)}{\det(A^\top A)\,\det(B^\top B)}.
\end{equation}
The numerator $\det(A^\top C B)$ has the same intrinsic expansion~\eqref{eq:det_intrinsic} with $z_{il}$ replaced by $c_{il}$, and the denominators $\det(A^\top A) = \det G_A$ and $\det(B^\top B) = \det G_B$ are positive.
Therefore
\begin{equation}
    \op{sgn}\!\left(\det(A^{+}C(B^\top)^{+})\right) = \op{sgn}\!\left(\sum_{i,j,k}\sum_{l,m,n} T^A_{ijk}\, c_{il}\, c_{jm}\, c_{kn}\, T^B_{lmn}\right),
\end{equation}
which is manifestly basis-independent.
This confirms the basis-independence of the gauge function $\gammaquantum$.

\subsection{Duality structure of support and gauge functions}\label{sec:duality_structure}

The results of Sections~\ref{sec:support_functions} and~\ref{sec:gauge_functions} exhibit a clear duality pattern.
The support and gauge functions of each correlation set take the form of a pair of dual (asymmetric) norms evaluated on the matrices $A^\top Z B$ and $A^{+}C(B^\top)^{+}$, respectively.
The complete correspondence (for the finite cases) is summarized as follows:
\begin{center}
\begin{tabular}{lcc}
\hline\hline
 & Support function & Gauge function \\
\hline
$\Csep$ & $\|A^\top Z B\|_{\infty}$ & $\|A^{+}C(B^\top)^{+}\|_{1}$ \\[2pt]
$\Cquantum$ & $\|A^\top Z B\|_{-}$ & $\|A^{+}C(B^\top)^{+}\|_{+}$ \\[2pt]
$\Cmax$ & $\|A^\top Z B\|_{1}$ & $\|A^{+}C(B^\top)^{+}\|_{\infty}$ \\[2pt]
\hline\hline
\end{tabular}
\end{center}
where we define
\begin{equation}\label{eq:asymmetric_norm_pm}
    \|X\|_{\pm} := s_1 + s_2 \pm s_3\, \op{sgn}(\det X)
\end{equation}
for any $3 \times 3$ real matrix $X$ with singular values $s_1 \geq s_2 \geq s_3 \geq 0$.
For $\Csep$ and $\Cmax$, the support and gauge functions form the standard dual pair of the trace norm and operator norm:
\begin{equation}
    \|\cdot\|_{1} \overset{\text{dual}}{\longleftrightarrow} \|\cdot\|_{\infty}.
\end{equation}
For $\Cquantum$, the duality involves the asymmetric norms $\|\cdot\|_{+}$ and $\|\cdot\|_{-}$.
These are not standard norms: they lack absolute homogeneity since $\det(-X) = -\det X$ for $3 \times 3$ matrices, so $\|-X\|_{\pm} = \|X\|_{\mp} \neq \|X\|_{\pm}$ in general.
Instead, $\|\cdot\|_{+}$ and $\|\cdot\|_{-}$ form a dual pair of asymmetric norms in the sense that
\begin{equation}\label{eq:asymmetric_norm_duality}
    \|X\|_{+} = \sup_{Y \neq 0} \frac{\Tr[X^\top Y]}{\|Y\|_{-}}, \qquad
    \|X\|_{-} = \sup_{Y \neq 0} \frac{\Tr[X^\top Y]}{\|Y\|_{+}}.
\end{equation}
This is proved as Lemma~\ref{lem:asymmetric_norm_duality} in the appendix by identifying $\{Y : \|Y\|_{-} \leq 1\} = \op{conv}(\mrm{SO}(3))$ and computing the support function of this convex body.

The above duality structure is inherited from that of the support and gauge functions of the corresponding state spaces, as we now explain.
Define $\tilde{\mcal{S}}_{\text{model}} := \mcal{S}_{\text{model}} - \hat{I}_4/4$ for $\text{model} = \text{sep}, \,\text{qm}, \, \text{max}$.
Each $\tilde{\mcal{S}}_{\text{model}}$ is a closed convex subset of $\mcal{B}^{\mrm{S}}(\CtwoCtwo)$ containing the origin.
Therefore, $\tilde{\mcal{S}}_{\text{model}}$ is the unit ball of its gauge function:
\begin{equation}
    \tilde{\mcal{S}}_{\text{model}}
    = \left\{
        \hat{V} \in \mcal{B}^{\mrm{S}}(\CtwoCtwo) \,:\,
        \gamma_{\tilde{\mcal{S}}_{\text{model}} } (\hat{V}) \leq 1
    \right\},
\end{equation}
and the support function $\phi_{\tilde{\mcal{S}}_{\text{model}} }$ is the dual of the gauge function:
\begin{equation}
    \phi_{\tilde{\mcal{S}}_{\text{model}} }(\hat{W})
    = \sup_{\hat{V} \in \mcal{B}^{\mrm{S}}(\CtwoCtwo)} \left\{\Tr\left[\hat{W}\,\hat{V}\right] \,:\, \gamma_{\tilde{\mcal{S}}_{\text{model}} }(\hat{V}) \leq 1\right\}.
\end{equation}
Note that these support and gauge functions are not seminorms, since the sets $\tilde{\mcal{S}}_{\text{model}}$ are not symmetric about the origin.

The support function of $\Cmodel$ is the composition of $\phi_{\tilde{\mcal{S}}_{\text{model}}}$ with the Bell operator map $\hat{S}_{AB}: Z \mapsto \sum_{i,j} z_{ij}\, \hat{A}_i \otimes \hat{B}_j$.
Since $\Tr[\hat{S}_{AB}(Z)] = 0$, the shift by $\hat{I}_4/4$ drops out:
\begin{align*}
    \phi_{\text{model}}^{AB}(Z) &= \sup_{\hat{\rho} \in \mcal{S}_{\text{model}}} \Tr\left[\hat{S}_{AB}(Z) \, \hat{\rho}\right] \\
    &= \sup_{\hat{V} \,:\, \gamma_{\tilde{\mcal{S}}_{\text{model}}}(\hat{V}) \leq 1} \Tr\left[\hat{S}_{AB}(Z) \, \hat{V}\right] \\
    &= \phi_{\tilde{\mcal{S}}_{\text{model}}} (\hat{S}_{AB}(Z)).
    \numberthis \label{eq:support_pullback}
\end{align*}
Therefore, $\phi_{\text{model}}^{AB}$ is the pullback of $\phi_{\tilde{\mcal{S}}_{\text{model}}}$ through the map~$\hat{S}_{AB}$.

For the gauge function, define $\Psi_{AB} (\hat{V}) := \Phi_{AB}(\hat{I}_4/4 + \hat{V})$.
Since $\Phi_{AB}(\hat{I}_4/4) = 0$, the map $\Psi_{AB}$ is the restriction of $\Phi_{AB}$ to the traceless part, and $\Psi_{AB}$ is linear.
Then
\begin{align*}
    C \in \Cmodel & \iff \exists \, \hat{V} \in \mcal{B}^{\mrm{S}}(\CtwoCtwo) \;\text{s.t.}\; C = \Psi_{AB}(\hat{V}),\; \gamma_{\tilde{\mcal{S}}_{\text{model}}} (\hat{V}) \leq 1 \\
    & \iff \inf_{\hat{V}} \left\{
        \gamma_{\tilde{\mcal{S}}_{\text{model}}} (\hat{V}) \,:\, C = \Psi_{AB}(\hat{V})
    \right\} \leq 1 \\
    & \iff \gamma_{\tilde{\mcal{S}}_{\text{model}}} \left( \Psi_{AB}^{\spi} (C) \right) \leq 1,
\end{align*}
where $\Psi_{AB}^{\spi}$ denotes the standard partial inverse of $\Psi_{AB}$~\cite{lee2022universal}, i.e., the minimum-norm preimage under $\Psi_{AB}$.
Therefore
\begin{equation}\label{eq:gauge_pushforward}
    \gamma_{\text{model}}^{AB}(C) = \gamma_{\tilde{\mcal{S}}_{\text{model}}} \left(\Psi_{AB}^{\spi}(C)\right),
\end{equation}
and the gauge function $\gamma_{\text{model}}^{AB}$ is the pushforward of $\gamma_{\tilde{\mcal{S}}_{\text{model}}}$ through $\Psi_{AB}$.

Equations~\eqref{eq:support_pullback} and~\eqref{eq:gauge_pushforward} reveal that the duality between the support and gauge functions at the level of correlation sets is inherited from the duality between $\phi_{\tilde{\mcal{S}}_{\text{model}}}$ and $\gamma_{\tilde{\mcal{S}}_{\text{model}}}$ at the level of state spaces.
The overall structure is depicted in Fig.~\ref{fig:duality_diagram}.

\begin{figure}[t]
\centering
\includegraphics[width=150mm]{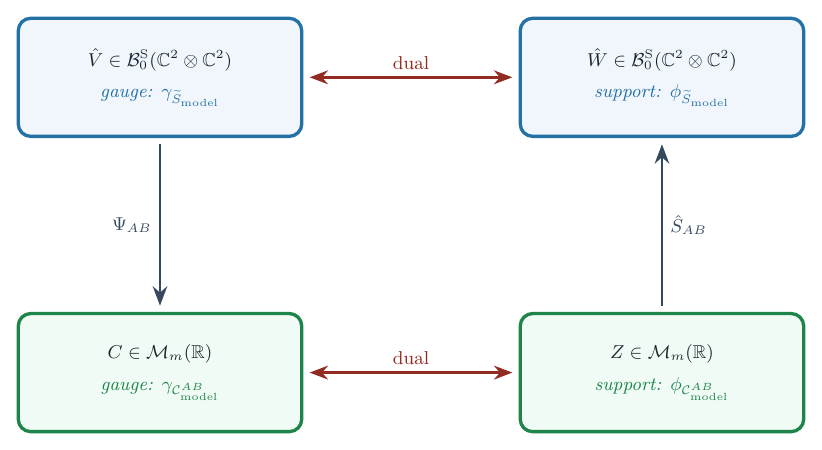}

\caption{Duality structure of the support and gauge functions.
The upper level represents the traceless subspace $\mcal{B}^{\mrm{S}}_0(\CtwoCtwo) := \{\hat{X} \in \mcal{B}^{\mrm{S}}(\CtwoCtwo) : \Tr[\hat{X}] = 0\}$ of the operator space, where the gauge function $\gamma_{\tilde{\mcal{S}}_{\text{model}}}$ and the support function $\phi_{\tilde{\mcal{S}}_{\text{model}}}$ form a dual pair.
The lower level represents the correlation matrix space $\MmR$.
The Bell operator map $\hat{S}_{AB}$ pulls back the support function from $\hat{W}$ to $Z$ (Eq.~\eqref{eq:support_pullback}), while the map $\Psi_{AB}$ pushes forward the gauge function from $\hat{V}$ to $C$ (Eq.~\eqref{eq:gauge_pushforward}).}
\label{fig:duality_diagram}
\end{figure}

A natural direction for future work is to extend this framework to state spaces constrained by Schmidt rank or Schmidt number, where the relevant norms are the $k$-norms studied in Ref.~\cite{JohnstonKribs2013}.

\subsection{Symmetry of correlation sets}

At the state-space level, all three shifted state spaces $\tilde{\mcal{S}}_{\text{model}}$ are asymmetric about the origin, so the gauge and support functions $\gamma_{\tilde{\mcal{S}}_{\text{model}}}$ and $\phi_{\tilde{\mcal{S}}_{\text{model}}}$ are not seminorms.
The correlation map, however, treats the three models differently.

The correlation sets $\Csep$ and $\Cmax$ are symmetric: $C \in \Csep \iff -C \in \Csep$, and likewise for $\Cmax$.
Consequently, their gauge and support functions reduce to genuine norms (the trace norm and operator norm), despite the asymmetry of the underlying state spaces.
To see why, consider the map
\begin{equation}\label{eq:symmetrization_map}
    \Theta : \hat{\rho} \;\mapsto\; (\hat{I}_2 \otimes \hat{\sigma}_2)\, (\mrm{id} \otimes \top)(\hat{\rho})\, (\hat{I}_2 \otimes \hat{\sigma}_2),
\end{equation}
which is the composition of partial transposition on Bob's subsystem with conjugation by the local unitary $\hat{I}_2 \otimes \hat{\sigma}_2$.
Since $\hat{\sigma}_2 \hat{\sigma}_k \hat{\sigma}_2 = -\hat{\sigma}_k^\top$ for $k = 1,2,3$, this map reverses the sign of all correlations: $\Phi^{AB}(\Theta(\hat{\rho})) = -\Phi^{AB}(\hat{\rho})$.
Moreover, $\Theta$ preserves both $\Ssep$ and $\Smax$.
Let $\Gamma := \mrm{id} \otimes \top$ denote partial transposition on Bob's subsystem.
Since $\Gamma$ maps every product state $\hat{\rho}_A \otimes \hat{\rho}_B$ to $\hat{\rho}_A \otimes \hat{\rho}_B^\top$, which is again a product state, we have $\Gamma(\op{SEP}) = \op{SEP}$.
The invariance of $\op{SEP}^*$ follows from the general theory of dual cones.
The map $\Gamma$ is self-adjoint with respect to the Hilbert--Schmidt inner product,
\begin{equation}\label{eq:partial_transp_self_adjoint}
    \Tr\bigl[M\,\Gamma(\hat{\rho})\bigr] = \Tr\bigl[\Gamma(M)\,\hat{\rho}\bigr]
    \quad \text{for all } M \in \op{SEP},\hat{\rho} \in \op{SEP}^*.
\end{equation}
and satisfies $\Gamma^2 = \mrm{id}$, so $\Tr[M\,\hat{\rho}] = \Tr[\Gamma(M)\,\Gamma(\hat{\rho})]$.
Therefore,
\begin{align}
    \hat{\rho} \in \op{SEP}^*
    &\iff \forall M \in \op{SEP},\; \Tr[M\,\hat{\rho}] \geq 0 \notag\\
    &\iff \forall M \in \op{SEP},\; \Tr[\Gamma(M)\,\Gamma(\hat{\rho})] \geq 0 \notag\\
    &\iff \forall M' \in \Gamma(\op{SEP}),\; \Tr[M'\,\Gamma(\hat{\rho})] \geq 0 \notag\\
    &\iff \Gamma(\hat{\rho}) \in \op{SEP}^*, \label{eq:sep_star_invariance}
\end{align}
where the last step uses $\Gamma(\op{SEP}) = \op{SEP}$.
Hence $\Gamma(\op{SEP}^*) = \op{SEP}^*$ (and consequently $\Gamma(\Smax) = \Smax$).
Invariance under $\Theta$ follows by the same reasoning: conjugation by the local unitary $\hat{I}_2 \otimes \hat{\sigma}_2$ is a bijection on $\mcal{B}^{\mrm{S}}(\CtwoCtwo)$ that preserves both $\op{SEP}$ and $\op{SEP}^*$, so the chain~\eqref{eq:sep_star_invariance} applies to $\Theta$ as well.
Therefore $C \in \Csep$ implies $-C = \Phi^{AB}(\Theta(\hat{\rho})) \in \Csep$, and similarly for $\Cmax$.

In contrast, the quantum state space $\Squantum$ is \emph{not} invariant under $\Theta$---partial transposition can map an entangled state outside $\Squantum$---so the above argument does not apply to $\Cquantum$.
Indeed, when $\rank A, \rank B \geq 3$, the set $\Cquantum$ is not symmetric, and the asymmetry of $\Squantum$ about the maximally mixed state survives in the correlation space as the asymmetric norms $\|\cdot\|_{\pm}$.
However, when $\rank A \leq 2$ or $\rank B \leq 2$, the set $\Cquantum$ is symmetric.
In this case, the observables of the party with rank-deficient settings span at most a two-dimensional subspace of $\mbb{R}^3$, so there exists a unit vector $\bs{n}$ orthogonal to all of them.
The rotation by $\pi$ about $\bs{n}$ is a local unitary that flips the sign of all observables in this subspace while preserving $\Squantum$, thereby providing the required symmetry.

\subsection{Role of $m$ and $r$ in distinguishing the correlation sets}
The maximum ratio results (Theorems~\ref{thm:maximum_ratio_of_support_functions} and~\ref{thm:maximum_ratio_of_support_functions_beyond_quantum}) reveal that the distinguishability of $\Csep$, $\Cquantum$, and $\Cmax$ is governed by $r := \min\{\rank A, \rank B\}$, which counts the number of independent physical observables per party.

When $r = 1$, each party has only one independent observable, and all three sets coincide: $\Csep = \Cquantum = \Cmax$. When $r = 2$---which includes the CHSH scenario ($m = 2$) as well as $m \geq 3$ with rank-deficient measurement matrices---$\Csep$ and $\Cquantum$ can be distinguished (the maximum ratio of support functions equals~$2$), but $\Cquantum = \Cmax$, so beyond-quantum correlations cannot be detected. Only when $m \geq 3$ and $r = 3$ do all three sets become distinct: $\Csep \subsetneq \Cquantum \subsetneq \Cmax$, with the maximum ratios equal to~$3$. In particular, even for $m \geq 3$, the sets cannot be fully distinguished unless $r = 3$, i.e., unless both parties have three linearly independent measurement directions.

\subsection{Generalizations}\label{subsec:generalizations}
The results of this paper exploit the structure of two-qubit systems in a crucial way.
We discuss how the derivations extend---and where they break down---in multipartite and higher-dimensional settings.

As discussed in the previous subsection, the distinguishability of the three correlation sets is governed by $r = \min\{\rank A, \rank B\}$, the number of linearly independent measurement directions per party.
For two-qubit systems, each party's Bloch vectors lie in $\mbb{R}^3$, so $r \leq 3$.
In particular, once $m > 3$, increasing the number of measurement choices per party does not increase $r$, and the additional observables provide no further discrimination power.
This limitation motivates extending the framework to multipartite systems and higher-dimensional local Hilbert spaces, where larger values of $r$ become accessible.

\subsubsection{Multipartite settings}
Consider the $(n,m,2)$ scenario, in which $n$ parties each perform $m$ dichotomic measurements on a local qubit.
The observables of the $p$-th party ($p = 1,\ldots,n$) are
\begin{equation}
    \hat{A}^{(p)}_i = \bs{a}^{(p)}_i \cdot \hat{\bs{\sigma}}, \quad i \in \{1,\ldots,m\},
\end{equation}
where $\bs{a}^{(p)}_i \in \mbb{R}^3$ are unit vectors.
We collect them into matrices
\begin{equation}
    A^{(p)} := \begin{bmatrix}
        (\bs{a}^{(p)}_1)^\top \\ \vdots \\ (\bs{a}^{(p)}_m)^\top
    \end{bmatrix} \in \mcal{M}_{m,3}(\mbb{R}), \quad
    p \in \{1,\ldots,n\}.
\end{equation}

For the separable correlation set, the extreme points of the fully separable state space are pure product states, so the support function reduces to a supremum over local Bloch vectors (cf.\ the bipartite proof in Appendix~\ref{app:proof_support_Csep}).
Since the correlation produced by a product state $\hat{\rho}_1 \otimes \cdots \otimes \hat{\rho}_n$ factorizes as $c_{i_1 \cdots i_n} = \prod_p (\bs{a}^{(p)}_{i_p} \cdot \bs{r}_p)$, this gives
\begin{align*}
    \phi_{\text{sep}}^{\{A^{(p)}\}}(Z)
    &= \sup_{\|\bs{r}_1\|=\cdots=\|\bs{r}_n\|=1} \sum_{i_1,\ldots,i_n} z_{i_1 \cdots i_n} \prod_{p=1}^{n} \bs{a}^{(p)}_{i_p} \cdot \bs{r}_p \\
    &= \norm{T^{\{A^{(p)}\}}_Z}_{\mrm{inj}},
    \numberthis
\end{align*}
where
\begin{equation}
   \left( T^{\{A^{(p)}\}}_Z \right)_{j_1 \cdots j_n}
   := \sum_{i_1, \ldots, i_n} z_{i_1 \cdots i_n} \prod_{p=1}^{n} a^{(p)}_{i_p j_p},
\end{equation}
and $\|\cdot\|_{\mrm{inj}}$ denotes the injective tensor norm:
\begin{equation}
    \|T\|_{\mrm{inj}}
    := \sup_{\|\bs{r}_1\|=\cdots=\|\bs{r}_n\|=1}
    \sum_{j_1, \ldots, j_n} T_{j_1 \cdots j_n} \prod_{p=1}^{n} (r_p)_{j_p}.
\end{equation}
{Computing the injective norm of a general tensor is NP-hard~\cite{hillar2013most},} so a closed-form expression analogous to the bipartite operator norm $\|A^\top Z B\|_\infty$ is unlikely to exist.

For the quantum correlation set, the support function equals the largest eigenvalue of the Bell operator (cf.\ Appendix~\ref{app:proof_support_Cquantum} for the bipartite case):
\begin{equation}
    \phi_{\text{qm}}^{\{A^{(p)}\}} (Z) = \lambda_{\max} \!\left(\hat{S}_{\{A^{(p)}\}}(Z)\right),
\end{equation}
where
\begin{equation}
    \hat{S}_{\{A^{(p)}\}}(Z) := \sum_{i_1,\ldots,i_n} z_{i_1 \cdots i_n}\, \hat{A}^{(1)}_{i_1} \otimes \cdots \otimes \hat{A}^{(n)}_{i_n}.
\end{equation}
In the bipartite case, the SVD of $A^\top Z B$ reduces the Bell operator to the diagonal form $\sum_k s_k\, \hat{\sigma}_k \otimes \hat{\sigma}_k$, where the three operators $\hat{\sigma}_k \otimes \hat{\sigma}_k$ commute and are simultaneously diagonalizable, yielding an explicit formula for $\lambda_{\max}$ (see Appendix~\ref{app:proof_support_Cquantum}).
In the multipartite case, {the analogous decomposition is the higher-order singular value decomposition (HOSVD)~\cite{Lathauwer2000multilinear}, whose core tensor is not diagonal in general.}
Therefore, an explicit analytic formula for $\lambda_{\max}$ does not follow from the same argument.

For the maximal correlation set, the bipartite derivation relies on the fact that the extreme points of $\Smax$ are precisely the quantum pure states and their partial transposes~\cite[Proposition~12]{arai2024detecting_supplemental}.
This characterization does not extend to the multipartite or higher-dimensional cases, where the extreme-point structure of $\mcal{S}(\op{SEP}^*)$ is more complex.

\subsubsection{Higher-dimensional local Hilbert spaces}

For a bipartite system with local Hilbert space $\mbb{C}^d$, the generalized Bloch vector of a state lies in $\mbb{R}^{d^2-1}$, so the measurement matrices have $d^2-1$ columns and $r$ can be as large as $d^2-1$.
This makes the higher-dimensional setting a natural arena for accessing richer correlation structures.

Consider the bipartite $(2,m,2)$ scenario with local Hilbert space $\mbb{C}^d$ for general $d \geq 2$.
A natural analogue of the Pauli basis is a set of generators of $\mrm{SU}(d)$: the traceless self-adjoint operators $\{\hat{\lambda}_i\}_{i=1}^{d^2-1}$ satisfying $\Tr[\hat{\lambda}_i \hat{\lambda}_j] = \delta_{ij}$.
Every self-adjoint operator on $\mbb{C}^d$ admits the expansion
\begin{equation}
    \hat{X} = \frac{\Tr[\hat{X}]}{d}\,\hat{I}_d
    + \sum_{i=1}^{d^2-1} x_i\, \hat{\lambda}_i,
\end{equation}
where $x_i := \Tr[\hat{X}\,\hat{\lambda}_i]$.
The measurement matrices $A \in \mcal{M}_{m,d^2-1}(\mbb{R})$ and $B \in \mcal{M}_{m,d^2-1}(\mbb{R})$ are defined by $A_{ki} := \Tr[\hat{A}_k\, \hat{\lambda}_i]$ and $B_{ki} := \Tr[\hat{B}_k\, \hat{\lambda}_i]$.
However, extending the support and gauge function formulae to this setting faces two distinct obstacles.

For the separable correlation set, the proof in the qubit case (Appendix~\ref{app:proof_support_Csep}) reduces $\phisep(Z)$ to the supremum of the bilinear form $\bs{r}_A^\top (A^\top Z B)\, \bs{r}_B$ over local Bloch vectors $\bs{r}_A, \bs{r}_B$.
Since the qubit Bloch body $\mrm{Bl}(2)$ is the unit ball, this supremum equals the operator norm $\|A^\top Z B\|_\infty$.
For $d \geq 3$, the same reduction gives
\begin{equation}
    \phisep(Z) = \sup_{\bs{r}_A, \bs{r}_B \in \mrm{Bl}(d)} \bs{r}_A^\top (A^\top Z B)\, \bs{r}_B,
\end{equation}
where $\mrm{Bl}(d) := \{\bs{r} \in \mbb{R}^{d^2-1} : \frac{1}{d}\hat{I}_d + \sum_i r_i \hat{\lambda}_i \geq 0\}$ is the generalized Bloch body.
However, $\mrm{Bl}(d)$ is a strict subset of a ball with a more complex geometry~\cite{kimura2003bloch}, so the supremum does not simplify to a standard matrix norm.

For the quantum correlation set, the support function is still the largest eigenvalue of the Bell operator (by the same reasoning as in the qubit case; see Appendix~\ref{app:proof_support_Cquantum}):
\begin{equation}
    \hat{S}_{AB}(Z) := \sum_{i,j} z_{ij}\, \hat{A}_i \otimes \hat{B}_j
    = \sum_{k,l} (A^\top Z B)_{kl}\, \hat{\lambda}_k \otimes \hat{\lambda}_l.
\end{equation}
Applying the SVD $A^\top Z B = U \Sigma V^\top$ and rotating the operator basis as $\hat{\mu}_k := \sum_l U_{lk} \hat{\lambda}_l$, $\hat{\nu}_k := \sum_l V_{lk} \hat{\lambda}_l$, the Bell operator reduces to $\hat{S}_{AB}(Z) = \sum_k s_k\, \hat{\mu}_k \otimes \hat{\nu}_k$, whose eigenvalues coincide with those of
\begin{equation}
    \sum_{k=1}^{d^2-1} s_k\, \hat{\lambda}_k \otimes \hat{\lambda}_k,
\end{equation}
where $s_1 \geq \cdots \geq s_{d^2-1} \geq 0$ are the singular values of $A^\top Z B$.
For $d = 2$, the three operators $\hat{\sigma}_k \otimes \hat{\sigma}_k$ ($k = 1,2,3$) commute and are simultaneously diagonalizable, which enables the explicit eigenvalue formula.
For $d \geq 3$, the operators $\hat{\lambda}_k \otimes \hat{\lambda}_k$ do not commute in general and lack a common eigenbasis, so deriving an analytic formula for the largest eigenvalue remains an open problem.

\subsection{Duality between the separable and maximal correlation sets}

The separable and maximal state spaces arise from dual cones: $\Ssep = \mcal{S}(\op{SEP})$ and $\Smax = \mcal{S}(\op{SEP}^*)$.
In our two-qubit $(2,m,2)$ results, the support and gauge functions of $\Csep$ and $\Cmax$ exhibit a suggestive pattern reminiscent of this cone-level duality.
As summarized in the table in Sec.~\ref{sec:duality_structure}, the roles of the trace norm and operator norm are exchanged between $\Csep$ and $\Cmax$:
\begin{alignat}{2}
    \phisep(Z) &= \|A^\top Z B\|_\infty, &\qquad \gammasep(C) &= \|A^{+}C(B^\top)^{+}\|_1, \\
    \phimax(Z) &= \|A^\top Z B\|_1, &\qquad \gammamax(C) &= \|A^{+}C(B^\top)^{+}\|_\infty.
\end{alignat}
Specifically, two dual relationships are visible:
the support functions $\phisep = \|\cdot\|_\infty$ and $\phimax = \|\cdot\|_1$ form a dual pair, and the gauge functions $\gammasep = \|\cdot\|_1$ and $\gammamax = \|\cdot\|_\infty$ form another dual pair.
It is tempting to attribute these dualities to the cone-level duality $\op{SEP} \leftrightarrow \op{SEP}^*$ at the state-space level.
However, this connection is not straightforward: the support and gauge functions of the correlation sets are obtained from those of the state spaces through the pullback and pushforward by the correlation map (Eqs.~\eqref{eq:support_pullback} and~\eqref{eq:gauge_pushforward}), and it is a priori unclear whether the cone duality survives this process in a form as simple as a norm duality.

In the two-qubit $(2,m,2)$ scenario studied here, the correlation sets are determined by the singular values of a $3 \times 3$ matrix, which admits a complete characterization by the trace norm and operator norm.
Whether an analogous norm-level duality between the separable and maximal correlation sets persists in more general settings---such as multipartite scenarios, higher-dimensional local Hilbert spaces, or scenarios with more than two outcomes per measurement---remains an open question.

\section{Conclusion}\label{sec:conclusion}

We derived explicit formulas for the support functions and gauge functions of the three device-dependent correlation sets---$\Csep$, $\Cquantum$, and $\Cmax$---in the $(2,m,2)$ Bell scenario for two-qubit systems.
All formulas are expressed as pairs of dual matrix norms applied to common variable transformations: $A^\top Z B$ for the support functions and $A^{+} C (B^\top)^{+}$ for the gauge functions.
For the separable and maximal correlation sets, the relevant norms are the operator norm~$\|\cdot\|_\infty$ and trace norm~$\|\cdot\|_1$, interchanged between the two sets.
For the quantum correlation set, the asymmetric norms $\|\cdot\|_\pm$, defined via $s_1 + s_2 \pm s_3 \operatorname{sgn}(\det)$, emerge naturally; this asymmetry reflects the behavior of the quantum state space under partial transposition.

We applied these functions to the detection of entanglement and beyond-quantum states.
The gauge function $\gammasep$ provides a quantitative robustness measure for entanglement detection: for Werner states, the optimal critical noise fraction is $p_{\mrm{crit}} = 1 - 1/r$, where $r := \min\{\rank A, \rank B\}$.
When $r = 3$, this threshold $p_{\mrm{crit}} = 2/3$ coincides with the PPT threshold; the PPT criterion is necessary and sufficient for separability in two-qubit systems~\cite{horodecki1996separability}. This exceeds the thresholds of device-independent tests such as the CHSH and $I_{3322}$ inequalities.
Analogously, the gauge function $\gammaquantum$ provides a quantitative robustness measure for beyond-quantum state detection: for a family of noisy beyond-quantum states $\hat{\tau}_p = (1-p)\,\hat{\rho}_{\mrm{max}} + p\,\hat{I}_4/4$ with $\hat{\rho}_{\mrm{max}} \in \Smax \setminus \Squantum$, the critical noise fraction is $p_{\mrm{crit}}^{\mrm{bqs}} = 2/3$ when $r = 3$, which coincides with the boundary beyond which $\hat{\tau}_p$ fails to be a valid quantum state.
When $r \leq 2$, $\Cmax = \Cquantum$ and beyond-quantum correlations are undetectable; beyond-quantum states in the maximal entanglement structure are thus detectable via correlations if and only if both parties possess three linearly independent measurement directions.
These detection limits depend only on $r$ and are independent of the specific measurement directions.

Several directions for generalization remain open, including extensions to multipartite systems---where the separable support function reduces to the injective tensor norm, whose computation is NP-hard in general---and to higher-dimensional local Hilbert spaces, where the Bloch body geometry is more complex than the unit ball.
The convex-analytic framework developed here, which provides a complete dual characterization of device-dependent correlation sets via support and gauge functions, provides a starting point for such extensions.

\begin{acknowledgments}

\end{acknowledgments}
\appendix

\section{Proofs of the Support Function Theorems}\label{appsec:proofs_support_functions}

We begin with a standard identity that relates matrix traces to Kronecker products via vectorization.
\begin{lemma}[Vectorization identity]\label{lem:vectorization}
    For a $p \times q$ matrix $M$ and matrices $X$, $Y$ of compatible sizes,
    \begin{equation}
        \op{vec}(XMY) = (Y^\top \otimes X)\, \op{vec}(M),
        \label{eq:vectorization_identity}
    \end{equation}
    where $\op{vec}(M)$ denotes the vector obtained by stacking the columns of $M$.
\end{lemma}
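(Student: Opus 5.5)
The identity is linear in $M$. So I will first check it on the matrix units $M = \bs{e}_k \bs{f}_l^\top$ ($1 \le k \le p$, $1 \le l \le q$), where $\bs{e}_k \in \mbb{R}^p$ and $\bs{f}_l \in \mbb{R}^q$ are standard basis vectors. These units form a basis of $\mcal{M}_{p,q}$.

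Both $\op{vec}(X\,\cdot\,Y)$ and $M \mapsto (Y^\top \otimes X)\op{vec}(M)$ are linear maps. Hence agreement on the basis gives the general statement.

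\textbf{Left-hand side.} For a matrix unit, $XMY = (X\bs{e}_k)(Y^\top \bs{f}_l)^\top$ is an outer product $\bs{u}\bs{v}^\top$ with $\bs{u} = X\bs{e}_k$ and $\bs{v} = Y^\top \bs{f}_l$.

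\textbf{Right-hand side.} By the column-stacking convention, $\op{vec}(\bs{u}\bs{v}^\top) = \bs{v} \otimes \bs{u}$: the $j$-th block of the stacked vector is the $j$-th column $v_j \bs{u}$. In particular $\op{vec}(\bs{e}_k \bs{f}_l^\top) = \bs{f}_l \otimes \bs{e}_k$.

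\textbf{Matching.} The mixed-product property of the Kronecker product gives
\begin{equation*}
(Y^\top \otimes X)(\bs{f}_l \otimes \bs{e}_k) = (Y^\top \bs{f}_l) \otimes (X\bs{e}_k) = \bs{v} \otimes \bs{u}.
\end{equation*}
This equals $\op{vec}(XMY)$ by the outer-product formula, so the identity holds on every matrix unit.

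Alternatively, one can compare entries directly. Write $(XMY)_{ij} = \sum_{k,l} x_{ik} m_{kl} y_{lj}$. The entry of $\op{vec}(XMY)$ at position $(j-1)r + i$, where $r$ is the number of rows of $X$, is this sum. The corresponding row of $Y^\top \otimes X$ has entries $y_{lj} x_{ik}$ in column $(l-1)p + k$, which is exactly where $m_{kl}$ sits in $\op{vec}(M)$.

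The only real care needed is the index and ordering convention: column-stacking produces $\bs{v} \otimes \bs{u}$ rather than $\bs{u} \otimes \bs{v}$, and that ordering is what makes $Y^\top$ appear on the left of the Kronecker product.
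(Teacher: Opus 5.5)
Your proof is correct and follows essentially the same route as the paper: expand $M$ in matrix units, use $\op{vec}(\bs{u}\bs{v}^\top)=\bs{v}\otimes\bs{u}$ together with the mixed-product property $(Y^\top\otimes X)(\bs{f}_l\otimes\bs{e}_k)=(Y^\top\bs{f}_l)\otimes(X\bs{e}_k)$, and conclude by linearity. Your additional entrywise index check is a valid extra confirmation that the paper does not include.
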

\begin{proof}
    Let $\{\bs{e}_j\}_{j=1}^{q}$ and $\{\bs{f}_k\}_{k=1}^{r}$ denote the standard bases of $\mbb{R}^q$ and $\mbb{R}^r$, respectively, so that $M = \sum_{j,k} m_{jk}\, \bs{e}_j \bs{f}_k^\top$.
    Then
    \begin{equation}
        \op{vec}(XMY)
        = \sum_{j,k} m_{jk}\, \op{vec}(X \bs{e}_j \bs{f}_k^\top Y).
    \end{equation}
    Since $\op{vec}(\bs{u}\, \bs{v}^\top) = \bs{v} \otimes \bs{u}$ for any column vectors $\bs{u}$ and $\bs{v}$, we have
    \begin{equation}
        \op{vec}(X \bs{e}_j\, (Y^\top \bs{f}_k)^\top)
        = Y^\top \bs{f}_k \otimes X \bs{e}_j
        = (Y^\top \otimes X)(\bs{f}_k \otimes \bs{e}_j).
    \end{equation}
    Substituting this back and using $\op{vec}(M) = \sum_{j,k} m_{jk}\, (\bs{f}_k \otimes \bs{e}_j)$ yields \eqref{eq:vectorization_identity}.
\end{proof}

\subsection{Proof of Theorem~\ref{thm:support_function_of_Csep}: Support function of $\Csep$}
\label{app:proof_support_Csep}
\begin{proof}
\noindent\textbf{General case.}
For a product state $\hat{\rho}_A \otimes \hat{\rho}_B$ with Bloch vectors $\bs{r}_A, \bs{r}_B \in \mbb{R}^3$ ($\|\bs{r}_A\|, \|\bs{r}_B\| \leq 1$), the correlation matrix has entries
\begin{equation}
    c_{ij} = \Tr[\hat{\rho}_A\, \hat{A}_i]\, \Tr[\hat{\rho}_B\, \hat{B}_j]
    = (\bs{a}_i \cdot \bs{r}_A)(\bs{b}_j \cdot \bs{r}_B)
    = (A \bs{r}_A)_i\, (B \bs{r}_B)_j,
\end{equation}
so $C = (A \bs{r}_A)(B \bs{r}_B)^\top$.
Since the extreme points of $\Ssep$ are pure product states ($\|\bs{r}_A\| = \|\bs{r}_B\| = 1$) and the support function is the supremum of a linear functional over a compact convex set, we obtain
\begin{align*}
    \phisep(Z)
    &= \sup_{\hat{\rho} \in \Ssep} \Tr[Z^\top \Phi_{AB}(\hat{\rho})]\\
    &= \sup_{\|\bs{r}_A\| = \|\bs{r}_B\| = 1} \Tr\!\left[Z^\top (A \bs{r}_A)(B \bs{r}_B)^\top\right] \\
    &= \sup_{\|\bs{r}_A\| = \|\bs{r}_B\| = 1} \bs{r}_A^\top (A^\top Z B)\, \bs{r}_B \numberthis \label{eq:app_phisep_sup}\\
    &= \opnorm{A^\top Z B} \\
    &= s_1, \numberthis
    \label{eq:app_phisep_general}
\end{align*}
where the last equality follows from the variational characterization of the operator norm (e.g., Theorem 3.4.1 in \cite{Horn1994topics}).

\noindent\textbf{Specialization to $m=2$.}
When $m=2$, the $3 \times 3$ matrix $A^\top Z B$ has rank at most~$2$, so its singular values are $s_1 \geq s_2 \geq 0$ (with $s_3 = 0$).
Since $\bs{a}_i$ and $\bs{b}_j$ are unit vectors, the Gram matrices satisfy
\begin{equation}
    A A^\top = \begin{pmatrix} 1 & \bs{a}_1 \cdot \bs{a}_2 \\ \bs{a}_1 \cdot \bs{a}_2 & 1 \end{pmatrix} = G_\alpha,
    \qquad
    B B^\top = G_\beta.
\end{equation}
We define the $2 \times 2$ positive semidefinite matrix
\begin{equation}
    P := G_\alpha\, Z\, G_\beta\, Z^\top.
\end{equation}
The squared singular values $s_1^2, s_2^2$ are the nonzero eigenvalues of the $3\times 3$ matrix $(A^\top Z B)^\top (A^\top Z B) = B^\top Z^\top G_\alpha\, Z\, B$.
Since $XY$ and $YX$ share the same nonzero eigenvalues (with multiplicities) for any matrices $X$, $Y$ of compatible sizes,
taking $X = B^\top$ and $Y = Z^\top G_\alpha\, Z\, B$ shows that these coincide with the eigenvalues of $Z^\top G_\alpha\, Z\, G_\beta$.
Applying the same property once more with $X = Z^\top$ and $Y = G_\alpha\, Z\, G_\beta$, we conclude that $s_1^2, s_2^2$ are precisely the eigenvalues of $P = G_\alpha\, Z\, G_\beta\, Z^\top$.
The trace and determinant of $P$ give
\begin{equation}
    s_1^2 + s_2^2 = \Tr P, \qquad
    s_1^2\, s_2^2 = \det P = \sin^2\!\alpha\, \sin^2\!\beta\, (\det Z)^2.
    \label{eq:app_sep_trace_det}
\end{equation}
The largest eigenvalue of a $2 \times 2$ matrix with trace $T$ and determinant $D$ is $(T + \sqrt{T^2 - 4D}\,)/2$, so
\begin{equation}
    s_1^2 = \frac{\Tr P + \sqrt{(\Tr P)^2 - 4 \det P}}{2}.
    \label{eq:app_sep_s1_eigenvalue}
\end{equation}

We claim that $\sqrt{(\Tr P)^2 - 4 \det P} = |\Tr[G_\alpha\, Z\, H_\beta\, Z^\top]|$.
To verify this, define the real symmetric matrix $R := Z^\top G_\alpha\, Z$.
Then $\Tr P = \Tr[G_\beta\, R]$ and $\Tr[G_\alpha\, Z\, H_\beta\, Z^\top] = \Tr[H_\beta\, R]$.
Writing $R = \bigl(\begin{smallmatrix} a & c \\ c & b \end{smallmatrix}\bigr)$, we compute
\begin{align}
    \Tr[G_\beta\, R] &= a + b + 2c \cos\beta, \\
    \Tr[H_\beta\, R] &= a\, \ep{-\im\beta} + b\, \ep{\im\beta} + 2c.
\end{align}
A direct calculation gives
\begin{equation}
    |\Tr[H_\beta\, R]|^2
    = \bigl((a{+}b)\cos\beta + 2c\bigr)^2 + (b{-}a)^2 \sin^2\!\beta
    = (\Tr[G_\beta\, R])^2 - 4 \sin^2\!\beta\, \det R.
    \label{eq:app_sep_H_identity}
\end{equation}
Since $\det R = \sin^2\!\alpha\, (\det Z)^2$, we have $4 \sin^2\!\beta\, \det R = 4 \det P$, which proves the claim.
Substituting into~\eqref{eq:app_sep_s1_eigenvalue} yields the matrix form~\eqref{eq:support_function_of_Csep_matrix_form}:
\begin{equation}
    \phisep(Z) = \sqrt{\frac{\Tr[G_\alpha\, Z\, G_\beta\, Z^\top] + |\Tr[G_\alpha\, Z\, H_\beta\, Z^\top]|}{2}}.
\end{equation}

Finally, Lemma~\ref{lem:vectorization} gives
\begin{align}
    \Tr[G_\alpha\, Z\, G_\beta\, Z^\top] &= \bs{z}^\top (G_\beta \otimes G_\alpha)\, \bs{z} = \bs{z}^\top K_{\alpha,\beta}\, \bs{z}, \\
    \Tr[G_\alpha\, Z\, H_\beta\, Z^\top] &= \bs{z}^\top (H_\beta \otimes G_\alpha)\, \bs{z} = \bs{z}^\top J_{\alpha,\beta}\, \bs{z},
\end{align}
which yields the vectorized form~\eqref{eq:support_function_of_Csep_as_a_quadratic_form}.
\end{proof}

\subsection{Proof of Theorem~\ref{thm:support_function_of_Cquantum}: Support function of $\Cquantum$}
\label{app:proof_support_Cquantum}
\begin{proof}
Since $\Cquantum$ is the image of the state space $\Squantum$ under the linear map $\Phi_{AB}:\hat{\rho} \mapsto (\Tr[\hat{\rho}\, (\hat{A}_i \otimes \hat{B}_j)])_{i,j}$, the support function equals the largest eigenvalue:
\begin{equation}
    \phiquantum(Z)
    = \sup_{\hat{\rho} \in \D(\CtwoCtwo)} \Tr[\hat{\rho}\, \hat{S}_{AB}(Z)]
    = \lambda_{\max}(\hat{S}_{AB}(Z)).
    \label{eq:app_phiqm_lambdamax}
\end{equation}
We now compute the spectrum of $\hat{S}_{AB}(Z)$.

\noindent\textbf{Step 1: Diagonalization via special SVD.}
Expanding each observable in the Pauli basis, we obtain
\begin{equation}
    \hat{S}_{AB}(Z) = \sum_{k,l} (A^\top Z B)_{kl}\, \hat{\sigma}_k \otimes \hat{\sigma}_l.
\end{equation}
Let $A^\top Z B = U \tilde{\Sigma} V^\top$ be the special singular value decomposition~\cite{sanyal2011orbitopes}, where $U, V \in \mrm{SO}(3)$ and $\tilde{\Sigma} = \mathrm{diag}(\tilde{s}_1, \tilde{s}_2, \tilde{s}_3)$ with
\begin{equation}
    \tilde{s}_i = s_i \quad (i=1,2), \qquad \tilde{s}_3 = s_3\, \op{sgn}(\det(A^\top Z B)).
    \label{eq:special_singular_values}
\end{equation}
Here $s_1 \geq s_2 \geq s_3 \geq 0$ are the ordinary singular values of $A^\top Z B$.

Since $U, V \in \mrm{SO}(3)$, there exist unitary operators $\hat{W}_A, \hat{W}_B$ on $\mbb{C}^2$ satisfying
\begin{equation}
    \hat{W}_A\, \hat{\sigma}_k\, \hat{W}_A^\dagger = \sum_{k'} U_{k'k}\, \hat{\sigma}_{k'}, \quad
    \hat{W}_B\, \hat{\sigma}_l\, \hat{W}_B^\dagger = \sum_{l'} V_{l'l}\, \hat{\sigma}_{l'}.
\end{equation}
The conjugation $(\hat{W}_A \otimes \hat{W}_B)^\dagger \hat{S}_{AB}(Z) (\hat{W}_A \otimes \hat{W}_B)$ preserves the spectrum, so we reduce to
\begin{equation}
    \hat{S}_{AB}'(Z) = \sum_{n=1}^{3} \tilde{s}_n\, \hat{\sigma}_n \otimes \hat{\sigma}_n.
    \label{eq:app_S_prime_diagonal}
\end{equation}

\noindent\textbf{Step 2: Eigenvalue calculation.}
The simultaneous eigenvalues of $(\hat{\sigma}_1 \otimes \hat{\sigma}_1,\, \hat{\sigma}_2 \otimes \hat{\sigma}_2,\, \hat{\sigma}_3 \otimes \hat{\sigma}_3)$ are $(\theta_1, \theta_2, \theta_3) \in \{\pm 1\}^3$ subject to the constraint $\theta_1 \theta_2 \theta_3 = -1$. The eigenvalues of $\hat{S}_{AB}'(Z)$ are therefore
\begin{equation}
    \tilde{s}_1 \theta_1 + \tilde{s}_2 \theta_2 + \tilde{s}_3 \theta_3, \qquad \theta_1 \theta_2 \theta_3 = -1.
\end{equation}
Since $\tilde{s}_1 \geq \tilde{s}_2 \geq |\tilde{s}_3|$, the largest eigenvalue is
\begin{equation}
    \lambda_{\max}(\hat{S}_{AB}(Z))
    = \tilde{s}_1 + \tilde{s}_2 - \tilde{s}_3
    = s_1 + s_2 - s_3\, \op{sgn}(\det(A^\top ZB)).
    \label{eq:app_general_formula_lambda_max}
\end{equation}

In the case $m=2$, the $3 \times 3$ matrix $A^\top ZB$ has rank at most~$2$, so $s_3 = 0$ and $\phiquantum(Z) = s_1 + s_2$.

\noindent\textbf{Step 3: Derivation of the explicit formula for $m=2$.}
By the same argument as in Appendix~\ref{app:proof_support_Csep}, the singular values $s_1, s_2$ satisfy
\begin{equation}
    s_1^2 + s_2^2 = \Tr[G_\alpha\, Z\, G_\beta\, Z^\top], \qquad
    s_1^2\, s_2^2 = \det(G_\alpha\, Z\, G_\beta\, Z^\top) = \sin^2\!\alpha\, \sin^2\!\beta\, (\det Z)^2.
\end{equation}
Therefore,
\begin{equation}
    \phiquantum(Z)
    = s_1 + s_2
    = \sqrt{s_1^2 + s_2^2 + 2 s_1 s_2}
    = \sqrt{\Tr[G_\alpha\, Z\, G_\beta\, Z^\top] + 2|\det Z|\sin\alpha\sin\beta},
\end{equation}
which is the matrix form~\eqref{eq:support_function_of_Cquantum_matrix_form}.
By Lemma~\ref{lem:vectorization}, $\Tr[G_\alpha\, Z\, G_\beta\, Z^\top] = \bs{z}^\top K_{\alpha,\beta}\, \bs{z}$, yielding the vectorized form~\eqref{eq:support_function_of_Cquantum_as_a_quadratic_form}.
\end{proof}

\subsection{Proof of Theorem~\ref{thm:support_function_of_Cmax}: Support function of $\Cmax$}
\label{app:proof_support_Cmax}
\begin{proof}
    Since $\Cmax$ is the image of $\Smax$ under the linear map $\hat{\rho} \mapsto (\Tr[\hat{\rho}\, \hat{A}_i \otimes \hat{B}_j])_{i,j}$, we compute the support function by maximizing over the extreme points of $\Smax$:
    \begin{equation}
        \phimax(Z) = \sup_{\hat{\rho} \in \mrm{ext}\left(\Smax\right)} \Tr[\hat{\rho}\, \hat{S}_{AB}(Z)].
    \end{equation}
    The extreme points of $\Smax$ consist of pure quantum states and partial transposes of pure entangled states \cite[Proposition~12]{arai2024detecting_supplemental}. Therefore,
    \begin{equation}
        \phimax(Z) = \max\left\{
            \sup_{\|\psi\|=1} \Tr\left[|\psi\ket \bra\psi|\, \hat{S}_{AB}(Z)\right],~
            \sup_{\|\psi\|=1} \Tr\left[\Gamma(|\psi\ket \bra\psi|)\, \hat{S}_{AB}(Z)\right]
        \right\},
    \end{equation}
    where $\Gamma := \mrm{id} \otimes \top$ denotes the partial transpose map.

    The first term equals the support function of $\Cquantum$:
    \begin{equation}
        \sup_{\|\psi\|=1} \Tr\left[|\psi\ket \bra\psi|\, \hat{S}_{AB}(Z)\right]
        = \phiquantum(Z) = s_1 + s_2 - s_3 \,\op{sgn} (\det(A^\top Z B)).
    \end{equation}

    For the second term, we use $\Tr\left[\hat{X}\Gamma(\hat{Y})\right] = \Tr\left[\Gamma(\hat{X}) \hat{Y}\right]$ for any operators $\hat{X}, \hat{Y}$ to obtain
    \begin{equation}
        \sup_{\|\psi\|=1} \Tr\left[\Gamma(|\psi\ket \bra\psi)\, \hat{S}_{AB}(Z)\right]
        = \sup_{\|\psi\|=1} \Tr\left[|\psi\ket \bra\psi|\, \Gamma(\hat{S}_{AB}(Z))\right]
        = \lambda_{\max}(\Gamma(\hat{S}_{AB}(Z))).
    \end{equation}
    Since $(\mrm{id} \otimes \top)(\hat{\sigma}_k \otimes \hat{\sigma}_l) = \hat{\sigma}_k \otimes \hat{\sigma}_l^{*}$ and $\hat{\sigma}_l^{*} = R_{ll}\, \hat{\sigma}_l$ with $R := \mathrm{diag}(1,-1,1)$, we have
    \begin{equation}
        \Gamma(\hat{S}_{AB}(Z)) = \sum_{k,l} (A^\top Z B\, R)_{kl}\, \hat{\sigma}_k \otimes \hat{\sigma}_l.
    \end{equation}
    Since $\det R = -1$, the special singular values~\eqref{eq:special_singular_values} of $A^\top Z B\, R$ are $s_1, s_2, -s_3\, \op{sgn}(\det(A^\top ZB))$. By the same argument as in Appendix~\ref{app:proof_support_Cquantum}, the largest eigenvalue of $\Gamma(\hat{S}_{AB}(Z))$ is
    \begin{equation}
        \lambda_{\max}(\Gamma(\hat{S}_{AB}(Z)))
        = s_1 + s_2 + s_3\, \op{sgn}(\det(A^\top Z B)).
    \end{equation}

    Combining the two terms, we obtain
    \begin{align*}
        \phimax(Z) &= \max\{s_1 + s_2 - s_3\, \op{sgn}(\det(A^\top Z B)),~ s_1 + s_2 + s_3\, \op{sgn}(\det(A^\top Z B))\} \\
        &= s_1 + s_2 + s_3 \\
        &= \|A^\top Z B\|_{1}. \numberthis
    \end{align*}
\end{proof}

\section{Proofs of the Convex Hull Characterizations}\label{appsec:proof_convex_hull}

All three proofs follow the same strategy: identify the support function of the candidate convex hull with the known support function of the correlation set, and conclude equality of the two convex sets.
A closed convex compact set is uniquely determined by its support function, so it suffices to show that the support functions agree for all $Z$.

\subsection{Proof of Theorem~\ref{thm:convex_hull_characterization_of_Csep}: Convex hull characterization of $\Csep$}
\label{app:proof_convex_hull_Csep}
\begin{proof}
    Let $K := \op{conv}\{A\, \bs{r}_A\bs{r}_B^\top\, B^\top : \bs{r}_A, \bs{r}_B \in \mbb{R}^3,\, \|\bs{r}_A\| = \|\bs{r}_B\| = 1\}$.
    The support function of $K$ is
    \begin{equation}
        \phi_K(Z) = \sup_{C \in K} \Tr[Z^\top C]
        = \max_{\|\bs{r}_A\| = \|\bs{r}_B\| = 1} \Tr[Z^\top A\, \bs{r}_A\bs{r}_B^\top\, B^\top]
        = \max_{\|\bs{r}_A\| = \|\bs{r}_B\| = 1} \bs{r}_A^\top (A^\top Z B)\, \bs{r}_B
        = \|A^\top Z B\|_{\infty},
    \end{equation}
    where the second equality uses the fact that the support function of a convex hull equals the supremum over the generating set, and the last equality is the variational characterization of the operator norm.
    By Theorem~\ref{thm:support_function_of_Csep}, $\phisep(Z) = \|A^\top Z B\|_{\infty} = \phi_K(Z)$ for all $Z$, so $\Csep = K$.
\end{proof}

\subsection{Proof of Theorem~\ref{thm:convex_hull_characterization_of_Cquantum}: Convex hull characterization of $\Cquantum$}
\label{app:proof_convex_hull_Cquantum}
\begin{proof}
    Let $K := \op{conv}\{A Q B^\top : Q \in \mrm{SO}^{-}(3)\}$.
    The support function of $K$ is
    \begin{equation}
        \phi_K(Z) = \max_{Q \in \mrm{SO}^{-}(3)} \Tr[Z^\top (A Q B^\top)]
        = \max_{Q \in \mrm{SO}^{-}(3)} \Tr[(A^\top Z B)^\top Q].
    \end{equation}
    Setting $X := A^\top Z B$ with singular values $s_1 \geq s_2 \geq s_3 \geq 0$, we evaluate the right-hand side.
    The optimal value of the Procrustes problem $\max_{R \in \mrm{SO}(3)} \Tr[X^\top R]$ equals $s_1 + s_2 + s_3\, \op{sgn}(\det X)$~\cite{umeyama1991leastsquares}. Since any $Q \in \mrm{SO}^{-}(3)$ can be written as $Q = R J$ with $R \in \mrm{SO}(3)$ and $J = \op{diag}(1,1,-1)$, we have
    \begin{align}
        \max_{Q \in \mrm{SO}^{-}(3)} \Tr[X^\top Q]
        &= \max_{R \in \mrm{SO}(3)} \Tr[(XJ)^\top R] \notag \\
        &= s_1 + s_2 + s_3\, \op{sgn}(\det(XJ)) \notag \\
        &= s_1 + s_2 - s_3\, \op{sgn}(\det X),
        \label{eq:procrustes_SO_minus}
    \end{align}
    where we used the fact that the singular values of $XJ$ coincide with those of $X$, and $\det(XJ) = -\det X$.
    By Theorem~\ref{thm:support_function_of_Cquantum}, $\phiquantum(Z) = s_1 + s_2 - s_3\, \op{sgn}(\det(A^\top Z B)) = \phi_K(Z)$ for all $Z$, so $\Cquantum = K$.
\end{proof}

\subsection{Proof of Theorem~\ref{thm:convex_hull_characterzation_of_Cmax}: Convex hull characterization of $\Cmax$}
\label{app:proof_convex_hull_Cmax}
\begin{proof}
    Let $K := \op{conv}\{A Q B^\top : Q \in \mrm{O}(3)\}$.
    The support function of $K$ is
    \begin{equation}
        \phi_K(Z) = \max_{Q \in \mrm{O}(3)} \Tr[Z^\top (A Q B^\top)]
        = \max_{Q \in \mrm{O}(3)} \Tr[(A^\top Z B)^\top Q]
        = \|A^\top Z B\|_{1},
    \end{equation}
    where the last equality is the variational characterization of the trace norm: $\|X\|_1 = \max_{Q \in \mrm{O}(n)} \Tr[X^\top Q]$ (e.g., Theorem 3.4.1 in \cite{Horn1994topics}).
    By Theorem~\ref{thm:support_function_of_Cmax}, $\phimax(Z) = \|A^\top Z B\|_1 = \phi_K(Z)$ for all $Z$, so $\Cmax = K$.
\end{proof}

\section{Proofs of the Gauge Function Theorems}\label{appsec:proofs_gauge_functions}

Throughout this section, $P_M$ denotes the orthogonal projection onto a subspace $M$.

\begin{lemma}\label{lem:range_condition}
    For any $C \in \MmR$, the following conditions are equivalent:
    \begin{enumerate}
        \item $C \in \ran(Y \mapsto A Y B^\top)$, i.e., there exists $Y \in \mcal{M}_3(\mbb{R})$ such that $C = A Y B^\top$.
        \item $\ran C \subset \ran A$ and $\ran C^\top \subset \ran B$.
    \end{enumerate}
    When these conditions hold, $Y := A^{+} C\, (B^\top)^{+}$ satisfies $A Y B^\top = C$.
\end{lemma}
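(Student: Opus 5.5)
The plan is to prove the two implications separately, using only the standard projection identities of the Moore--Penrose pseudoinverse: $A A^{+} = P_{\ran A}$, and $(B^\top)^{+} = (B^{+})^\top$, which gives $B B^{+} = P_{\ran B}$ and hence $(B^\top)^{+} B^\top = (B B^{+})^\top = P_{\ran B}$ (the projection is symmetric). The final assertion, that $Y := A^{+} C\,(B^\top)^{+}$ solves $AYB^\top = C$, will come out of the proof of (ii)$\Rightarrow$(i), since that $Y$ will serve as the witness.

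For (i)$\Rightarrow$(ii), suppose $C = A Y B^\top$ for some $Y \in \mcal{M}_3(\mbb{R})$. Then every column of $C$ has the form $A(Y B^\top \bs{x})$, so $\ran C \subset \ran A$. Likewise $C^\top = B Y^\top A^\top$, so $\ran C^\top \subset \ran B$.

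For (ii)$\Rightarrow$(i), set $Y := A^{+} C\,(B^\top)^{+}$ and compute
\begin{equation}
    A Y B^\top = (A A^{+})\, C\, \bigl((B^\top)^{+} B^\top\bigr) = P_{\ran A}\, C\, P_{\ran B}.
\end{equation}
The inclusion $\ran C \subset \ran A$ gives $P_{\ran A} C = C$. The inclusion $\ran C^\top \subset \ran B$ gives $P_{\ran B} C^\top = C^\top$, and transposing this, using that $P_{\ran B}$ is symmetric, yields $C P_{\ran B} = C$. Hence $AYB^\top = C$. This proves (i) and the final assertion at the same time.

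No step here is a real obstacle. The only point requiring care is the right-hand factor: one must confirm that $(B^\top)^{+} B^\top$ is the projection onto $\ran B$, not onto $\ran B^\top$. This follows from $(B^\top)^{+} = (B^{+})^\top$ and the symmetry of $B B^{+}$.
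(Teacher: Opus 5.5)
Your proposal is correct and follows the paper's proof: (i)$\Rightarrow$(ii) by reading off column and row spaces, and (ii)$\Rightarrow$(i) by using $AA^{+} = P_{\ran A}$ and $(B^\top)^{+}B^\top = P_{\ran B}$ to get $AA^{+}C(B^\top)^{+}B^\top = C$. Your extra check that $(B^\top)^{+}B^\top = (BB^{+})^\top$ projects onto $\ran B$ (not $\ran B^\top$) fills in a step the paper simply asserts.
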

\begin{proof}
    (i)~$\Rightarrow$~(ii): If $C = AYB^\top$, then $\ran C \subset \ran A$ and $\ran C^\top \subset \ran B$.

    (ii)~$\Rightarrow$~(i): Suppose $\ran C \subset \ran A$ and $\ran C^\top \subset \ran B$. Since $AA^{+}=P_{\ran A}$ is the orthogonal projection onto $\ran A$, the condition $\ran C \subset \ran A$ implies $AA^{+} C = C$. Similarly, since $(B^\top)^{+} B^\top = P_{\ran B}$, $\ran C^\top \subset \ran B$ implies $C (B^\top)^{+} B^\top = C$. Therefore, $Y := A^{+} C\, (B^\top)^{+}$ satisfies
    \begin{equation}
        AYB^\top = A A^{+} C\, (B^\top)^{+} B^\top = C. \qedhere
    \end{equation}
\end{proof}

\begin{lemma}[{Duality of the asymmetric norms $\|\cdot\|_{\pm}$}]\label{lem:asymmetric_norm_duality}
    For any $X \in \mcal{M}_3(\mbb{R})$,
    \begin{equation}
        \|X\|_{+} = \sup_{Y \neq 0} \frac{\Tr[X^\top Y]}{\|Y\|_{-}}, \qquad
        \|X\|_{-} = \sup_{Y \neq 0} \frac{\Tr[X^\top Y]}{\|Y\|_{+}}.
    \end{equation}
\end{lemma}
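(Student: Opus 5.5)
The idea is to identify the two unit balls as convex hulls of orthogonal matrices and then read off the support functions from the Procrustes computation already used in Appendix~\ref{app:proof_convex_hull_Cquantum}. Let
\[
K_- := \op{conv}(\mrm{SO}(3)), \qquad K_+ := \op{conv}(\mrm{SO}^{-}(3)) = -K_-.
\]

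\textbf{Step 1 (support functions).} By the Procrustes identity~\cite{umeyama1991leastsquares},
\[
\phi_{K_-}(X) = \max_{R\in\mrm{SO}(3)}\Tr[X^\top R] = s_1+s_2+s_3\,\op{sgn}(\det X) = \|X\|_{+}.
\]
By~\eqref{eq:procrustes_SO_minus}, $\phi_{K_+}(X) = \|X\|_{-}$.

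\textbf{Step 2 (gauges).} Both $K_\pm$ are compact and convex. They contain $0$ in their interior: $K_-$ contains $\pm I$, the diagonal sign matrices, and all rotations, so its average over $\mrm{SO}(3)$ is $0$, and it absorbs a neighbourhood of $0$ (alternatively, $\phi_{K_-}(X)\ge s_1>0$ for $X\neq 0$). The standard duality~\eqref{eq:gauge_function_as_a_dual_of_support_function} then gives
\[
\gamma_{K_-}(Y) = \sup_{X\neq 0}\frac{\Tr[X^\top Y]}{\phi_{K_-}(X)} = \sup_{X\neq0}\frac{\Tr[X^\top Y]}{\|X\|_{+}}.
\]
The claim therefore reduces to identifying the gauges, namely $\gamma_{K_-}(Y) = \|Y\|_{-}$ and $\gamma_{K_+}(Y) = \|Y\|_{+}$. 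Once these hold, the two displayed dualities follow by interchanging roles: the support function is the sup of $\Tr[X^\top Y]$ over the unit ball of the gauge.

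\textbf{Step 3 (the main obstacle).} We must show that the unit ball $\{\|Y\|_-\le 1\}$ equals $\op{conv}(\mrm{SO}(3))$. For this we use the known orbitope description (Sanyal--Sottile--Sturmfels~\cite{sanyal2011orbitopes}): $\op{conv}(\mrm{SO}(3))$ consists of exactly those $Y$ whose special singular values $(\tilde s_1,\tilde s_2,\tilde s_3)$ lie in $\op{conv}\{(\pm1,\pm1,\pm1)$ with an even number of minus signs$\}$, the tetrahedron $\tilde s_1+\tilde s_2-\tilde s_3\le 1$ together with its sign-permutations. Given the ordering $\tilde s_1\ge\tilde s_2\ge|\tilde s_3|$, the only active inequality is $\tilde s_1+\tilde s_2-\tilde s_3\le1$, i.e.\ $\|Y\|_-\le 1$. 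The inclusion of $\op{conv}(\mrm{SO}(3))$ in this ball follows from convexity of $\|\cdot\|_-$; I would first check that $\|\cdot\|_-$ is sublinear. For the reverse inclusion I would argue directly: write $Y = U\tilde\Sigma V^\top$ with $U,V\in\mrm{SO}(3)$ and express $\tilde\Sigma$ as a convex combination of the diagonal matrices $\op{diag}(1,1,1)$, $\op{diag}(1,-1,-1)$, $\op{diag}(-1,1,-1)$, $\op{diag}(-1,-1,1)$ (all in $\mrm{SO}(3)$), plus possibly $0$ when $\|Y\|_-<1$. This is possible exactly when $\tilde s$ lies in the tetrahedron. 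Positive homogeneity then gives $\gamma_{K_-}=\|\cdot\|_-$.

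For $K_+=-K_-$ we have $\gamma_{K_+}(Y)=\gamma_{K_-}(-Y)=\|-Y\|_-=\|Y\|_+$, and applying the duality formula for $K_+$ yields the second identity.
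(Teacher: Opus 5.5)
Your proof is correct and follows essentially the paper's route: identify $\{\|Y\|_-\le 1\}$ with $\op{conv}(\mrm{SO}(3))$ (and $\{\|Y\|_+\le 1\}$ with its negative) via the parity-polytope description, obtain the support functions from the Procrustes identity, and apply gauge--support duality. Your explicit decomposition of $\tilde\Sigma$ into $\op{diag}(1,1,1),\op{diag}(1,-1,-1),\op{diag}(-1,1,-1),\op{diag}(-1,-1,1)$ is a self-contained substitute for the paper's citation of Saunderson et al., and the other inclusion needs no separate check, since Step~1 already gives $\|\cdot\|_-=\phi_{K_+}$, which is sublinear. One slip to fix: $-I\notin\op{conv}(\mrm{SO}(3))$, because $\det(-I)=-1$ and $\|{-I}\|_-=3$. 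This is harmless, since your alternative argument $\phi_{K_-}(X)\ge s_1>0$ for $X\neq 0$ correctly shows that $0$ is an interior point.
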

\begin{proof}
\noindent\textbf{Proof of the first identity.}

\noindent\emph{Step 1: Identification of the unit ball.}
By~\cite{saunderson2015semidefinite}, a matrix $Y \in \mcal{M}_3(\mbb{R})$ with singular values $s_1 \geq s_2 \geq s_3 \geq 0$ belongs to $\op{conv}(\mrm{SO}(3))$ if and only if
\begin{equation}\label{eq:saunderson_SO3_full}
    \max\{\tau_1 + \tau_2 - \tau_3,\; -\tau_1 + \tau_2 + \tau_3,\; \tau_1 - \tau_2 + \tau_3,\; -\tau_1 - \tau_2 - \tau_3\} \leq 1,
\end{equation}
where $\tau_1 := s_1,\, \tau_2 := s_2,\, \tau_3 := s_3\,\op{sgn}(\det Y)$ are the signed singular values of $Y$.
The left-hand side of~\eqref{eq:saunderson_SO3_full} simplifies by cases:
if $\det Y \geq 0$, then $\tau_3 \geq 0$, and the maximum equals $\tau_1 + \tau_2 - \tau_3 = s_1 + s_2 - s_3$;
if $\det Y < 0$, then $\tau_3 < 0$, and the maximum equals $\tau_1 + \tau_2 - \tau_3 = s_1 + s_2 + s_3$.
In both cases, the condition~\eqref{eq:saunderson_SO3_full} reduces to
\begin{equation}\label{eq:saunderson_SO3}
    s_1 + s_2 - s_3\,\op{sgn}(\det Y) \leq 1,
\end{equation}
i.e., $\|Y\|_{-} \leq 1$.
Therefore, $\op{conv}(\mrm{SO}(3)) = \{Y \in \mcal{M}_3(\mbb{R}) : \|Y\|_{-} \leq 1\}$, and $\|\cdot\|_{-}$ is the gauge function of $\op{conv}(\mrm{SO}(3))$.

\noindent\emph{Step 2: Support function of $\op{conv}(\mrm{SO}(3))$.}
We show that the support function of $\op{conv}(\mrm{SO}(3))$ is $\|\cdot\|_{+}$.
Let $X = U \Sigma V^\top$ be a singular value decomposition with $\Sigma = \op{diag}(s_1, s_2, s_3)$.
For any $Q \in \mrm{SO}(3)$, setting $R := V^\top Q U$, we have $\det R = \det(V^\top U)$ and
\begin{equation}
    \Tr[X^\top Q] = \Tr[\Sigma\, R].
\end{equation}
If $\det(V^\top U) = 1$, then $R$ ranges over $\mrm{SO}(3)$ and $\max \Tr[\Sigma\, R] = s_1 + s_2 + s_3$, attained at $R = I$.
If $\det(V^\top U) = -1$, then $R$ ranges over $\mrm{SO}^{-}(3)$ and $\max \Tr[\Sigma\, R] = s_1 + s_2 - s_3$, attained at $R = \op{diag}(1,1,-1)$.
Since $\op{sgn}(\det X) = \op{sgn}(\det(V^\top U))$ when $s_3 > 0$, and $s_1 + s_2 - s_3 = s_1 + s_2 + s_3$ when $s_3 = 0$, we obtain in all cases
\begin{equation}
    \max_{Q \in \mrm{SO}(3)} \Tr[X^\top Q] = s_1 + s_2 + s_3\,\op{sgn}(\det X) = \|X\|_{+}.
\end{equation}

\noindent\emph{Step 3: Duality.}
By Steps~1 and~2, $\|\cdot\|_{-}$ is the gauge function and $\|\cdot\|_{+}$ is the support function of $\op{conv}(\mrm{SO}(3))$.
Since $\op{conv}(\mrm{SO}(3))$ is a convex compact set containing the origin in its interior,
the general duality relation~\eqref{eq:gauge_function_as_a_dual_of_support_function} gives
\begin{equation}
    \|X\|_{+} = \sup_{Y \neq 0} \frac{\Tr[X^\top Y]}{\|Y\|_{-}}.
\end{equation}

\noindent\textbf{Proof of the second identity.}
Since $\mrm{SO}^{-}(3) = J\, \mrm{SO}(3)$ with $J := \op{diag}(1,1,-1)$, we have $\op{conv}(\mrm{SO}^{-}(3)) = J\, \op{conv}(\mrm{SO}(3))$.
Therefore, $Y \in \op{conv}(\mrm{SO}^{-}(3))$ if and only if $JY \in \op{conv}(\mrm{SO}(3))$, i.e., $\|JY\|_{-} \leq 1$.
Since $JY$ has the same singular values as $Y$ and $\det(JY) = -\det Y$, we have
\begin{equation}
    \|JY\|_{-} = s_1 + s_2 - s_3\,\op{sgn}(\det(JY)) = s_1 + s_2 + s_3\,\op{sgn}(\det Y) = \|Y\|_{+}.
\end{equation}
Therefore, $\op{conv}(\mrm{SO}^{-}(3)) = \{Y \in \mcal{M}_3(\mbb{R}) : \|Y\|_{+} \leq 1\}$, and $\|\cdot\|_{+}$ is the gauge function of $\op{conv}(\mrm{SO}^{-}(3))$.

For the support function, the same SVD argument as in Step~2 gives, for $X = U\Sigma V^\top$,
\begin{equation}
    \Tr[X^\top Q] = \Tr[\Sigma\, R], \qquad R := V^\top Q U.
\end{equation}
When $Q$ ranges over $\mrm{SO}^{-}(3)$, the matrix $R$ ranges over orthogonal matrices with $\det R = -\det(V^\top U)$.
If $\det(V^\top U) = 1$, then $R$ ranges over $\mrm{SO}^{-}(3)$ and $\max \Tr[\Sigma\, R] = s_1 + s_2 - s_3$.
If $\det(V^\top U) = -1$, then $R$ ranges over $\mrm{SO}(3)$ and $\max \Tr[\Sigma\, R] = s_1 + s_2 + s_3$.
In all cases,
\begin{equation}
    \max_{Q \in \mrm{SO}^{-}(3)} \Tr[X^\top Q] = s_1 + s_2 - s_3\,\op{sgn}(\det X) = \|X\|_{-}.
\end{equation}
The duality relation~\eqref{eq:gauge_function_as_a_dual_of_support_function} applied to $\op{conv}(\mrm{SO}^{-}(3))$ then gives
\begin{equation}
    \|X\|_{-} = \sup_{Y \neq 0} \frac{\Tr[X^\top Y]}{\|Y\|_{+}}. \qedhere
\end{equation}
\end{proof}

\subsection{Proof of Theorem~\ref{thm:gauge_function_of_Csep}: Gauge function of $\Csep$}
\label{app:proof_gauge_Csep}
\begin{proof}
\noindent\textbf{Step 1: Derivation via duality.}
By the duality relation~\eqref{eq:gauge_function_as_a_dual_of_support_function} and Theorem~\ref{thm:support_function_of_Csep}, the gauge function $\gammasep$ satisfies
\begin{equation}
    \gammasep(C) = \sup_{Z \neq 0} \frac{\Tr[Z^\top C]}{\phisep(Z)}
    = \sup_{Z \neq 0} \frac{\Tr[Z^\top C]}{\|A^\top Z\, B\|_{\infty}}.
\end{equation}
We denote $W := A^{+} C\, (B^\top)^{+}$ throughout this proof.

When $\ran C \subset \ran A$ and $\ran C^\top \subset \ran B$, the substitution $\tilde{Z} := A^\top Z\, B$ gives
\begin{equation}
    \Tr[Z^\top C] = \Tr[\tilde{Z}^\top\, W],
    \label{eq:trace_substitution_sep}
\end{equation}
since $(A^\top)^{+} \tilde{Z}\, B^{+} = P_{\ran A}\, Z\, P_{\ran B}$ and the range conditions imply $P_{\ran A}\, C\, P_{\ran B} = C$.
The H\"older inequality for Schatten norms, $|\Tr[\tilde{Z}^\top W]| \leq \|\tilde{Z}\|_{\infty}\, \|W\|_1$, then yields the upper bound
\begin{equation}
    \frac{\Tr[Z^\top C]}{\|A^\top Z\, B\|_{\infty}} = \frac{\Tr[\tilde{Z}^\top\, W]}{\|\tilde{Z}\|_{\infty}} \leq \|W\|_1.
    \label{eq:gauge_sep_upper_bound}
\end{equation}

We show that this bound is attained.
Let $W = U \Sigma V^\top$ be a singular value decomposition, and define $Z_{\star} := (A^\top)^{+}\, U V^\top\, B^{+}$.
Setting $\tilde{Z}_{\star} := A^\top Z_{\star}\, B$, we obtain
\begin{equation}
    \tilde{Z}_{\star} = P_{\ran A^\top}\, U V^\top\, P_{\ran B^\top},
\end{equation}
where we used $A^\top (A^\top)^{+} = P_{\ran A^\top}$ and $B^{+} B = P_{\ran B^\top}$.
Since orthogonal projections are contractive in the operator norm,
\begin{equation}
    \|\tilde{Z}_{\star}\|_{\infty} \leq \|U V^\top\|_{\infty} = 1.
    \label{eq:Ztilde_norm_bound}
\end{equation}
For the numerator, the range inclusions $\ran W \subset \ran A^\top$ and $\ran W^\top \subset \ran B^\top$ imply $P_{\ran A^\top}\, W = W$ and $W\, P_{\ran B^\top} = W$, so
\begin{equation}
    \Tr[\tilde{Z}_{\star}^\top\, W]
    = \Tr[V U^\top\, P_{\ran A^\top}\, W\, P_{\ran B^\top}]
    = \Tr[V U^\top\, U \Sigma V^\top]
    = \Tr[\Sigma]
    = \|W\|_1.
    \label{eq:Ztilde_trace_value}
\end{equation}
Combining~\eqref{eq:gauge_sep_upper_bound}--\eqref{eq:Ztilde_trace_value}, we conclude
\begin{equation}
    \gammasep(C) \geq \frac{\Tr[\tilde{Z}_{\star}^\top\, W]}{\|\tilde{Z}_{\star}\|_{\infty}} \geq \|W\|_1,
\end{equation}
and therefore $\gammasep(C) = \|A^{+} C\, (B^\top)^{+}\|_1$, with the supremum attained at $Z = Z_{\star}$.

If $\ran C \not\subset \ran A$ or $\ran C^\top \not\subset \ran B$, there exists $Z$ in the kernel of the map $Z \mapsto A^\top Z B$ with $\Tr[Z^\top C] \neq 0$. Then $\phisep(Z) = 0$ while $\Tr[Z^\top C] > 0$ for an appropriate sign choice, giving $\gammasep(C) = +\infty$.

\noindent\textbf{Step 2: Explicit formula for $m=2$.}
When $\sin\alpha, \sin\beta \neq 0$, the Gram matrices $G_\alpha = AA^\top$ and $G_\beta = BB^\top$ are invertible, and the pseudoinverses reduce to
\begin{equation}
    A^{+} = A^\top G_\alpha^{-1}, \qquad (B^\top)^{+} = G_\beta^{-1} B.
\end{equation}
The $3 \times 3$ matrix $W := A^{+} C\, (B^\top)^{+} = A^\top G_\alpha^{-1} C\, G_\beta^{-1} B$ has rank at most~$2$, so $s'_3 = 0$ and $\gammasep(C) = s'_1 + s'_2$.
The nonzero squared singular values $s_1'^2, s_2'^2$ are the nonzero eigenvalues of $W^\top W = B^\top G_\beta^{-1} C^\top G_\alpha^{-1} C\, G_\beta^{-1} B$.
Since $XY$ and $YX$ share the same nonzero eigenvalues, these coincide with the eigenvalues of the $2 \times 2$ matrix
\begin{equation}
    Q := G_\alpha^{-1} C\, G_\beta^{-1} C^\top.
\end{equation}
Therefore,
\begin{equation}
    s_1'^2 + s_2'^2 = \Tr Q = \Tr[G_\alpha^{-1} C\, G_\beta^{-1} C^\top],
    \qquad
    s_1'^2\, s_2'^2 = \det Q = \frac{(\det C)^2}{\sin^2\!\alpha\, \sin^2\!\beta}.
\end{equation}
By Lemma~\ref{lem:vectorization},
\begin{equation}
    \Tr[G_\alpha^{-1} C\, G_\beta^{-1} C^\top]
    = \bs{c}^\top (G_\beta^{-1} \otimes G_\alpha^{-1})\, \bs{c}
    = \bs{c}^\top K_{\alpha,\beta}^{-1}\, \bs{c}.
\end{equation}
Since $\gammasep(C) = s'_1 + s'_2 = \sqrt{s_1'^2 + s_2'^2 + 2 s'_1 s'_2}$, we obtain
\begin{equation}
    \gammasep(C) = \sqrt{\bs{c}^\top K_{\alpha,\beta}^{-1}\, \bs{c} + \frac{2|\det C|}{\sin\alpha\, \sin\beta}}
    = \sqrt{\Tr[G_\alpha^{-1}\, C\, G_\beta^{-1}\, C^\top] + \frac{2|\det C|}{\sin\alpha\, \sin\beta}}.
\end{equation}
\end{proof}

\subsection{Proof of Theorem~\ref{thm:gauge_function_of_Cquantum}: Gauge function of $\Cquantum$}
\label{app:proof_gauge_Cquantum}
\begin{proof}
\noindent\textbf{Case $m\geq 3$.}
We follow the same strategy as the proofs of Theorems~\ref{thm:gauge_function_of_Csep} and~\ref{thm:gauge_function_of_Cmax}.
By the duality relation~\eqref{eq:gauge_function_as_a_dual_of_support_function} and Theorem~\ref{thm:support_function_of_Cquantum}, the gauge function $\gammaquantum$ satisfies
\begin{equation}
    \gammaquantum(C) = \sup_{Z \neq 0} \frac{\Tr[Z^\top C]}{\phiquantum(Z)}
    = \sup_{Z \neq 0} \frac{\Tr[Z^\top C]}{\|A^\top Z\, B\|_{-}}.
\end{equation}
We denote $W := A^{+} C\, (B^\top)^{+}$.

If $\ran C \not\subset \ran A$ or $\ran C^\top \not\subset \ran B$, there exists $Z$ in the kernel of the map $Z \mapsto A^\top Z B$ with $\Tr[Z^\top C] \neq 0$. Then $\phiquantum(Z) = 0$ while $\Tr[Z^\top C] > 0$ for an appropriate sign choice, giving $\gammaquantum(C) = +\infty$.

For the remainder of the proof, we assume $\ran C \subset \ran A$ and $\ran C^\top \subset \ran B$.
The substitution $\tilde{Z} := A^\top Z\, B$ and the identity~\eqref{eq:trace_substitution_sep} give
\begin{equation}
    \frac{\Tr[Z^\top C]}{\|A^\top Z\, B\|_{-}} = \frac{\Tr[\tilde{Z}^\top\, W]}{\|\tilde{Z}\|_{-}} \leq \|W\|_{+},
    \label{eq:gauge_qm_upper_bound}
\end{equation}
where the inequality follows from Lemma~\ref{lem:asymmetric_norm_duality}.

\noindent\textbf{Subcase $r = 3$.}
We show that the upper bound~\eqref{eq:gauge_qm_upper_bound} is attained.
Let $W = U \Sigma V^\top$ be a singular value decomposition with $\Sigma = \op{diag}(s'_1, s'_2, s'_3)$, and set $\epsilon' := \op{sgn}(\det W)$.
Define $Z_{\star} := (A^\top)^{+}\, U\, \op{diag}(1,1,\epsilon')\, V^\top\, B^{+}$.
Since $\rank A = \rank B = 3$, the projections $P_{\ran A^\top}$ and $P_{\ran B^\top}$ are the identity on $\mbb{R}^3$, so
\begin{equation}
    \tilde{Z}_{\star} := A^\top Z_{\star}\, B = U\, \op{diag}(1,1,\epsilon')\, V^\top.
\end{equation}
When $\epsilon' = \pm 1$, the singular values of $\tilde{Z}_{\star}$ are $1,1,1$, and $\det(\tilde{Z}_{\star}) = \epsilon'\,\det(U V^\top) = \epsilon'^2 = 1$ (using $\epsilon' = \op{sgn}(\det(U V^\top))$ when $s'_3 > 0$). Therefore,
\begin{equation}
    \|\tilde{Z}_{\star}\|_{-} = 1 + 1 - 1 = 1.
\end{equation}
When $\epsilon' = 0$ (i.e., $s'_3 = 0$), the singular values of $\tilde{Z}_{\star}$ are $1,1,0$, and $\|\tilde{Z}_{\star}\|_{-} = 1 + 1 = 2$.
In this case, we instead set $Z_{\star} := (A^\top)^{+}\, U\, \op{diag}(1,1,\eta)\, V^\top\, B^{+}$ with $\eta := \op{sgn}(\det(UV^\top)) = \pm 1$, giving $\tilde{Z}_{\star} = U\, \op{diag}(1,1,\eta)\, V^\top$ with singular values $1,1,1$ and $\det(\tilde{Z}_{\star}) = 1$, so $\|\tilde{Z}_{\star}\|_{-} = 1$.

For the numerator, we compute
\begin{equation}
    \Tr[\tilde{Z}_{\star}^\top\, W]
    = \Tr[V\, \op{diag}(1,1,\epsilon')\, U^\top\, U \Sigma V^\top]
    = \Tr[\op{diag}(1,1,\epsilon')\, \Sigma]
    = s'_1 + s'_2 + \epsilon' s'_3
    = \|W\|_{+},
\end{equation}
where in the case $\epsilon' = 0$ we used $\eta\, s'_3 = 0 = \epsilon'\, s'_3$.
Therefore, $\gammaquantum(C) \geq \|W\|_{+}/1 = \|W\|_{+}$, and combined with~\eqref{eq:gauge_qm_upper_bound}, we conclude $\gammaquantum(C) = \|A^{+} C\, (B^\top)^{+}\|_{+}$.

\noindent\textbf{Subcase $r \leq 2$.}
Since $\Cquantum = \Cmax$, the gauge function of $\Cquantum$ coincides with that of $\Cmax$ (Theorem~\ref{thm:gauge_function_of_Cmax}).

\noindent\textbf{Case $m=2$.}
Since $\Cquantum$ is a convex compact set containing the origin in its interior, the gauge function uniquely determines the set:
\begin{equation}
    \Cquantum = \{C \in \mcal{M}_2(\mathbb{R}) \,:\, \gammaquantum(C) \leq 1\}.
\end{equation}
Define
\begin{equation}
    p_{A,B}(C) := \frac{1}{2}\left(\sqrt{\bs{c}^\top F_{\alpha,\beta}\, \bs{c}} + \sqrt{\bs{c}^\top F_{\alpha,-\beta}\, \bs{c}}\right).
\end{equation}
By the necessary and sufficient condition for quantum realizability~\cite{nogami2025necessary},
\begin{equation}
    \Cquantum = \{C \in \mcal{M}_2(\mathbb{R}) \,:\, p_{A,B}(C) \leq 1\}.
\end{equation}
Since both $\gammaquantum$ and $p_{A,B}$ are seminorms on $\mcal{M}_2(\mathbb{R})$ and the unit ball of a seminorm uniquely determines the seminorm itself, we conclude $\gammaquantum = p_{A,B}$.

We now show that $p_{A,B}$ admits the matrix form~\eqref{eq:gauge_function_of_Cquantum_matrix_form}.
By Lemma~\ref{lem:vectorization},
\begin{equation}
    \Tr[L_\alpha\, C\, L_\beta^\top\, C^\top]
    = \bs{c}^\top (L_\beta \otimes L_\alpha)\, \bs{c},
\end{equation}
where $L_\beta \otimes L_\alpha = F_{\alpha,\beta} + \im\, \tilde{F}_{\alpha,\beta}$ with $F_{\alpha,\beta} := \op{Re}(L_\beta \otimes L_\alpha)$ and $\tilde{F}_{\alpha,\beta} := \op{Im}(L_\beta \otimes L_\alpha)$.
A direct computation shows that $\tilde{F}_{\alpha,\beta}$ is antisymmetric, so $\bs{c}^\top \tilde{F}_{\alpha,\beta}\, \bs{c} = 0$ for any real vector $\bs{c}$.
Therefore,
\begin{equation}
    \Tr[L_\alpha\, C\, L_\beta^\top\, C^\top]
    = \bs{c}^\top F_{\alpha,\beta}\, \bs{c},
\end{equation}
and similarly for $\beta \to -\beta$, which gives~\eqref{eq:gauge_function_of_Cquantum_matrix_form}.
\end{proof}

\subsection{Proof of Theorem~\ref{thm:gauge_function_of_Cmax}: Gauge function of $\Cmax$}
\label{app:proof_gauge_Cmax}
\begin{proof}
We follow the same strategy as Step~1 of the proof of Theorem~\ref{thm:gauge_function_of_Csep}.
By the duality relation~\eqref{eq:gauge_function_as_a_dual_of_support_function} and Theorem~\ref{thm:support_function_of_Cmax}, the gauge function $\gammamax$ satisfies
\begin{equation}
    \gammamax(C) = \sup_{Z \neq 0} \frac{\Tr[Z^\top C]}{\phimax(Z)}
    = \sup_{Z \neq 0} \frac{\Tr[Z^\top C]}{\|A^\top Z\, B\|_{1}}.
\end{equation}
When $\ran C \subset \ran A$ and $\ran C^\top \subset \ran B$, the substitution $\tilde{Z} := A^\top Z\, B$ and the identity~\eqref{eq:trace_substitution_sep} give
\begin{equation}
    \frac{\Tr[Z^\top C]}{\|A^\top Z\, B\|_{1}} = \frac{\Tr[\tilde{Z}^\top\, W]}{\|\tilde{Z}\|_{1}} \leq \|W\|_{\infty},
\end{equation}
where $W := A^{+} C\, (B^\top)^{+}$ and the inequality is the H\"older inequality for Schatten norms.
Let $W = U \Sigma V^\top$ be a singular value decomposition with $s'_1 = \|W\|_\infty$, and let $\bs{u}_1, \bs{v}_1$ denote the first columns of $U, V$.
Setting $Z_{\star} := (A^\top)^{+}\, \bs{u}_1 \bs{v}_1^\top\, B^{+}$, we have $\tilde{Z}_{\star} = P_{\ran A^\top}\, \bs{u}_1 \bs{v}_1^\top\, P_{\ran B^\top}$.
Since $\bs{u}_1 \in \ran W \subset \ran A^\top$ and $\bs{v}_1 \in \ran W^\top \subset \ran B^\top$, we obtain $\tilde{Z}_{\star} = \bs{u}_1 \bs{v}_1^\top$, so $\|\tilde{Z}_{\star}\|_1 = 1$ and $\Tr[\tilde{Z}_{\star}^\top W] = s'_1 = \|W\|_\infty$.
Therefore $\gammamax(C) = \|A^{+} C\, (B^\top)^{+}\|_{\infty}$.

If $\ran C \not\subset \ran A$ or $\ran C^\top \not\subset \ran B$, there exists $Z$ in the kernel of the map $Z \mapsto A^\top Z B$ with $\Tr[Z^\top C] \neq 0$. Then $\phimax(Z) = 0$ while $\Tr[Z^\top C] > 0$ for an appropriate sign choice, giving $\gammamax(C) = +\infty$.
\end{proof}

\section{Proofs of Maximum Ratios of Support Functions}\label{appsec:proofs_maximum_ratios}

\subsection{Achievable singular values of $A^\top Z B$}
\begin{lemma}{\label{lem:achivable_singular_values}}
    Let $r := \min\{ \rank A, \rank B \}$. For any $s_1 \geq \cdots \geq s_r \geq 0$, there exists $Z \in \MmR$ such that the singular values of $A^\top Z B$ are $s_1, \ldots, s_r$ with the remaining singular values equal to zero. Furthermore, the sign of $\det(A^\top Z B)$ can be chosen arbitrarily.
\end{lemma}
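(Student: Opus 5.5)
The plan is to build $Z$ directly from the reduced singular value decompositions of $A$ and $B$, so that $A^\top Z B$ becomes an arbitrary prescribed matrix supported on an $r$-dimensional subspace of $\ran A^\top$ and of $\ran B^\top$.

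\emph{Step 1: identify the image of $Z \mapsto A^\top Z B$.} Write reduced SVDs $A = U_A \Sigma_A V_A^\top$ and $B = U_B \Sigma_B V_B^\top$. Here $U_A \in \mcal{M}_{m,\rank A}(\mbb{R})$ and $V_A \in \mcal{M}_{3,\rank A}(\mbb{R})$ have orthonormal columns, $\Sigma_A$ is invertible, and similarly for $B$. For any $X \in \mcal{M}_{\rank A,\rank B}(\mbb{R})$, set
\begin{equation}
    Z := U_A \Sigma_A^{-1} X \Sigma_B^{-1} U_B^\top .
\end{equation}
Then $A^\top Z B = V_A X V_B^\top$. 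Because $V_A$ and $V_B$ are isometries, the nonzero singular values of $A^\top Z B$ coincide with those of $X$. This is the same isometry argument used in the Discussion for basis independence.

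\emph{Step 2: prescribe the singular values.} Let $X := \op{diag}(s_1,\ldots,s_r)$, padded with zeros to size $\rank A \times \rank B$. Then $A^\top Z B = \sum_{k=1}^{r} s_k \bs{v}^A_k (\bs{v}^B_k)^\top$, where $\bs{v}^A_k$ and $\bs{v}^B_k$ are the columns of $V_A$ and $V_B$. Its singular values are exactly $s_1,\ldots,s_r$, and all remaining ones (up to $3$) are zero. This settles the first claim.

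\emph{Step 3: control the sign of the determinant.} This is the only delicate step. If $r \le 2$, or some $s_k = 0$, then $\det(A^\top Z B) = 0$ necessarily. In that case the sign claim should be read as vacuous, or as referring to the convention $\op{sgn}=0$. I would state this caveat explicitly, since a rank-deficient $3\times 3$ matrix cannot have nonzero determinant.

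If $r = 3$ and all $s_k > 0$, then $V_A, V_B \in \mrm{O}(3)$ and $\det(A^\top Z B) = \det V_A \det V_B \prod_k s_k$. To flip the sign, replace $X$ by $X\,\op{diag}(1,1,-1)$. This keeps the singular values and negates the determinant. Equivalently, negate the last column of $U_A$ together with the corresponding column of $V_A$, which leaves the SVD of $A$ valid. Either sign is therefore achievable.

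This construction is also what the subsequent maximum-ratio theorems need, for example $s_1=s_2=s_3=1$ with a chosen sign, or $s_1=s_2=1$ and $s_3=0$ when $r=2$.
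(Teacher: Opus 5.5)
Your proof is correct and is essentially the paper's construction: since $(A^\top)^{+} = U_A\Sigma_A^{-1}V_A^\top$ and $B^{+} = V_B\Sigma_B^{-1}U_B^\top$, your $Z = U_A\Sigma_A^{-1}X\Sigma_B^{-1}U_B^\top$ is exactly the paper's $Z = (A^\top)^{+}\,Y\,B^{+}$, where $Y = V_A X V_B^\top$ is an SVD whose active singular vectors lie in $\ran A^\top$ and $\ran B^\top$, and the sign is flipped by the same $\op{diag}(1,\ldots,1,-1)$ trick. Your caveat that the sign claim is only meaningful when $r=3$ and all $s_k>0$ (otherwise $\det(A^\top Z B)=0$) is a correct refinement that the paper leaves implicit, and your reduced-SVD bookkeeping also avoids the paper's dimensional slip of choosing $m$ orthonormal vectors in $\mbb{R}^3$.
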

\begin{proof}
    Since $\rank A \geq r$, we choose orthonormal vectors $\bs{u}_1,\ldots,\bs{u}_m$ with $\bs{u}_1,\ldots, \bs{u}_r \in \ran A^\top$ and $\bs{u}_{r+1},\ldots,\bs{u}_m \in (\ran A^\top)^\perp$. Similarly, we choose orthonormal vectors $\bs{v}_1,\ldots,\bs{v}_m$ with $\bs{v}_1,\ldots, \bs{v}_r \in \ran B^\top$.

    Since $XX^{+}$ is the orthogonal projection onto $\ran X$, we have
    \begin{align}
        A^\top (A^\top)^{+} \bs{u}_i &= \begin{cases}
            \bs{u}_i & \text{if~} i \leq r, \\
            0 & \text{if~} i > r,
        \end{cases}
        \\
        B^\top (B^\top)^{+} \bs{v}_i &= \begin{cases}
            \bs{v}_i & \text{if~} i \leq r, \\
            0 & \text{if~} i > r.
        \end{cases}
    \end{align}
    Define $U := (\bs{u}_1, \ldots, \bs{u}_m)$, $V := (\bs{v}_1, \ldots, \bs{v}_m)$, and $\Sigma := \mathrm{diag}(s_1, \ldots, s_r, 0, \ldots, 0)$. Then
    \begin{equation}
        A^\top (A^\top)^{+}\, U\Sigma V^\top\, B^{+} B = U\Sigma V^\top.
    \end{equation}
    Setting $Z := (A^\top)^{+} U\Sigma V^\top B^{+}$ gives $A^\top Z B = U\Sigma V^\top$, which has the desired singular values.

    Finally, the sign of $\det(A^\top Z B) = \det(U) \det(\Sigma) \det(V)$ can be flipped by replacing $U$ with $U' := U \mathrm{diag}(1,\ldots,1,-1)$ if necessary.
\end{proof}

\subsection{Proof of Theorem~\ref{thm:maximum_ratio_of_support_functions}}
\label{app:proof_maximum_ratio_sep}
\begin{proof}
By Theorems~\ref{thm:support_function_of_Cquantum} and~\ref{thm:support_function_of_Csep}, the ratio of support functions equals
\begin{equation}
    \frac{\phiquantum(Z)}{\phisep(Z)}
    = \frac{s_1 + s_2 - \epsilon\,s_3 }{s_1},
\end{equation}
where $s_1 \geq s_2 \geq s_3 \geq 0$ are the singular values of $A^\top Z B$, and $\epsilon := \op{sgn}(\det(A^\top Z B))$. 

\noindent\textbf{When $r=3$:} 
Since $s_1 \geq s_2 \geq s_3 \geq 0$, we have
\begin{equation}
     \frac{\phiquantum(Z)}{\phisep(Z)}
    \leq \frac{s_1 + s_1 + s_1}{s_1} = 3.
\end{equation}
Equality is attained by choosing $s_1 = s_2 = s_3 > 0$ and $\det(A^\top Z B) < 0$, which is possible by Lemma~\ref{lem:achivable_singular_values}.

\noindent\textbf{When $r=2$:}
Since $s_1 \geq s_2 \geq 0$ and $s_3 = 0$, we have
\begin{equation}
     \frac{\phiquantum(Z)}{\phisep(Z)}
    = \frac{s_1 + s_2}{s_1} \leq \frac{s_1 + s_1}{s_1} = 2.
\end{equation}
Equality is attained by choosing $s_1 = s_2 > 0$, which is possible by Lemma~\ref{lem:achivable_singular_values}.

\noindent\textbf{When $r=1$:}
Since $s_1 > 0$ and $s_2 = s_3 = 0$, the ratio
\begin{equation}
     \frac{\phiquantum(Z)}{\phisep(Z)}
    = \frac{s_1}{s_1} = 1
\end{equation}
holds for all $Z \in \MmR$ with $\phisep(Z) > 0$.
\end{proof}

\subsection{Proof of Theorem~\ref{thm:maximum_ratio_of_support_functions_beyond_quantum}}
\label{app:proof_maximum_ratio_bqs}
\begin{proof}
By Theorems~\ref{thm:support_function_of_Cquantum} and~\ref{thm:support_function_of_Cmax}, the ratio of support functions equals
\begin{equation}
    \frac{\phimax (Z)}{\phiquantum(Z)}
    = \frac{s_1 + s_2 + s_3}{s_1 + s_2 - \epsilon\,s_3 },
\end{equation}
where $s_1 \geq s_2 \geq s_3 \geq 0$ are the singular values of $A^\top Z B$, and $\epsilon := \op{sgn}(\det(A^\top Z B))$. 

\noindent\textbf{When $r=3$:}
Since $s_1 \geq s_2 \geq s_3 \geq 0$, we have
\begin{equation}
    \frac{\phimax (Z)}{\phiquantum(Z)}
    \leq \frac{s_1 + s_2 + s_3}{s_1 + s_2 - s_3} \leq \frac{s_1 + s_1 + s_1}{s_1 + s_3 - s_3} = 3.
\end{equation}
Equality is attained by choosing $s_1 = s_2 = s_3 > 0$ and $\det(A^\top Z B) > 0$, which is possible by Lemma~\ref{lem:achivable_singular_values}.

\noindent\textbf{When $r \leq 2$:}
Since $s_3 = 0$, the ratio
\begin{equation}
    \frac{\phimax (Z)}{\phiquantum(Z)}
    = \frac{s_1 + s_2}{s_1 + s_2} = 1
\end{equation}
holds for all $Z \in \MmR$ with $\phiquantum(Z) > 0$.
\end{proof}

\subsection{Proof of Theorem~\ref{thm:maximizer_gammasep}}
\label{app:proof_maximizer_gammasep}
\begin{proof}
Since $C \in \Cquantum$, the convex hull representation (Theorem~\ref{thm:convex_hull_characterization_of_Cquantum}) gives $C = AQB^\top$ with $Q \in \op{Conv}(\mrm{SO}^{-}(3))$.
In particular, $\ran C \subset \ran A$ and $\ran C^\top \subset \ran B$, so $\gammasep(C)$ is finite.
Since $\gammasep$ is a convex function, it attains its maximum over the convex set $\Cquantum$ at an extreme point $C = AQB^\top$ with $Q \in \mrm{SO}^{-}(3)$.
For such $C$, Theorem~\ref{thm:gauge_function_of_Csep} gives
\begin{equation}
    \gammasep(C) = \trnorm{A^{+} C\, (B^\top)^{+}} = \trnorm{A^{+}A\, Q\, B^\top (B^\top)^{+}} = \trnorm{P_{\ran A^\top}\, Q\, P_{\ran B^\top}},
\end{equation}
where $P_M$ denotes the orthogonal projection onto a subspace~$M$.

\noindent\textbf{Case (i): $r = 3$.}
Since $\rank A = \rank B = 3$, we have $P_{\ran A^\top} = P_{\ran B^\top} = I_3$, so
\begin{equation}
    \gammasep(C) = \trnorm{Q} = 3,
\end{equation}
because every $Q \in \mrm{SO}^{-}(3)$ has singular values $1, 1, 1$.
Therefore every $Q \in \mrm{SO}^{-}(3)$ achieves the maximum.

\noindent\textbf{Case (ii): $r = 2$.}
Without loss of generality, suppose $\rank A = 2$ and $\rank B = 3$ (the case $\rank A = 3$, $\rank B = 2$ is analogous).
Then $P_{\ran B^\top} = I_3$ and
\begin{equation}
    \gammasep(C) = \trnorm{P_{\ran A^\top}\, Q}.
\end{equation}
Since $(P_{\ran A^\top}\, Q)(P_{\ran A^\top}\, Q)^\top = P_{\ran A^\top}\, Q Q^\top\, P_{\ran A^\top} = P_{\ran A^\top}$, the matrix $P_{\ran A^\top}\, Q$ has singular values $1, 1, 0$.
Therefore $\gammasep(C) = 2$ for every $Q \in \mrm{SO}^{-}(3)$.

When $\rank A = \rank B = 2$, we have
\begin{equation}
    \gammasep(C) = \trnorm{P_{\ran A^\top}\, Q\, P_{\ran B^\top}}.
\end{equation}
The matrix $P_{\ran A^\top}\, Q\, P_{\ran B^\top}$ has rank at most $2$.
By the submultiplicativity of singular values under projection, its singular values $\sigma_1 \geq \sigma_2 \geq 0$ satisfy $\sigma_k \leq 1$, so $\gammasep(C) = \sigma_1 + \sigma_2 \leq 2$.
Equality $\gammasep(C) = 2$ holds if and only if $\sigma_1 = \sigma_2 = 1$, which requires $Q(\ran B^\top) = \ran A^\top$.
\end{proof}

\subsection{Proof of Theorem~\ref{thm:maximizer_gammaquantum}}
\label{app:proof_maximizer_gammaquantum}
\begin{proof}
Since $C \in \Cmax$, the convex hull representation (Theorem~\ref{thm:convex_hull_characterzation_of_Cmax}) gives $C = AQB^\top$ with $Q \in \op{Conv}(\mrm{O}(3))$.
In particular, $\ran C \subset \ran A$ and $\ran C^\top \subset \ran B$, so $\gammaquantum(C)$ is finite.
Since $\gammaquantum$ is a convex function, it attains its maximum over the convex set $\Cmax$ at an extreme point $C = AQB^\top$ with $Q \in \mrm{O}(3)$.

\noindent\textbf{Case (i): $r = 3$.}
Since $\rank A = \rank B = 3$, we have $P_{\ran A^\top} = P_{\ran B^\top} = I_3$, so
\begin{equation}
    \gammaquantum(C) = \norm{A^{+} C\, (B^\top)^{+}}_{+} = \norm{Q}_{+}.
\end{equation}
Every $Q \in \mrm{O}(3)$ has singular values $1, 1, 1$.
By definition, $\|Q\|_{+} = s_1 + s_2 + s_3\, \op{sgn}(\det Q)$.
When $\det Q = +1$, i.e., $Q \in \mrm{SO}(3)$, we obtain $\|Q\|_{+} = 1 + 1 + 1 = 3$.
When $\det Q = -1$, i.e., $Q \in \mrm{SO}^{-}(3)$, we obtain $\|Q\|_{+} = 1 + 1 - 1 = 1$.
Therefore the maximum $\gammaquantum(C) = 3$ is achieved precisely when $Q \in \mrm{SO}(3)$.

\noindent\textbf{Case (ii): $r \leq 2$.}
By Theorem~\ref{thm:gauge_function_of_Cquantum}, $\gammaquantum(C) = \gammamax(C) = \opnorm{P_{\ran A^\top}\, Q\, P_{\ran B^\top}}$ when $r \leq 2$.
Since orthogonal projections are contractive in the operator norm, $\opnorm{P_{\ran A^\top}\, Q\, P_{\ran B^\top}} \leq \opnorm{Q} = 1$ for every $Q \in \mrm{O}(3)$.
Moreover, $\Cmax = \Cquantum$ when $r \leq 2$, so $\Cmax$ is the unit ball of $\gammaquantum$ and hence $\sup_{C \in \Cmax} \gammaquantum(C) = 1$.
The maximizers are precisely the points with $\gammaquantum(C) = 1$, i.e., $\partial \Cmax = \partial \Cquantum$.

For an extreme point $C = AQB^\top$, the condition $\gammaquantum(C) = 1$ reads $\opnorm{P_{\ran A^\top}\, Q\, P_{\ran B^\top}} = 1$, which we now show is equivalent to $\ran A^\top \cap Q(\ran B^\top) \neq \{0\}$.
If $\opnorm{P_{\ran A^\top}\, Q\, P_{\ran B^\top}} = 1$, there exists a unit vector $\bs{v} \in \ran B^\top$ with $\|P_{\ran A^\top}\, Q\, \bs{v}\| = 1$; since $\|Q\,\bs{v}\| = 1$, this forces $Q\,\bs{v} \in \ran A^\top$, hence $Q\,\bs{v} \in \ran A^\top \cap Q(\ran B^\top)$.
Conversely, if $0 \neq \bs{u} \in \ran A^\top \cap Q(\ran B^\top)$, write $\bs{u} = Q\,\bs{v}$ with $\bs{v} \in \ran B^\top$; normalizing gives $P_{\ran B^\top}\,\hat{\bs{v}} = \hat{\bs{v}}$ and $P_{\ran A^\top}\,Q\,\hat{\bs{v}} = Q\,\hat{\bs{v}}$, so $\opnorm{P_{\ran A^\top}\, Q\, P_{\ran B^\top}} = 1$.

Finally, when $\rank A \geq 2$ and $\rank B \geq 2$, we have $\dim(\ran A^\top) + \dim(Q(\ran B^\top)) \geq 2 + 2 = 4 > 3 = \dim \mbb{R}^3$, so $\ran A^\top \cap Q(\ran B^\top) \neq \{0\}$ for every $Q \in \mrm{O}(3)$, and every extreme point is a maximizer.
\end{proof}

\section{Proof of Proposition~\ref{prop:rigidity}}\label{app:rigidity}

\begin{proof}
The block-positivity condition $\langle x \otimes y | \hat{\rho} | x \otimes y \rangle \geq 0$ for all product vectors, expressed in terms of the Bloch-sphere parametrization, reads
\begin{equation}\label{eq:bp_general}
    1 + \bs{r}_A \cdot \bs{u} + \bs{r}_B \cdot \bs{v} + \bs{u}^\top T \bs{v} \geq 0 \qquad (\forall\, \bs{u}, \bs{v} \in \mbb{R}^3,\; \|\bs{u}\| = \|\bs{v}\| = 1).
\end{equation}

\noindent\textbf{Step 1: $\bs{r}_A = T \bs{r}_B$.}
Setting $\bs{v} = -T^\top \bs{u}$ (which satisfies $\|\bs{v}\| = 1$ since $T \in \mrm{O}(3)$) gives $\bs{u}^\top T \bs{v} = -\|\bs{u}\|^2 = -1$, so~\eqref{eq:bp_general} reduces to
\begin{equation}
    (\bs{r}_A - T \bs{r}_B) \cdot \bs{u} \geq 0 \qquad (\forall\, \|\bs{u}\| = 1),
\end{equation}
which forces $\bs{r}_A = T \bs{r}_B$.

\noindent\textbf{Step 2: Reduction to $T = I_3$.}
Substituting $\bs{w} := T \bs{v}$ (so $\|\bs{w}\| = 1$) and using $\bs{r}_B \cdot \bs{v} = (T \bs{r}_B) \cdot \bs{w} = \bs{r}_A \cdot \bs{w}$, the condition~\eqref{eq:bp_general} becomes
\begin{equation}\label{eq:bp_identity_case}
    1 + \bs{r}_A \cdot \bs{u} + \bs{r}_A \cdot \bs{w} + \bs{u} \cdot \bs{w} \geq 0 \qquad (\forall\, \|\bs{u}\| = \|\bs{w}\| = 1).
\end{equation}

\noindent\textbf{Step 3: $\bs{r}_A = \bs{0}$.}
For fixed $\bs{u}$, minimizing over $\bs{w}$ gives $\min_{\|\bs{w}\|=1} (\bs{r}_A + \bs{u}) \cdot \bs{w} = -\|\bs{r}_A + \bs{u}\|$, so a necessary condition is
\begin{equation}
    1 + \bs{r}_A \cdot \bs{u} - \|\bs{r}_A + \bs{u}\| \geq 0 \qquad (\forall\, \|\bs{u}\| = 1).
\end{equation}
If $\bs{r}_A \neq \bs{0}$, write $\bs{r}_A = t\, \bs{e}$ with $t > 0$ and $\|\bs{e}\| = 1$, and set $s = \bs{e} \cdot \bs{u} \in [-1, 1]$.
The inequality becomes $1 + ts \geq \sqrt{1 + t^2 + 2ts}$.
Squaring both sides (valid since the left side must be nonneg.\ by assumption) yields $t^2 s^2 \geq t^2$, i.e., $s^2 \geq 1$ for all $s \in [-1, 1]$, a contradiction.
Therefore $\bs{r}_A = \bs{0}$, and $\bs{r}_B = T^\top \bs{r}_A = \bs{0}$.
\end{proof}

\section{Intrinsic determinant identity}
\label{app:det_identity}

\begin{lemma}\label{lem:det_intrinsic}
Let $A,B \in \mbb{R}^{M \times 3}$ with columns $\bs{a}_1,\ldots,\bs{a}_M$ and $\bs{b}_1,\ldots,\bs{b}_M$ viewed as row vectors, and let $Z = (z_{il}) \in \mbb{R}^{M \times M}$.
Define the cubic tensors $T^A_{ijk} := \det[\bs{a}_i,\bs{a}_j,\bs{a}_k]$ and $T^B_{lmn} := \det[\bs{b}_l,\bs{b}_m,\bs{b}_n]$.
Then
\begin{equation}
    \det(A^\top Z B)
    = \frac{1}{6}\sum_{i,j,k=1}^{M}\sum_{l,m,n=1}^{M}
      T^A_{ijk}\, z_{il}\, z_{jm}\, z_{kn}\, T^B_{lmn}.
    \label{eq:det_intrinsic_appendix}
\end{equation}
\end{lemma}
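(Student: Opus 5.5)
The plan is to expand $\det(A^\top Z B)$ with the Cauchy--Binet formula applied twice, or, more directly, with the multilinearity of the determinant in its rows. Write $N := A^\top Z B \in \mcal{M}_3(\mbb{R})$, so that
\begin{equation}
    N = \sum_{i,l=1}^{M} z_{il}\, \bs{a}_i \bs{b}_l^\top,
\end{equation}
where $\bs{a}_i,\bs{b}_l \in \mbb{R}^3$ are now treated as column vectors, namely the rows of $A$ and $B$. Thus $N$ is a weighted sum of rank-one matrices, and the goal is to express its determinant as a trilinear form in these rank-one pieces.

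The main step uses the polarization of the determinant through the mixed discriminant. For $3\times 3$ matrices, $\det N$ is a cubic form, and its full polarization $D(N_1,N_2,N_3)$ satisfies $D(N,N,N)=\det N$. The key identity is that for rank-one arguments,
\begin{equation}
    D(\bs{x}_1\bs{y}_1^\top,\bs{x}_2\bs{y}_2^\top,\bs{x}_3\bs{y}_3^\top) = \frac{1}{6}\det[\bs{x}_1,\bs{x}_2,\bs{x}_3]\,\det[\bs{y}_1,\bs{y}_2,\bs{y}_3].
\end{equation}
To establish this, expand via the Leibniz formula:
\begin{equation}
    \det N=\sum_{\pi\in S_3}\operatorname{sgn}\pi\prod_{p} N_{p\pi(p)}.
\end{equation}
Substituting $N_{p q}=\sum_{i,l} z_{il}(a_i)_p (b_l)_q$ for each of the three factors gives the sum
\begin{equation}
    \sum_{i,j,k}\sum_{l,m,n} z_{il}z_{jm}z_{kn}\sum_{\pi}\operatorname{sgn}\pi\,(a_i)_1(a_j)_2(a_k)_3\,(b_l)_{\pi(1)}(b_m)_{\pi(2)}(b_n)_{\pi(3)}.
\end{equation}
The inner sum over $\pi$ equals $(a_i)_1(a_j)_2(a_k)_3\,\det[\bs{b}_l,\bs{b}_m,\bs{b}_n]$. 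Write $\det[\bs{b}_l,\bs{b}_m,\bs{b}_n]=T^B_{lmn}$, which is antisymmetric in $(l,m,n)$.

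Next, symmetrize over the $\bs{a}$ side. Apply the simultaneous relabeling $(i,l),(j,m),(k,n)\mapsto$ a permutation $\sigma$ of these three pairs. The product $z_{il}z_{jm}z_{kn}$ is invariant, while $T^B$ picks up the factor $\operatorname{sgn}\sigma$. Averaging over $\sigma\in S_3$ replaces $(a_i)_1(a_j)_2(a_k)_3$ by
\begin{equation}
    \frac16\sum_\sigma\operatorname{sgn}\sigma\,\prod (a_{\sigma(\cdot)})_{\cdot}=\frac16\det[\bs{a}_i,\bs{a}_j,\bs{a}_k]=\frac16 T^A_{ijk},
\end{equation}
which yields the claimed identity~\eqref{eq:det_intrinsic_appendix}.

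The only delicate point is the bookkeeping in this symmetrization: one must verify that relabeling the summation indices jointly in pairs $(i,l)$ leaves the $z$-product invariant and correctly transfers the sign. I expect this to be the main, though minor, obstacle. As a sanity check, take $M=3$: then $T^A_{ijk}=\varepsilon_{ijk}\det A$, and the formula reduces to $\det A\,\det Z\,\det B$, in agreement with multiplicativity of the determinant.
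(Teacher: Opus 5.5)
Your proof is correct and is essentially the paper's argument. The paper substitutes $X = A^\top Z B$ into the symmetric double-$\varepsilon$ expansion $\det X = \frac16\sum \varepsilon_{\alpha\beta\gamma}\varepsilon_{\mu\nu\lambda}X_{\alpha\mu}X_{\beta\nu}X_{\gamma\lambda}$ and reads off $T^A_{ijk}$ and $T^B_{lmn}$ directly, whereas you start from the one-sided Leibniz sum and your joint relabeling of the index pairs (which is a valid bijection of dummy indices that correctly transfers $\operatorname{sgn}\sigma$) just rederives that double-$\varepsilon$ expansion along the way.
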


\begin{proof}
For a $3\times 3$ matrix $X$, the determinant admits the Levi-Civita expansion
\begin{equation}
    \det X = \frac{1}{6}\sum_{\alpha,\beta,\gamma=1}^{3}\sum_{\mu,\nu,\lambda=1}^{3}
    \varepsilon_{\alpha\beta\gamma}\,\varepsilon_{\mu\nu\lambda}\,
    X_{\alpha\mu}\,X_{\beta\nu}\,X_{\gamma\lambda},
    \label{eq:levi_civita_det}
\end{equation}
where $\varepsilon_{\alpha\beta\gamma}$ is the Levi-Civita symbol.
We apply this to $X = A^\top Z B$, whose entries are
\begin{equation}
    (A^\top Z B)_{\alpha\mu}
    = \sum_{i=1}^{M} \sum_{l=1}^{M} A_{i\alpha}\, z_{il}\, B_{l\mu}.
\end{equation}
Substituting into~\eqref{eq:levi_civita_det} and rearranging:
\begin{align}
    \det(A^\top Z B)
    &= \frac{1}{6}\sum_{\alpha\beta\gamma}\sum_{\mu\nu\lambda}
       \varepsilon_{\alpha\beta\gamma}\,\varepsilon_{\mu\nu\lambda}
       \left(\sum_{i,l} A_{i\alpha}\, z_{il}\, B_{l\mu}\right)
       \left(\sum_{j,m} A_{j\beta}\, z_{jm}\, B_{m\nu}\right)
       \left(\sum_{k,n} A_{k\gamma}\, z_{kn}\, B_{n\lambda}\right) \notag\\[4pt]
    &= \frac{1}{6}\sum_{i,j,k}\sum_{l,m,n}
       z_{il}\, z_{jm}\, z_{kn}
       \underbrace{\left(\sum_{\alpha\beta\gamma}
       \varepsilon_{\alpha\beta\gamma}\, A_{i\alpha}\, A_{j\beta}\, A_{k\gamma}\right)}_{=\,T^A_{ijk}}
       \underbrace{\left(\sum_{\mu\nu\lambda}
       \varepsilon_{\mu\nu\lambda}\, B_{l\mu}\, B_{m\nu}\, B_{n\lambda}\right)}_{=\,T^B_{lmn}},
\end{align}
where we used $\sum_{\alpha\beta\gamma} \varepsilon_{\alpha\beta\gamma}\, A_{i\alpha}\, A_{j\beta}\, A_{k\gamma} = \det[\bs{a}_i,\bs{a}_j,\bs{a}_k] = T^A_{ijk}$, and similarly for~$T^B_{lmn}$.
\end{proof}

\bibliography{references}  

\end{document}